\definecolor{blu3}{rgb}{.1,.0,.4}
\newtheorem{theorem}{Theorem}
\newtheorem{corollary}[theorem]{Corollary}
\newtheorem{lemma}[theorem]{Lemma}
\newtheorem{proposition}[theorem]{Proposition}
\newcommand{\RR}{\ensuremath{\mathbb R}}    
\newcommand{\ZZ}{\ensuremath{\mathbb Z}}    
\newcommand{\NN}{\ensuremath{\mathbb N}}    
\newcommand{\UU}{\ensuremath{\mathcal{U}}}  
\newcommand{\WW}{\ensuremath{\mathcal{W}}}  
\renewcommand{\AA}{\ensuremath{\mathcal{A}}}  
\newcommand{\GG}{\ensuremath{\mathbb{G}}}   
\newcommand{\CC}{\ensuremath{\mathbb{S}}}    
\DeclareMathOperator{\polylog}{polylog}
\DeclareMathOperator{\diam}{diam}
\DeclareMathOperator{\rank}{rank}
\def\DEF#1{\textbf{\emph{#1}}}
\begin{document}
\setcounter{page}{0} 

\title{Maximum Matchings in Geometric Intersection Graphs\thanks{A 
preliminary version of this work appeared as
\'E.~Bonnet, S.~Cabello, and
W.~Mulzer. \emph{Maximum Matchings in Geometric Intersection Graphs},
Proc. 37th STACS,2020, pp.~31:1–31:17.}}

\author{{\'E}douard Bonnet\thanks{Univ Lyon, CNRS, ENS de Lyon, 
		Universit\'e Claude Bernard Lyon 1, LIP UMR5668, France. 
		Email address: \texttt{edouard.bonnet@ens-lyon.fr}}
\and
		Sergio Cabello\thanks{Department of Mathematics, IMFM, and
                Department of Mathematics, FMF, University of Ljubljana, 
		Slovenia.
                Supported by the Slovenian Research Agency (P1-0297, 
		J1-9109, J1-8130, J1-8155, J1-1693, J1-2452, N1-0218). 
		Email address: \texttt{sergio.cabello@fmf.uni-lj.si}}
\and
	Wolfgang Mulzer\thanks{Institut f\"ur Informatik,
		Freie Universit\"at Berlin, Germany.
		Supported in part by ERC StG 757609 and GIF grant 1367.
                Email address: \texttt{mulzer@inf.fu-berlin.de}}}

\maketitle

\thispagestyle{empty}

\begin{abstract}
	Let $G$ be an intersection graph of $n$ 
	geometric objects in the plane. We show 
	that a maximum matching in $G$ can be 
	found in $O(\rho^{3\omega/2}n^{\omega/2})$ 
	time with high probability, where $\rho$ 
	is the density of the geometric objects 
	and $\omega>2$ is a constant such that 
	$n \times n$ matrices can be multiplied 
	in $O(n^\omega)$ time.

        The same result 
	holds for any subgraph of $G$, as long as 
	a geometric representation is at hand.
	For this, we combine algebraic methods, 
	namely computing the rank of a matrix 
	via Gaussian elimination, with the fact 
	that geometric intersection graphs have
	small separators.
	
	We also show that in many interesting
	cases, the maximum matching problem
	in a general geometric intersection graph can be 
	reduced to the case of bounded density.
	In particular, a maximum matching
	in the intersection graph of any 
	family of translates of a convex object in the
	plane can be found in $O(n^{\omega/2})$ time with
	high probability, and
	a maximum matching in the intersection graph
	of a family of planar disks with radii in 
	$[1, \Psi]$ can be found in
	$O(\Psi^6\log^{11} n + \Psi^{12 \omega} n^{\omega/2})$
	time with high probability.

    \medskip
    \textbf{Keywords:} computational geometry, geometric intersection graph,
		disk graph, unit-disk graph, matching.
\end{abstract}

\newpage
\section{Introduction}

Let $\UU$ be a family of (connected and compact) 
objects in $\RR^2$. The \DEF{intersection graph} 
$G_\UU$ of $\UU$ is the undirected graph with 
vertex set $\UU$ and edge set 
\[
	E(G_\UU) ~=~ \{ UV \mid U,V \in \UU, \, U \cap V \neq \emptyset \}.
\]
If the objects in $\UU$ are partitioned into 
two sets, one can also define the 
\emph{bipartite} intersection graph, a 
subgraph of $G_\UU$, in the obvious way.
Consider the particular case when $\UU$ is a 
set of disks.  Then, we call $G_\UU$ 
a \DEF{disk graph}, and if all disks
in $\UU$ have the same radius, a 
\DEF{unit-disk graph}.
Unit disk graphs are often used to model ad-hoc 
wireless communication networks and sensor 
networks~\cite{GG11,zg-wsn-04,HS95}. Disks of 
varying sizes and other shapes become relevant 
when different sensors cover different areas.
Moreover, general disk graphs 
serve as a tool to approach other problems, like the
barrier resilience problem~\cite{kumar2007barrier}.

We consider a classic optimization problem,
\emph{maximum matching}, in the setting of geometric 
intersection graphs, and introduce two new techniques, 
each interesting in its own. 
First, we provide an efficient algorithm to
compute a maximum matching in any subgraph
of the intersection graph of geometric objects 
of low density.
Second, we provide a sparsification technique to
reduce the maximum matching problem
in a geometric intersection graph to the
case of low density. The sparsification works for
convex shapes of similar sizes for which certain
range searching operations can be done efficiently.

In this paper, we use $\omega$ to denote 
a constant such that $\omega > 2$
and any two $n \times n$ matrices can be 
multiplied in time $O(n^\omega)$.\footnote{In the literature,
it is more common to assume $\omega \geq 2$. We adopt the
stronger assumption $\omega > 2$
because it simplifies the bounds. 
If $\omega=2$ is allowed, 
then the running times that we state have 
additional logarithmic factors.}

\paragraph*{Maximum matching in intersection graphs of geometric 
objects of low density.}
We first introduce some geometric concepts.
The diameter of a set $X \subset \RR^2$, denoted by $\diam(X)$,
is the supremum of the distances between any two points of $X$. 
The \DEF{density} $\rho(\UU)$ of a family $\UU$ of objects is 
\begin{equation}\label{equ:density}
	\rho(\UU) ~=~ 
	\max_{X \subseteq \RR^2} \big|\{U \in \UU \mid \diam(U) \ge \diam(X),
	\, U \cap X \neq \emptyset\}\big|.
\end{equation}
One can also define the density by considering for $X$ 
only disks. Since an object of diameter $d$ can be 
covered by $O(1)$ disks of diameter $d$, 
this changes the resulting parameter by only a constant.
(See, for example, the book by
de Berg~et~al.~\cite[Section 12.5]{BergCKO08} 
for such a definition.)
The \DEF{depth} (ply) of $\UU$ is 
the largest number of objects that cover a single point: 
\[
\max_{p \in \RR^2} \big|\{ U\in \UU\mid p\in U\}\big|.
\]
For disk graphs and square graphs, the depth and the density
are linearly related; see for example 
Har-Peled and Quanrud~\cite[Lemma 2.7]{Har-PeledQ17}. 
More generally, bounded depth and bounded density are 
equivalent whenever we consider homothets of a 
constant number of shapes. Density and depth are 
usually considered in the context of realistic
input models; see de Berg~et~al.~\cite{BergSVK02} 
for a general discussion.

Let $\GG_\rho$ be the family of \emph{subgraphs} 
of intersection graphs of geometric objects in 
the plane with density at most $\rho$.\footnote{Note that
by definition, any vertex-induced subgraph of a geometric intersection
graph of density at most $\rho$ is also a geometric intersection graph
of density at most $\rho$. Thus,
any graph in $\GG_\rho$ is obtained by omitting edges
from some geometric intersection graph of density at most $\rho$
with the same vertex set.
} 
Our goal is to compute a maximum matching in 
graphs of $\GG_\rho$, assuming the availability of
a geometric representation of the graph and 
a few basic geometric primitives on the geometric
objects. For this, we consider the density $\rho$ 
as an additional parameter. Naturally, the case 
$\rho = O(1)$ of \emph{bounded density} is of 
particular interest.

In a general graph $G = (V, E)$ with $n$ vertices
and $m$ edges, the best running time for computing 
a maximum matching in $G$ depends
on the ratio $m/n$. The classic algorithm of 
Micali and Vazirani~\cite{MicaliV80,Vazirani12} is
based on augmenting paths, and it finds a maximum 
matching in $O(\sqrt{n} m)$ time.
Mucha and Sankowski~\cite{MuchaS04} use algebraic 
tools to achieve running time $O(n^\omega)$.
As we shall see, for $G \in \GG_\rho$, we have
$m = O(\rho n)$, and this bound is asymptotically tight. 
Thus, for $G \in \GG_\rho$, the running times of these two 
algorithms become $O(\rho n^{3/2})$ and $O(n^\omega)$,
respectively.

In general \emph{bipartite} graphs, 
a recent algorithm by M\k{a}dry~\cite{Madry13} 
achieves running time roughly $O(m^{10/7})$.\footnote{In
a previous version of this paper, we claimed
that M\k{a}dry's algorithm also applies
to general (non-bipartite) graphs.
However, this does not seem to be correct.
As a consequence, we have updated the
statements of Corollary~\ref{cor:axis-parallel}
and Theorem~\ref{thm:matching_balls}.}
Efrat, Itai, and Katz~\cite{EfratIK01} 
show how to compute the maximum matching 
in bipartite unit disk graphs in $O(n^{3/2}\log n)$ 
time. Having bounded density does not help 
in this algorithm; it has $O(\sqrt{n})$
rounds, each of which needs $\Omega(n)$ time.
The same approach can be used for other 
geometric shapes if a certain
semi-dynamic data structure is available.
In particular, using the data structure of 
Kaplan et al.~\cite{KaplanMRSS17}
for additively-weighted nearest neighbors, 
finding a maximum matching in a bipartite 
intersection graph of disks
takes $O(n^{3/2}\polylog n)$ time. 
We are not aware of any similar results for 
non-bipartite geometric intersection graphs.

We show that a maximum matching in a graph of 
$\GG_\rho$ with $n$ vertices can be computed 
in $O(\rho^{3\omega/2} n^{\omega/2})=
O(\rho^{3.56} n^{1.19})$ time. The algorithm 
is randomized and succeeds with high probability.
It uses the algebraic approach by Mucha and 
Sankowski~\cite{MuchaS06} for planar graphs
with an extension by Yuster and Zwick~\cite{YusterZ07} 
for $H$-minor-free graphs. As noted by 
Alon and Yuster~\cite{AlonY13}, this approach 
works for \emph{hereditary}\footnote{A graph family is called
\emph{hereditary} if it is closed
under taking subgraphs.} graph families 
with bounded average degree and small 
separators. We note that the algorithm 
can be used for graphs of $\GG_\rho$, because 
we have average degree $O(\rho)$
and balanced separators of size 
$O(\sqrt{\rho n})$~\cite{Har-PeledQ17,SmithWo98}.
However, finding the actual dependency on 
$\rho$ is difficult because it plays a 
role in the average degree, in the size 
of the separators, and because the algorithm 
has a complex structure with several 
subroutines that must be distilled.

There are several noteworthy features in 
our approach. For one, we solve a 
geometric problem using linear algebra,
namely Gaussian elimination.
The use of geometry is limited to finding
separators, bounding the degree, and constructing
the graph explicitly.
Note that the role of subgraphs in the definition
of $\GG_\rho$ is a key feature in our algorithm.
On the one hand, we need a hereditary family of graphs,
as needed to apply the algorithm.
On the other hand, it brings more generality; for 
example, it includes the case of bipartite graphs 
defined by colored geometric objects.

Compared to the work of Efrat, Itai, and 
Katz~\cite{EfratIK01}, our algorithm is for 
arbitrary subgraphs of geometric intersection 
graphs, not only bipartite ones; it works for 
any objects, as it does not use advanced data 
structures that may depend on the shapes.
On the other hand, it needs the assumption 
of low density. Compared to previous 
algorithms for arbitrary graphs and ignoring
polylogarithmic factors, our algorithm
is faster when $\rho = o(n^{(20-7\omega)/(21\omega-20)})$.
Using the current bound $\omega < 2.373$, this means
that our new algorithm is faster for $\rho = O(n^{0.113})$.

Our matching algorithm also applies for intersection graphs 
of objects in $3$-dimensional space. However, in 
this case there is no algorithmic gain with 
the current bounds on $\omega$:
one gets a running time of $O(n^{2\omega/3})$ when $\rho=O(1)$,
which is worse than constructing the graph explicitly and using the
algorithm of Micali and Vazirani.

\paragraph*{Sparsification -- Reducing to bounded depth.}
Consider a family of convex geometric objects $\UU$ in the plane
where each object contains a square 
of side length $1$ and is contained in a square 
of side length $\Psi\ge 1$.
Our objective is to compute a maximum matching 
in the intersection graph $G_\UU$.\footnote{
Note that here we do not consider subgraphs of $G_\UU$;
we need the whole graph $G_\UU$.}
Our goal is to transform this problem to finding
a maximum matching in the intersection graph of a 
subfamily $\UU'\subset \UU$
with bounded depth. Then we can employ our result from above
for $G_{\UU'}$ or, more generally,
any algorithm for maximum matching (taking advantage
of the sparsity of the new instance). 

We describe a method that is 
fairly general and works under
comparatively mild assumptions and also in higher dimensions.
However, for an efficient implementation, we 
require that the objects under consideration
support certain range searching operations efficiently.
We discuss how this can be done for
disks of arbitrary sizes,
translates of a fixed convex shape in the plane,
axis-parallel objects in constant dimension,
and (unit) balls in constant dimension.
In all these cases, we obtain a subquadratic time
algorithm for finding a maximum matching, assuming that
$\Psi$ is small.
We mostly focus on the planar case, mentioning higher
dimensions as appropriate.

As particular results to highlight,
we show that a maximum matching
in the intersection graph of any 
family of translates of a convex object in the
plane can be found in $O(n^{\omega/2})$ time with
high probability, and
a maximum matching in the intersection graph
of a family of planar disks with radii in 
$[1, \Psi]$ can be found in
$O(\Psi^6\log^{11} n + \Psi^{12 \omega} n^{\omega/2})$
time with high probability.
See Table~\ref{tb:summary_results} for a summary of the results in this context.

\renewcommand{\arraystretch}{1.5}
\begin{table}
\centering
	\begin{tabular}{lccl}
		objects & time complexity & reference \\ \hline\hline
		disks of radius in $[1,\Psi]$ in $\RR^2$ & 
			$O(\Psi^6 n \log^{11} n + \Psi^{12\omega} n^{\omega/2})$ & 
			Theorem~\ref{thm:disks} \\ \hline
		translates of $O(1)$ convex objects in $\RR^2$ & $O(n^{\omega/2})$ & 
			Theorem~\ref{thm:translates2}\\ \hline
		axis-parallel rectangles in $\RR^2$ with edges in $[1,\Psi]$ & 
			$(1+\Psi)^{O(1)}n^{\omega/2}$ & Theorem~\ref{thm:axis-parallel}
			\\ \hline
		axis-parallel boxes in $\RR^d$ with edges in $[1,\Psi]$ & 
			$(1+\Psi)^{O(d)}n^{3/2}$ & Corollary~\ref{cor:axis-parallel}
			\\ \hline
		unit balls in $\RR^3$ or $\RR^4$ & $O(n^{3/2})$ & 
			Theorem~\ref{thm:matching_balls}\\ \hline
		unit balls in $\RR^d$, $d\ge 5$ & 
			$O(n^{\frac{2 \lceil d/2\rceil}{1+ \lceil d/2\rceil}+\varepsilon})$ & 
			Theorem~\ref{thm:matching_balls}\\ \hline \\
	\end{tabular}
	\caption{Time complexity to compute the maximum 
		matching in an intersection graph. In some cases, 
		the result is correct with high probability.}
	\label{tb:summary_results}
\end{table}

\paragraph*{Organization.}
We begin with some general definitions and
basic properties of geometric intersection graphs
(Section~\ref{sec:basics}).
Then, in the first part of the paper, 
we present the new 
algorithm for finding a maximum matching in geometric 
intersection graphs of low density (Section~\ref{sec:low_density}).
In the second part, we present our sparsification
method. This is done in two steps. First, we describe
a generic algorithm that works for general families
of shapes that have roughly the same size,
assuming that certain geometric operations can 
be performed quickly. (Section~\ref{sec:sparsification}).
Second, we explain how to implement these operations for
several specific shape families, e.g., translates
of a given convex objects and disks of bounded radius
ratio (Section~\ref{sec:sparsification_eff}).
The two parts are basically independent, where the
second part uses the result from the first part as
a black box, to state the desired running times.

\section{Basics of (geometric intersection) graphs}
\label{sec:basics}

\paragraph*{Geometric objects and Computational Model.}
Several of our algorithms work under fairly weak assumptions 
on the geometric input:
we assume that the objects in $\UU$ have
\emph{constant description complexity}. This means that the 
boundary of each object is a continuous closed curve whose 
graph is a semialgebraic set, defined by a constant number 
of polynomial equalities and inequalities of 
constant maximum degree. 
For later algorithms we restrict attention to some
particular geometric objects, like disks or squares.

To operate on $\UU$, we  
require that our computational model supports 
primitive operations that 
involve a constant number of objects of $\UU$ 
in constant time, e.g., finding 
the intersection points of two boundary curves; 
finding the intersection points between a boundary curve
and a disk or a vertical line;
testing whether a point 
lies inside, outside, or on the boundary of an object;
decomposing a boundary curve into $x$-monotone pieces, etc.
See, e.g.,~\cite{KaplanMRSS17} for a further discussion and 
justification of these assumptions.

We emphasize that in addition to the primitives
on the input objects, we do not require any special
constant-time operations. In particular, even though 
our algorithms use algebraic techniques such as fast 
matrix multiplication or Gaussian elimination, we rely only 
on algebraic operations over $\ZZ_p$, where $p=\Theta(n^4)$
is a prime.
Thus, we work only with numbers of $O(\log n)$-bits,
and assuming a standard unit-cost model for such word-sizes,
as in, e.g., the word-RAM model of computation,
we simply need to bound the number of arithmetic operations
in our algorithms.

\paragraph*{Geometric intersection graphs.}
The following well-known lemma bounds 
$|G_\UU|$ in terms of $\rho$, and the time
to construct $G_\UU$. 
We include a proof for completeness.

\begin{lemma}\label{lem:edges}
    If $\UU$ has $n$ objects and density $\rho$, then $G_\UU$ has 
    at most $(\rho - 1)n$ edges (this holds in any dimension). 
    If $\UU$ consists of objects 
    in the plane, then $G_\UU$ can be constructed in 
    $O(\rho n \log n)$ time.
\end{lemma}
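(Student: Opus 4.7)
My plan for the edge bound is a simple orientation argument. I would orient each edge $UV$ of $G_\UU$ from the endpoint of smaller diameter to the one of larger diameter, breaking ties by an arbitrary total order on $\UU$ (say, the index order). Under this orientation, the out-degree of a fixed $U \in \UU$ is at most the number of objects $V \neq U$ with $V \cap U \neq \emptyset$ and $\diam(V) \ge \diam(U)$, so
\[
\deg^+(U) \;\le\; \bigl|\{V \in \UU \mid \diam(V) \ge \diam(U),\, V \cap U \neq \emptyset\}\bigr| - 1.
\]
Taking $X = U$ in the density definition~\eqref{equ:density} bounds the right-hand side by $\rho - 1$. Summing over $U \in \UU$ gives $|E(G_\UU)| \le (\rho-1)n$. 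The argument uses only the definition of density, so it is dimension-independent.

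For the construction I would process the objects in decreasing order of diameter. Sorting takes $O(n \log n)$ time. I then insert the objects one at a time into a dynamic geometric data structure. Upon inserting $U$, I query the structure for every already-inserted object that intersects $U$ and output the corresponding edge. Because every already-inserted object has diameter at least $\diam(U)$, the orientation argument just given shows that each query produces at most $\rho - 1$ answers, for a cumulative reporting cost of $O(\rho n)$ across all insertions.

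For the data structure itself, the key geometric observation is that whenever $U \cap V \neq \emptyset$ with $\diam(U) \le \diam(V)$, a reference point of $U$ lies within distance $\diam(U) + \diam(V) \le 2\diam(V)$ of a reference point of $V$. Thus I would bucket the already-inserted objects by power-of-two diameter class, store each class in a grid (or a compressed quadtree) whose cell size matches that class, and answer a query for $U$ by scanning a constant number of cells in each class at or above the scale of $U$. Combined with the $O(\rho n)$ reporting bound, this gives total time $O(\rho n + n \log n) = O(\rho n \log n)$. The main obstacle is pinning down the precise primitive and arguing that the scanning cost is dominated by the reporting cost; once that choice is fixed, the analysis is routine. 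A perfectly adequate alternative would be to cite any standard planar sweep-line construction of intersection graphs and combine its output-sensitive complexity with the first part's bound $|E(G_\UU)| = O(\rho n)$.
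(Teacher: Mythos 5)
Your edge bound is correct and is exactly the paper's argument: orient each edge toward the larger diameter, apply the density definition with $X=U$, and sum the out-degrees. The tie-breaking remark is a harmless refinement.

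The construction half, however, has a genuine gap, and it sits precisely at the point you flag as "the main obstacle." In your grid scheme, answering a query for $U$ requires visiting a constant number of cells in \emph{every} non-empty diameter class at or above the scale of $U$. The number of such classes is governed by the spread of the diameters (or, trivially, by $n$), not by $\log n$, and the candidates found in those cells need not intersect $U$: one can place $\Theta(\rho)$ objects of class $i$ at distance roughly $3\cdot 2^i$ from $U$'s reference point for each of $L$ classes, keeping the overall density at $O(\rho)$ while forcing the query for $U$ to examine $\Omega(\rho L)$ candidates and report nothing. So the scanning cost is \emph{not} dominated by the reporting cost, and the total can reach $\Omega(\rho n L)$ with $L$ far exceeding $\log n$. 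Some additional idea (e.g., a compressed hierarchy with a charging argument that bounds the number of scale/cell pairs actually visited) is needed; you mention a compressed quadtree parenthetically but do not supply that argument.

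Your fallback --- "cite any standard planar sweep-line construction" --- also quietly skips the issue the paper spends most of its proof on. A Bentley--Ottmann-type sweep detects edges via boundary--boundary intersections, but an edge can arise from one object being \emph{contained} in another, in which case the two boundaries never cross and the sweep reports nothing. The paper handles these containment edges separately, by a second sweep that builds the trapezoidal decomposition of the arrangement and then runs a DFS on its dual graph, maintaining the set of objects containing the current face. Without this (or an equivalent mechanism), the constructed graph can be missing edges, so the output-sensitive bound you want to invoke does not by itself yield a correct algorithm.
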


\begin{proof}
    The bound on $|E(G_\UU)|$ uses a simple and well-known trick; 
    see, e.g.,~\cite[Lemma 2.6]{Har-PeledQ17}:
    we orient each edge in $G_\UU$ 
    from the object of smaller 
    diameter to the object of larger diameter.
    Then, the out-degree of each $X \in \UU$ is at
    most $\rho - 1$, since by (\ref{equ:density}), 
    there are at most $\rho$ objects $U \in \UU$
    with $\diam(U) \geq \diam(X)$ that intersect $X$, 
    with $X$ being one of them. 

    Next, we describe the construction of $G_\UU$ in the planar case; 
    see~\cite{KaplanKlMuRoSeSh19} for a similar algorithm in the context 
    of disk graphs. 
    Set $k = |E(G_\UU)|$. If $U, V \in \UU$ form
    an edge in $E(G_\UU)$,
    then either (i) their boundaries intersect; or (ii) one is 
    contained inside the other.
    To find the edges of type (i), we perform a plane 
    sweep~\cite{BentleyO79,BergCKO08}.\footnote{The original
    algorithm is described for line segments, but assuming 
    appropriate geometric primitives, it also applies to 
    continuous, $x$-monotone curves in the plane.}
    For this, we split the 
    boundary of each object into a constant number of 
    $x$-monotone pieces. We sweep a vertical line $\ell$
    across the plane, and we maintain the intersection of $\ell$
    with the pieces of the boundary curves.
    The events are the start and end points of the 
    pieces of the boundary curves, as well as their 
    pairwise intersections. There are $O(n + k)$ events.
    When we detect a boundary-boundary intersection, we add 
    the corresponding edge to the output.
    An edge can be added $O(1)$ times, so we sort the output to 
    remove duplicates. Thus, it takes $O((n + k)\log n)$ time 
    to find all edges of type (i).

    To find the edges of type (ii), we perform a second plane sweep 
    to compute the \emph{trapezoidal decomposition} of the 
    planar arrangement defined by the objects in $\UU$~\cite{BergCKO08}.
    The trapezoidal decomposition is obtained by shooting upward and 
    downward vertical rays from each $x$-extremal point on 
    a boundary curve and from each intersection between 
    two boundary curves. The rays end once they encounter a 
    boundary curve, or they go into infinity.
    This results in a subdivision of the plane into  
    $O(n + k)$ (possibly unbounded) \emph{pseudo-trapezoids}. 
    The subdivision can be computed in $O((n + k)\log n)$ 
    time. We construct the \emph{dual graph} of the trapezoidal 
    decomposition, in which the vertices are the pseudo-trapezoids, 
    and two pseudo-trapezoids are adjacent if and only if 
    their boundaries intersect in more than one point.
    We perform a DFS in the resulting dual graph, keeping track 
    of the objects in $\UU$ that contain the current pseudo-trapezoid.
    Whenever we enter an object $U \in \UU$ for the first time, 
    we generate all edges between the objects that contain the 
    current pseudo-trapezoid and $U$. This takes $O(n + k)$ time. 
    We generate all edges of type (ii), and we possibly rediscover
    some edges of type (i). Thus, we sort the output once more to 
    remove duplicates.
    The total running time is $O((n + k)\log n) = O(\rho n \log n)$. 

    We remark that using more sophisticated methods, such as 
    randomized incremental construction~\cite{Mulmuley94}, 
    it may be possible to 
    improve the running time to $O(\rho n + n\log n)$.
    However, this will not help us, because later parts 
    of the algorithm will dominate the running time.
\end{proof}

\paragraph*{Separators in geometric intersection graphs.}
The classic \emph{planar separator theorem} 
by Lipton and Tarjan~\cite{LiptonTa80,DjidjevVe97} shows that any 
planar graph can be decomposed in a balanced way by removing a
small number of vertices. Even though geometric intersection 
graphs can be far from planar, similar results are also 
available for them. These results are usually parameterized by the 
\emph{depth} of the arrangement or by the \emph{area} of the separator and 
the components~\cite{AlberFi04,EppsteinMiTe95,MillerTeThVv97}.
The following recent result provides a small separator for 
general intersection graphs of bounded density.

\begin{theorem}[Lemma 2.21 in ~\cite{Har-PeledQ17}]
\label{thm:separator}
	Let $\UU$ be a set of $n$ objects in $\RR^2$ with density $\rho$.
	In $O(n)$ expected time, we can find a circle $\CC$ such that
	$\CC$ intersects at most $c\sqrt{\rho n}$ objects of $\UU$,
	the exterior of $\CC$ contains at most $\alpha n$ elements of $\UU$,
	and the interior of $\CC$ contains at most $\alpha n$ elements of $\UU$.
	Here $0<c$ and $0<\alpha<1$ are universal constants, independent 
	of $\rho$ and $n$.
\end{theorem}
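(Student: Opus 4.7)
The plan is to follow the classical centerpoint-plus-random-radius strategy behind sphere separator theorems for fat geometric intersection graphs (Miller--Teng--Thurston--Vavasis, Smith--Wormald), replacing the usual bounded-ply assumption by bounded density in the boundary estimate.

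First, for each $U\in\UU$ pick an arbitrary representative point $p_U\in U$, and apply a planar centerpoint algorithm to $\{p_U\}$: in $O(n)$ expected time this yields a point $q\in\RR^2$ such that every closed halfplane through $q$ contains at least $n/3$ of the $p_U$'s. A linear-time selection on the distances $\|p_U-q\|$ then produces radii $r_0<r_1$ such that $B(q,r_0)$ contains about $n/3$ representative points, and so does the complement of $B(q,r_1)$. A circle $\CC$ of radius $r\in[r_0,r_1]$ centered at $q$ will then automatically deliver the interior/exterior balance with a universal $\alpha<1$: any $U$ with $p_U\in B(q,r_0)$ that is not crossed by $\CC$ is fully contained in $B(q,r)$, and symmetrically for the complement, so as long as the number of crossings is $o(n)$ the interior and exterior each miss at most about $2n/3$ objects.

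To control the crossings, I would draw $r$ uniformly at random from $[r_0,r_1]$. An object $U$ is crossed by $\CC$ precisely when $r$ lies in $I_U = [d_{\min}(q,U),d_{\max}(q,U)]$, an interval of length at most $\diam(U)$, so the expected number of crossings is $(r_1-r_0)^{-1}\sum_U |I_U\cap[r_0,r_1]|$. The main obstacle is to show that this expectation is $O(\sqrt{\rho n})$, and here the density assumption is essential. My approach is to group $\UU$ into dyadic diameter classes $\UU_k=\{U:\diam(U)\in[2^k,2^{k+1})\}$: for each class, cover the annulus $B(q,r_1+2^{k+1})\setminus B(q,r_0-2^{k+1})$ (which contains all class-$k$ objects whose interval meets $[r_0,r_1]$) by disks of radius $2^{k-1}$, and note that the density bound applied to each such disk caps the number of class-$k$ members meeting it by $\rho$, since every class-$k$ object has diameter at least the disk's diameter. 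This yields a per-scale bound on the expected class-$k$ crossings, and balancing these across $k$ (for instance via Cauchy--Schwarz against the trivial bound $\sum_k|\UU_k|\le n$) produces the target $O(\sqrt{\rho n})$.

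Finally, Markov's inequality turns this expectation into the statement that a single random $r$ yields at most $c\sqrt{\rho n}$ crossings with constant probability, for some universal $c$. A Las Vegas loop that redraws $r$ (and, if needed, restarts from the centerpoint stage) until both the crossing bound and the balance conditions are met therefore succeeds after an expected constant number of iterations. Since each iteration runs in $O(n)$ time (for the centerpoint computation, the two radius selections, and a direct count of crossings), the total expected running time is $O(n)$, as claimed.
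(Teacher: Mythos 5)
Your construction of the separating circle is genuinely different from the paper's, and unfortunately it breaks. The paper (following Har-Peled and Quanrud) takes $\CC'$ to be the \emph{smallest} circle containing $n/20$ of the representative points and scales it by a uniform factor in $[1,2]$; you instead center the circle at a centerpoint $q$ and pick the radius uniformly in $[r_0,r_1]$, where $r_0,r_1$ are rank-selected distances. The crossing bound then fails because nothing forces $r_1-r_0$ to be comparable to $r_1$. Concretely, take $n$ pairwise-far unit disks whose centers are evenly spread along a circle of radius $R=n$ centered at the origin; the density is $O(1)$, the origin is a valid centerpoint, and all representative points are at distance exactly $R$ from $q$, so $r_0=r_1=R$ and \emph{every} admissible circle is the concentric circle of radius $R$, which crosses all $n$ objects instead of $O(\sqrt{\rho n})$. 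Your Las Vegas loop never terminates on this input. The expectation formula $(r_1-r_0)^{-1}\sum_U |I_U\cap[r_0,r_1]|$ is correct but useless when $r_1-r_0$ is tiny: each per-object probability $\min(1,\diam(U)/(r_1-r_0))$ can equal $1$ for all $U$.

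There is a second, independent problem in the density step: you cover the two-dimensionally thickened annulus $B(q,r_1+2^{k+1})\setminus B(q,r_0-2^{k+1})$ by disks of radius $2^{k-1}$, but the number of such disks is proportional to the annulus \emph{area} divided by $4^k$, hence grows with the uncontrolled outer radius $r_1$, and in any case it only bounds how many class-$k$ objects meet the annulus, not how many are crossed by the actual circle. The correct boundary estimate covers the one-dimensional circle itself: a circle of radius $r$ is covered by $O(r/\delta)$ disks of diameter $\delta$, each meeting at most $\rho$ objects of diameter at least $\delta$, so at most $O(\rho r/\delta)$ ``large'' objects are crossed, while objects of diameter below $\delta$ are each crossed with probability $O(\delta/r)$ because the radius is drawn from an interval of length $\Theta(r)$; choosing $\delta=r\sqrt{\rho/n}$ gives $O(\sqrt{\rho n})$. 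Both halves of this argument need the radius interval to have length $\Theta(r)$, and the balance on the interior side needs that doubling the disk only multiplies the point count by $O(1)$ --- these are exactly the two properties that the smallest-enclosing-circle construction supplies and the centerpoint construction does not. To repair your proof you would have to replace the centerpoint-plus-rank-selection stage by (an approximation of) the smallest circle enclosing a constant fraction of the points, at which point you recover the paper's argument.
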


The proof of Theorem~\ref{thm:separator} goes roughly as follows: 
Pick a point in each object of $\UU$, 
compute the smallest circle $\CC'$ (or an approximation thereof) that 
contains, say, $n/20$ points, and then take a 
concentric scaled copy $\CC$ of $\CC'$, 
with scale factor uniformly at random in $[1,2]$.
With constant probability, the circle $\CC'$ 
has the desired property.  This can be checked easily 
in linear time by determining
which objects of $\UU$ are inside, outside, or intersected by $\CC$. 
In expectation, a constant number of repetitions is needed to obtain
the desired circle.

A family $\GG$ of graphs is \emph{hereditary} if for every 
$G \in \GG$, it holds that all subgraphs $H$ of $G$ are also 
in $\GG$. By definition, our family $\GG_\rho$ of 
subgraphs of geometric intersection graphs with 
density $\rho$ is hereditary.
A graph $G$ is \DEF{$\delta$-sparse} if every subgraph $H$ of $G$ 
has at most $\delta|V(H)|$ edges.
Lemma~\ref{lem:edges} implies that all graphs in $\GG_\rho$ 
are $\rho$-sparse.  

Consider a graph $G$ and a vertex $v$ of $G$.
A \DEF{vertex split} at $v$ consists of adding a pendant $2$-path 
$v v'v''$, where $v'$ and $v''$ are new vertices, and possibly 
replacing some edges $uv$ incident to $v$ by new edges $uv''$; see 
Figure~\ref{fig:split3} for a sequence of splits. 
We note that a vertex split may not replace any edges.
In this case, we are just adding a pendant path of length $2$.

Let $G'$ be a graph obtained from $G$ by a sequence of $k$ vertex 
splits. Then, the size of a maximum matching in 
$G'$ is the size of a maximum matching in $G$ plus $k$. 
Furthermore, from a maximum matching in $G'$, we can
easily obtain a maximum matching in $G$ in $O(|V(G)| + |E(G)| + k)$ 
time.
We will use vertex splits to ensure that the 
resulting graphs have bounded degree
and a vertex set of a certain cardinality. 
Note that if we perform a vertex split 
at $v$ in a graph of $\GG_\rho$, in general we obtain a graph of 
$\GG_{\rho+2}$ because we can represent it by making two new 
copies of the object corresponding
to $v$. Nevertheless, this increase in the density will not be
problematic in our algorithm.

\section{Maximum matching in low-density geometric intersection graphs}
\label{sec:low_density}

\subsection{Separators and separator trees}
A graph $G$ has a $(k, \alpha)$-separation if 
$V(G)$ can be partitioned into three pairwise 
disjoint sets $X,Y,Z$ such that 
$|X\cup Z|\le \alpha |V(G)|$,
$|Y\cup Z|\le \alpha |V(G)|$,
$|Z|\le k$,
and such that there is no edge with one 
endpoint in $X$ and one endpoint in $Y$.
We say that $Z$ \emph{separates} $X$ and $Y$.
At the cost of making the constant $\alpha$ larger, 
we can restrict our attention to graphs of a 
certain minimum size.

Theorem~\ref{thm:separator} gives a 
$(c\sqrt{\rho n},\alpha')$-separation for 
every graph of $\GG_\rho$, for some constant 
$\alpha' < 1$. (A separator in $G_\UU$ is a separator
in every subgraph of $G_\UU$ that is obtained
by omitting edges from $G_\UU$.)
Furthermore, such a separation can be computed 
in expected linear time,
if the objects defining the graph are available.

A recursive application of separations can be 
represented as a binary rooted tree.
We will use so-called (weak) \emph{separator trees}, 
where the separator does not go into the 
subproblems. In such a tree, we store the 
separator at the root and recurse on each side 
to obtain the subtrees. We want to have small 
separators and balanced partitions at 
each level of the recursion, and we finish 
the recursion when we get to 
problems of a certain size.
This leads to the following definition.
Let $\gamma > 0$, $0 < \beta < 1$, and 
$0 < \alpha<1$ be constants.
We say that a graph $G$ has a \DEF{$(\gamma,\beta,\alpha)$-separator tree}
if there is a rooted binary full tree $T$ with the following properties:
\begin{itemize}
	\item(i) Each node $t \in T$ is associated with some 
	                set $Z_t \subseteq V(G)$.
	\item(ii) The sets $Z_t$, $t\in T$, partition $V(G)$, i.e.,
			$\bigcup_{t\in T} Z_t =V(G)$, and 
			$Z_t\cap Z_{t'}=\emptyset$, for distinct $t,t'\in T$.
	\item(iii) For each node $t \in T$, let $V_t = \bigcup_s Z_s$, where 
			$s$ ranges over the descendants of $t$ (including $t$). 
			Note that if $t$ is an internal node with 
			children $u$ and $v$,
			then $V_t$ is the disjoint union of $Z_t$, $V_u$, 
			and $V_v$. If $t$ is a leaf, then $V_t = Z_t$.
	\item(iv) For each internal node $t \in T$ with children $u$ and $v$,
			$(V_u, V_v, Z_t)$ is a 
			$(\gamma m^\beta,\alpha)$-separation for $G[V_t]$, 
			the subgraph of $G$ induced by $V_t$,
			where $m = |V_t| = |Z_t| + |V_u| + |V_v|$.
	\item(v) For each leaf $t \in T$, 
			we have $|V_t| = \Theta(\gamma^{1/(1-\beta)})$. 
			We have chosen the size so that $V_t$ is a 
			$(\gamma |V_t|^\beta,\alpha)$-separator for 
			the whole induced subgraph $G[V_t]$.
\end{itemize}

Yuster and Zwick~\cite{YusterZ07} provide an algorithm that computes 
a separator tree of some split graph for a given graph from 
an $H$-minor-free family.
As Alon and Yuster~\cite[Lemma 2.13]{AlonY13} point out, 
this algorithm actually works for any $\delta$-sparse hereditary 
graph family, as long as $\delta$ is constant.
Thus, the result applies to $\GG_\rho$.
We revise the construction to make the dependency on $\rho$ explicit. 

\begin{lemma}
\label{le:vertexsplit}
	Given a graph $G$ of $\GG_\rho$ with $n$ vertices,
	we can compute in $O(\rho n\log n)$ expected time a vertex-split 
	graph $G'$ of $G$ and a separator tree $T'$ for $G'$ with 
	the following properties:
	\begin{itemize}
	\item(i) the graph $G'$ has $\Theta(\rho n)$ vertices and edges;
	\item(ii) the maximum degree of $G'$ is at most $4$;
	\item(iii) $T'$ is a $(\gamma=O(\rho),\beta=1/2, \alpha)$-separator 
	        tree for $G'$, where $\alpha<1$ is a constant 
		(independent of $\rho$ and $n$). 
	\end{itemize}
\end{lemma}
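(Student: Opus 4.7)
The plan has two phases: first, build a separator tree on $G$ itself by recursively applying Theorem~\ref{thm:separator}; then, perform carefully routed vertex splits top-down along this tree to obtain a bounded-degree graph $G'$ whose separator tree $T'$ inherits the shape of the original with only a $\rho$-controlled blow-up in separator sizes.

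I would first construct the separator tree for $G$. Starting from the geometric representation, Theorem~\ref{thm:separator} produces in expected $O(n)$ time a circle $\CC$ inducing a vertex partition $(V_u, V_v, Z_t)$ with $|Z_t| \le c\sqrt{\rho n}$ and $|V_u|, |V_v| \le \alpha n$. I recurse on the two sides; the corresponding subfamilies remain in $\GG_\rho$. The recursion stops as soon as the current subproblem has size at most $O(\rho^2)$, which is the threshold below which $V_t$ itself already satisfies $|V_t| \le \gamma\,|V_t|^{1/2}$ for the target $\gamma = O(\rho)$ and hence qualifies as its own separator, as in item~(v). The recursion has depth $O(\log n)$ with $O(m_t)$ work at a node of size $m_t$, for $O(n\log n)$ total expected time; together with the $O(\rho n \log n)$ cost of constructing $G$ via Lemma~\ref{lem:edges}, this phase fits in $O(\rho n \log n)$ expected time.

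Next I would build $G'$ by a top-down splitting scheme in the spirit of Yuster--Zwick, as generalised by Alon--Yuster. Processing $T$ from root to leaves, at each node $t$ with separator $Z_t$ and children $u, v$, every vertex $w \in V_t$ is split into consecutive copies along a pendant path, with dedicated copies carrying respectively the $Z_t$-internal edges of $w$, the edges of $w$ into $V_u$, and the edges of $w$ into $V_v$. Any copy whose residual degree still exceeds two is further subdivided so that each final copy carries at most two original edges, yielding maximum degree $4$ (two path-edges plus at most two original edges). The total number of copies of $w$ is $O(1 + \deg_G(w))$, so $|V(G')| = O(n + |E(G)|) = O(\rho n)$ by Lemma~\ref{lem:edges} and $|E(G')| = O(\rho n)$ since $G'$ has bounded degree; padding with further pendant paths if necessary we reach the claimed $\Theta(\rho n)$ counts.

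Finally I would take $T'$ with the same shape as $T$, letting $Z'_t$ collect all copies of $Z_t$ produced by the splits. Each $w \in Z_t$ contributes $O(1 + \deg_G(w))$ copies, and $\sum_{w \in V_t} \deg_G(w) = O(\rho m_t)$ where $m_t = |V_t|$, so $|Z'_t| = O(\rho\,|Z_t|) = O(\rho^{3/2} \sqrt{m_t})$ while $|V'_t| = \Theta(\rho m_t)$. Consequently $|Z'_t| = O(\rho)\cdot |V'_t|^{1/2}$, establishing the $(\gamma = O(\rho),\ \beta = 1/2,\ \alpha)$-separator tree property, with the balance constant $\alpha < 1$ inherited from Theorem~\ref{thm:separator}. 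The main technical obstacle is coordinating the three requirements---maximum degree $4$, $\Theta(\rho n)$ total copies, and separators of size $O(\sqrt{\rho\,|V'_t|})$---simultaneously; this is precisely what the layered pendant-path construction achieves, once the $\rho$-dependence is carefully tracked through the recursion.
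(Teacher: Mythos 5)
There is a genuine gap in your bound on the new separator sizes. As you describe the splitting, \emph{all} $O(1+\deg_G(w))$ copies of each $w\in Z_t$ are collected into $Z'_t$, so $|Z'_t|=\Theta\bigl(|Z_t|+\sum_{w\in Z_t}\deg_G(w)\bigr)$. The inequality you invoke, $\sum_{w\in V_t}\deg_G(w)=O(\rho m_t)$, controls the degree sum over all of $V_t$, not over $Z_t$, and it does not give $\sum_{w\in Z_t}\deg_G(w)=O(\rho|Z_t|)$: in a $\rho$-sparse graph essentially all $\Theta(\rho m_t)$ edges of $G[V_t]$ can be incident to the $O(\sqrt{\rho m_t})$ separator vertices, in which case $|Z'_t|=\Theta(\rho m_t)$, which exceeds the target $O(\rho|Z_t|)=O(\rho^{3/2}\sqrt{m_t})$ by a factor of about $\sqrt{m_t/\rho}$. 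This destroys the $\beta=1/2$ exponent. The repair is the one Yuster and Zwick use (and which the paper follows): keep in $Z'_t$ only $O(1)$ connector copies per vertex of $Z_t$ plus the copies carrying edges \emph{internal} to $Z_t$, and route the copies carrying edges into $V_u$ (resp.\ $V_v$) into the child's side, where they are degree-reduced later (at a deeper separator or at a leaf). Then $|Z'_t|=O(|Z_t|+|E(G[Z_t])|)=O(\rho|Z_t|)$, where the edge count follows from Lemma~\ref{lem:edges} applied to $G[Z_t]\in\GG_\rho$ --- this application of $\rho$-sparsity to the induced subgraph on the separator, rather than a degree-sum bound, is the missing ingredient.

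A secondary issue is the balance constant. You assert $|V'_t|=\Theta(\rho m_t)$, but your construction only guarantees $|V'_t|=\Theta\bigl(m_t+|E(G[V_t])|\bigr)$, which can be anywhere between $\Theta(m_t)$ and $\Theta(\rho m_t)$ depending on where the edges of $G$ lie. If one child's subgraph is dense and the other's is sparse, the ratio $|V'_v|/|V'_t|$ can tend to $1$ as $\rho$ grows, so the balance parameter of $T'$ would depend on $\rho$, contradicting item~(iii). The paper avoids this by padding: every $Z'_t$ is extended by a pendant path of even length to its maximum possible size $\Theta(\rho|Z_t|)$, and every leaf to $\Theta(\rho^2)$ vertices, which forces $|V'_t|=\Theta(\rho m_t)$ uniformly and lets $\alpha$ be inherited (up to a constant) from Theorem~\ref{thm:separator}. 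Your ``pad to reach $\Theta(\rho n)$'' step addresses only the global vertex count; it needs to be applied at every node of the tree. With these two corrections your two-phase plan (tree first, then top-down splitting) can be made to work, but as written both the exponent $\beta=1/2$ and the constant $\alpha$ are unsubstantiated.
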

\begin{proof}
    We adapt the construction of Yuster and Zwick~\cite[Lemma 2.1]{YusterZ07}, 
    with three main changes: First, Yuster and Zwick assume 
    $(\cdot, \alpha = 2/3)$-separations, but this specific 
    value of $\alpha$ is not needed.  Second, we make 
    ``dummy additions'' of 
    vertices to obtain a more balanced separation, 
    which we later use as a black box. 
    (Otherwise, we would get a $(\cdot, \cdot, O(1/\rho))$-separator 
    tree, and we would have to analyze the tree more carefully to obtain 
    the same final result.)
    Third, we work out the constants in the analysis to understand 
    the dependency on the density $\rho$.
	
    We proceed recursively.
    Consider an $m$-vertex graph $G \in \GG_\rho$ that
    appears during the recursion. 
    If $m \le C\rho$, 
    where $C$ is a sufficiently large constant, we 
    make a sequence of vertex splits to reduce the maximum degree 
    to three. This may increase the number of vertices.
    To ensure that this number is uniform,
    we add a (possibly empty) pendant path of even length
    until we get the maximum possible number of 
    $\Theta(\rho^2)$ vertices.
    The resulting separator tree $T'$ consists of a single node.
	
    Now suppose that $m > C \rho$.
    Using Theorem~\ref{thm:separator}, we get a 
    $(c \sqrt{\rho m},\alpha)$-separation $(X,Y,Z)$ of $G$. 
    Thus, $|Z| \le c \sqrt{\rho m}$. Yuster and 
    Zwick~\cite[Lemma 2.1]{YusterZ07} explain
    how to make vertex splits at the vertices of $Z$ and how 
    to redefine the separation so that the vertices of the separator 
    have maximum degree three. 
    See Figure~\ref{fig:split3} for how to split a vertex $v\in Z$.
    After making the vertex splits of Figure~\ref{fig:split3}
    in all vertices of $Z$, we get a split graph $G^*$ of $G$ and 
    a separation $(X^*,Y^*,Z^*)$ with
	\begin{itemize}
		\item $|X|\le |X^*|\le |X|+ |Z| \le \alpha m$, 
		\item $|Y|\le |Y^*|\le |Y|+ |Z| \le \alpha m$, 
		\item $|Z^*|\le 4\cdot |Z|+ 
		6\cdot |E(G[Z])| \le (4 + 6\rho) |Z| = O(\rho^{3/2} m^{1/2})$ 
		(by Lemma~\ref{lem:edges}),
		\item vertices of $Z^*$ have degree at most $3$ in $G^*$,
		\item $G^*[X^*]$ and $G^*[Y^*]$ are isomorphic to subgraphs 
		of $G$, i.e., $G^*[X^*]$ is isomorphic to 
		$G[X\cup Z]$ minus the edges of $G[Z]$,
		and $G^*[Y^*]$ is isomorphic to 
		$G[Y \cup Z]$ minus the edges of $G[Z]$.
	\end{itemize}
	We recurse on $G[X^*]$ and $G[Y^*]$, each 
	of which lies in $\GG_\rho$, 
    as it is (isomorphic to) a subgraph of $G$.
	In particular, the density of the graphs encountered during the 
    recursion does not increase.
	\begin{figure}
		\centering
		\includegraphics[width=\textwidth,page=3]{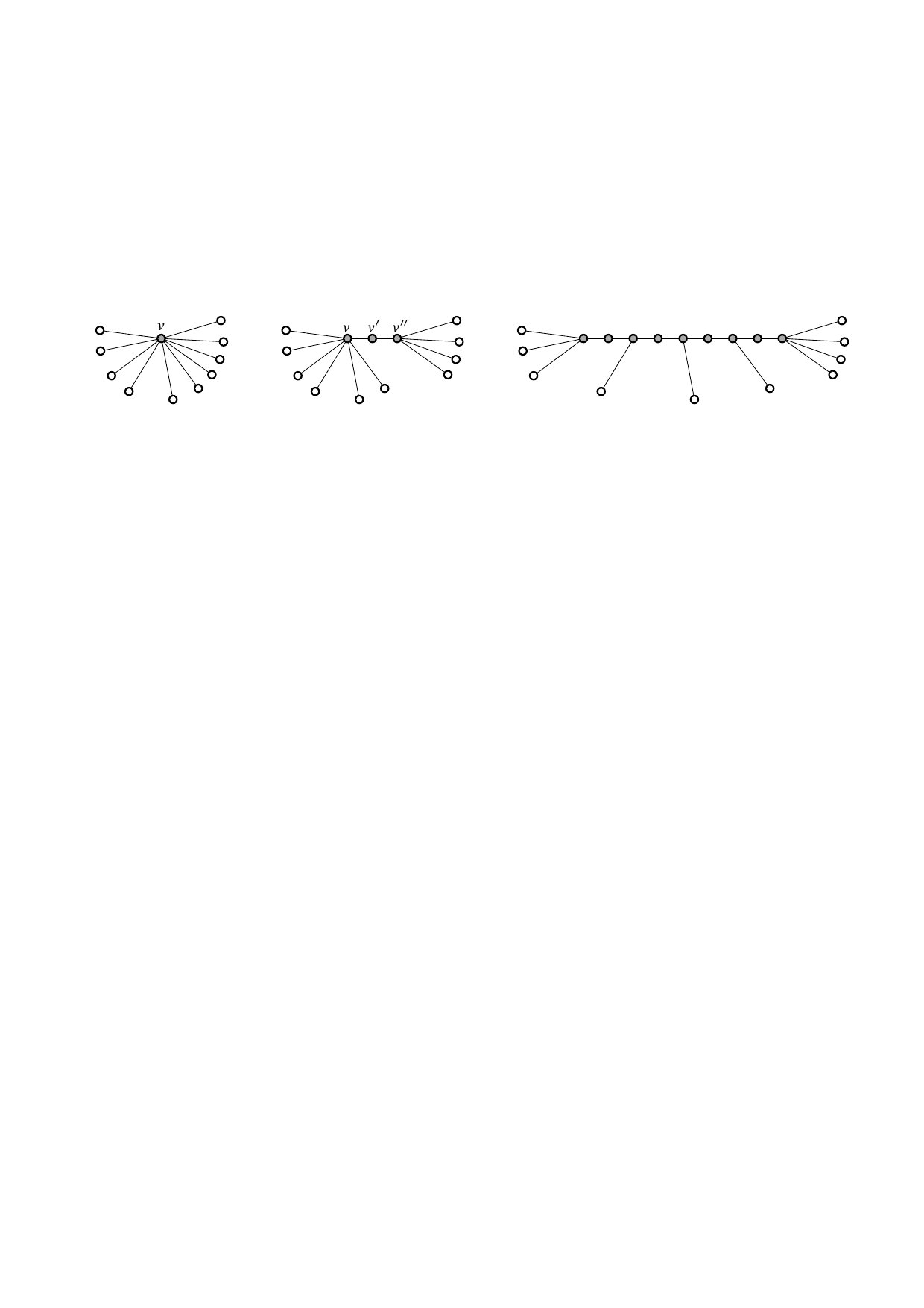}
		\caption{Splitting one single vertex of $Z$.}
		\label{fig:split3}
	\end{figure}
	The recursive call on $G^*[X^*]$ yields a graph $G'_X$ and 
    a tree $T'_X$,
	and the recursive call on $G^*[Y^*]$ yields a graph $G'_Y$ 
    and a tree $T'_Y$, both with the properties given in the theorem. 
	Let $G'$ be the graph obtained
	by putting together $G^*[Z^*]$, $G'_X$, and $G'_Y$.
	If some vertex degree gets larger than four, we can make a 
    vertex split there.
	There are at most $|Z^*|$ such vertex splits.
	A separator tree $T'$ for $G'$ is constructed by making a 
	root for $Z^*$
	and making the roots of $T'_X$ and $T'_Y$ its two children.
	Adding a pendant path of even length to $G^*[Z^*]$, if needed,
	we ensure that $|Z^*| = \Theta(\rho |Z|)$, the 
	maximum number of possible vertices after the splits 
	(we again denote the resulting vertex set by $Z^*$).
	
	For a graph $G$ with $m$ vertices considered during the recursion, 
    by Theorem~\ref{thm:separator},
	we spend $\Theta(m)$ expected time to find the separation $(X,Y,Z)$.
	Then, we construct the induced graph $G[Z]$ in 
	$O(\rho |Z|) = 
    O(\rho^{3/2} m^{1/2})$ time. 
    The transformation 
    from $G$ to $G^*$
	can be done in $O(\rho |Z|)=O(\rho^{3/2} m^{1/2})$ time. 
	Finally, when we may add vertices to $|Z^*|$, we spend 
	$\Theta(\rho m)$ time.  Standard tools to analyze 
	recursions imply that the expected running time is 
    distributed evenly over the $O(\log n)$ levels of the tree.
    Thus, the expected total running time is
	$O(\rho n \log n)$. 
	
	It is easy to see that the number of vertices of $G'$ 
	is $\Theta(\rho m)$
	because $G'_X$ has $\Theta(\rho |X^*|)$ vertices,
	$G'_Y$ has $\Theta(\rho |Y^*|)$ vertices,
	and $G[Z^*]$ has $\Theta(\rho |Z|)$ vertices.
	Since $G'$ has bounded maximum degree, it also has 
	$\Theta(\rho m)$ edges.
	Furthermore, $(V(G'_X),V(G'_Y),Z^*)$ is a separation of $G'$.
	Since $G'$ has $\Theta(\rho m)$ vertices, 
	$G'_X$ has $\Theta(\rho |X^*|)=\Theta(\rho |X|)$ vertices,
	and $(X,Y,Z)$ is a separation of $G$, 
	$|V(G'_X)| \le \alpha' m$ for some constant $\alpha'<1$. 
	The same argument applies to $|V(G'_Y)|$.
	Since 
	\[
		|Z^*|=\Theta(\rho |Z|)= O(\rho \sqrt{\rho m}) 
			 =\Theta(\rho \sqrt{|V(G')|}),
	\]
	it follows that $T'$ is a 
	$(O(\rho),\beta=1/2, \alpha')$-separator tree.
\end{proof}

Note that the split graph $G'$ 
in Lemma~\ref{le:vertexsplit}
is not necessarily in $\GG_\rho$. 
It is a subgraph of an intersection graph, but since we introduce
copies of geometric objects when we split vertices, the density increases. 
In any case, this does not matter because $G'$ will be accessed through 
the separator tree $T'$.

\subsection{Nested dissection}
We will need to compute with 
matrices.
The arithmetic operations take place 
in $\ZZ_p$, 
where $p = \Theta(n^4)$ is prime. Thus, we 
work with numbers of $O(\log n)$-bits, and assuming
a standard unit-cost model,
we simply need to bound the number of arithmetic operations.

Let $A$ be an $n\times n$ matrix.
A Gaussian elimination step on row $i$ is the following operation:
for $j=i+1,\dots,n$,
add an appropriate multiple of row $i$ to row $j$ so that the 
element at position $(j,i)$ becomes $0$. Elimination on row $i$ can 
be performed if the entry at position $(i,i)$ is nonzero.
Gaussian elimination on $A$ consists of performing Gaussian 
elimination steps on rows $i=1,\dots, n-1$. 
This is equivalent to computing an $LU$ decomposition of $A$, 
where $L$ is a lower triangular matrix with units along the diagonal, and $U$
is an upper triangular matrix. Gaussian elimination is 
performed \emph{without pivoting} if,
for all $i$, when we are about to do a Gaussian elimination step on row $i$,
the entry at position $(i,i)$ is non-zero. 
If Gaussian elimination is performed without pivoting, then the 
matrix is non-singular.
(Pivoting is permuting the rows to ensure that 
the entry at position $(i,i)$ is non-zero.)

Let $[n]=\{1,\dots, n\}$.
The \DEF{representing graph} $G(A)$ of an $n\times n$ 
matrix $A=(a_{i,j})_{i,j\in [n]}$ is
\[
	G(A)= \left([n], \left\{ ij \in \binom{[n]}{2} 
	\left\vert a_{i,j}\neq 0 \text{ or } 
	a_{j,i}\neq 0\right.\right\} \right).
\]
Let $T$ be a separator tree for $G(A)$.
The row order of $A$ is \DEF{consistent} with $T$ if,
whenever $t'$ is an ancestor of $t$,
all the rows of $Z_t$ are before any row of $Z_{t'}$.
We may assume that all the rows of $Z_t$ are consecutive.
In particular, if the rows are ordered according to a post-order
traversal of $T$, then the row order of $A$ is consistent with $T$.
A careful but simple revision of the nested dissection algorithm 
by Gilbert and Tarjan~\cite{GilbertT86} leads to the following theorem.

\begin{theorem}
\label{thm:dissection}
	Let $A$ be an $n\times n$ matrix such that 
	the representing graph $G(A)$ has bounded degree
	and assume that we are given a $(\gamma,\beta,\alpha)$-separator 
	tree $T$ for $G(A)$,
	were $\gamma>0$, $0< \alpha <1$, and $1/2 \leq \beta < 1$ are constants.
	Furthermore, assume that the row order of $A$ is consistent with $T$
	and that Gaussian elimination on $A$ is done without pivoting.
	We can perform Gaussian elimination (without pivoting) on $A$
	and find a factorization $A=LU$ of $A$ in 
	$O(\gamma^\omega n^{\beta \omega})$ time, where
	$L$ is a lower triangular matrix with units along the diagonal 
	and $U$ is an upper triangular matrix. 
\end{theorem}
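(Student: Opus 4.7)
The plan is to adapt the nested-dissection algorithm of Gilbert and Tarjan~\cite{GilbertT86}, extending the analysis from planar separators ($\beta=1/2$) to separators of size $\gamma m^\beta$ and combining it with fast matrix multiplication. Concretely, we traverse the separator tree $T$ in post-order: upon reaching a node $t$, all rows of the descendants of $t$ have already been eliminated, which is legitimate because the row order of $A$ is consistent with $T$, and then we perform Gaussian elimination on the rows indexed by $Z_t$. The non-pivoting assumption ensures that the diagonal blocks encountered along the way are nonsingular, so the row ordering is preserved throughout.

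Correctness and efficiency both rely on the standard \emph{fill-in confinement} lemma: after all rows in the subtree of $t$ have been eliminated, the nonzero entries of those rows can lie only in columns indexed by $Z_t$ together with the ancestor separators $B_t := \bigcup_{s \text{ ancestor of } t} Z_s$. This is proved by induction from the leaves upwards, using the separator property that no edge of $G(A)$ joins $V_u$ to $V_v$ for the two children $u, v$ of $t$: elimination inside $V_u$ cannot introduce fill into $V_v$, and vice versa, so any new nonzero created while processing the subtree of $t$ necessarily lies in $Z_t \cup B_t$. Consequently, the elimination at $t$ reduces to dense block operations on an index set of the form $Z_t \cup B_t$, which can be implemented with fast matrix multiplication.

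The running time is then governed by the recurrence
\[
T(m) \le T(m_u) + T(m_v) + c\,\gamma^\omega m^{\beta\omega},
\]
where $m = |V_t|$, $m_u + m_v \le m$, and $m_u, m_v \le \alpha m$. Because $\beta\omega > 1$ (from $\beta > 1/2$ and $\omega > 2$), the map $x \mapsto x^{\beta\omega}$ is strictly convex, so $m_u^{\beta\omega} + m_v^{\beta\omega}$ is maximized under the stated constraints at the corner $(m_u, m_v) = (\alpha m, (1-\alpha)m)$, where it equals $\lambda\, m^{\beta\omega}$ with $\lambda := \alpha^{\beta\omega} + (1-\alpha)^{\beta\omega} < 1$. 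A direct induction on $m$ then yields $T(n) = O(\gamma^\omega n^{\beta\omega})$, as required.

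The main obstacle is justifying the per-node cost $c\,\gamma^\omega m^{\beta\omega}$: a naive accounting would introduce an extra factor proportional to $|B_t|$, which in the worst case can be as large as $\Theta(\gamma n^\beta)$. The remedy, standard in nested-dissection analyses, is to apply the Schur-complement updates affecting an ancestor separator $Z_s$ \emph{lazily}, at the moment $s$ itself is processed, rather than eagerly once $t$ is finished; this keeps the dense block actively manipulated at node $t$ of size $O(|Z_t|) = O(\gamma m^\beta)$, so its factorization together with the production of the deferred updates costs $O((\gamma m^\beta)^\omega)$ by fast matrix multiplication. The bounded-degree assumption on $G(A)$ is used to seed the induction: at each leaf the relevant dense block has size $O(|Z_t|)$ and the total nonzero storage across leaves is $O(n)$, so the base of the recursion does not hide any super-asymptotic work.
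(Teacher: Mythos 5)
Your setup --- post-order traversal of $T$, block elimination of $Z_t$, and the fill-in confinement claim that elimination of the rows in the subtree of $t$ only creates nonzeros in columns of $Z_t$ and ancestor separators --- matches the paper's argument. The gap is in the cost accounting. Eliminating the rows of $Z_t$ necessarily produces a rank-$|Z_t|$ Schur-complement update on the block indexed by $B_t$, and \emph{computing} that update (as opposed to applying it) costs $\Theta\bigl(|B_t|^2\,|Z_t|^{\omega-2}\bigr)$: you must form a product of a $|B_t|\times|Z_t|$ matrix with a $|Z_t|\times|B_t|$ matrix, and even writing the result down takes $|B_t|^2$ operations. Deferring the application of this update to the moment an ancestor $s$ is processed does not defer or shrink the cost of producing it, so your claimed per-node cost of $O((\gamma m^\beta)^\omega)=O(|Z_t|^\omega)$ is unsubstantiated, and the subtree recurrence $T(m)\le T(m_u)+T(m_v)+c\gamma^\omega m^{\beta\omega}$ does not capture the true work. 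Indeed, with your definition of $B_t$ as the union of \emph{all} ancestor separators, $|B_t|$ can be $\Theta(\gamma n^\beta)$ for essentially every node, and the extra term $\sum_t |B_t|^2|Z_t|^{\omega-2}$ is then too large; no recurrence on subtree size alone can absorb it, because the cost at $t$ depends on the entire root-to-$t$ path, not on $|V_t|$.

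What is actually needed --- and what the paper does, following Gilbert and Tarjan --- is to pay the $O\bigl((|Z_t|+|B_t|)^2|Z_t|^{\omega-2}\bigr)$ eagerly and then bound $\sum_t |B_t|^2|Z_t|^{\omega-2}$ globally. This requires (i) defining $B_t$ more tightly as only those vertices of ancestor separators that have an \emph{edge into} $V_t$; (ii) using the bounded-degree hypothesis to show $|B_t|=O(\gamma\,|V_{s(t)}|^\beta)$, where $s(t)$ is the highest ancestor with an edge into $V_t$, via the geometric decay of $|V_{t_i}|$ along the path from $s(t)$ to $t$; and (iii) observing that $t\mapsto s(t)$ is injective on each level of $T$, so that $\sum_{t\in L_\ell}|B_t|^2=O(\gamma^2\ell\, n^{2\beta})$ and the level-by-level sum converges to $O(\gamma^\omega n^{\beta\omega})$. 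Note also that bounded degree is not merely a base-case convenience as you suggest; it is essential to step (ii). Your proposal identifies the right obstacle but its proposed remedy does not resolve it, and the charging argument that constitutes the heart of the proof is missing.
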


For the proof of Theorem~\ref{thm:dissection}, we will need the 
following folklore lemma, whose proof we include for completeness.

\begin{lemma}
\label{le:partialelimination}
	Let $A$ be an $n \times n$ matrix, and 
	$k \le n$. Suppose that Gaussian elimination on
	the first $k$ rows of $A$ needs no pivoting. Then, 
	we can perform Gaussian elimination on the 
	first $k$ rows of $A$
	with $O(n^2 k^{\omega-2})$ arithmetic operations.
\end{lemma}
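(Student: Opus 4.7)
The plan is to express partial Gaussian elimination on the first $k$ rows as a single block LU step, so that the dominant cost is one rectangular fast matrix multiplication. Partition
\[
A = \begin{pmatrix} A_{11} & A_{12} \\ A_{21} & A_{22} \end{pmatrix}
\]
with $A_{11}$ the top-left $k \times k$ block, $A_{12}$ of size $k \times (n-k)$, $A_{21}$ of size $(n-k) \times k$, and $A_{22}$ of size $(n-k) \times (n-k)$. Row-by-row Gaussian elimination without pivoting on the first $k$ rows of $A$ produces precisely the LU-factorization $A_{11} = L_{11} U_{11}$ (with $L_{11}$ unit lower triangular and $U_{11}$ upper triangular), together with $U_{12} = L_{11}^{-1} A_{12}$, $L_{21} = A_{21} U_{11}^{-1}$, and the Schur complement $A_{22}' = A_{22} - L_{21} U_{12}$; the hypothesis that no pivoting is needed on the first $k$ rows of $A$ is exactly the hypothesis that $A_{11}$ admits such a factorization.

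I would then bound the four substeps. The LU factorization of $A_{11}$ can be computed in $O(k^\omega)$ time by the standard recursive block LU scheme, equivalent to applying this very lemma with $n=k$ by strong induction on $k$. Each of the triangular solves $U_{12} = L_{11}^{-1} A_{12}$ and $L_{21} = A_{21} U_{11}^{-1}$ can be handled by slicing the off-diagonal block into $O(n/k)$ slabs of size $k \times k$ and using fast triangular inversion on each slab, for a total of $O(n k^{\omega-1})$ operations each. The Schur complement product $L_{21} U_{12}$, whose factors have shapes $(n-k) \times k$ and $k \times (n-k)$, is formed by tiling both factors with $k \times k$ blocks along the outer dimensions, reducing to $O((n/k)^2)$ fast $k \times k$ multiplications and yielding $O((n/k)^2 \cdot k^\omega) = O(n^2 k^{\omega-2})$ operations; the subsequent entrywise subtraction from $A_{22}$ is $O(n^2)$ and of lower order.

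Summing the four contributions gives $O(k^\omega + n k^{\omega-1} + n^2 k^{\omega-2}) = O(n^2 k^{\omega-2})$ because $k \le n$, matching the claim. The only ingredient that is not entirely routine is the reduction of triangular matrix inversion and of LU without pivoting to fast matrix multiplication, both of which are classical (e.g., Bunch--Hopcroft). Once these are in place, the only remaining point is verifying that performing the first $k$ steps of row-by-row Gaussian elimination and performing this single block elimination produce the same entries, which is a direct consequence of the equivalence between pivoting-free elimination and block LU; hence no genuinely novel idea is required beyond careful bookkeeping of the block sizes that produces the $k^{\omega-2}$ savings over the naive $O(nk^2)$ bound.
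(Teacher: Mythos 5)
Your proposal is correct and follows essentially the same route as the paper: block-partition $A$ with a $k\times k$ top-left block, factor that block and invert the triangular factors in $O(k^\omega)$ time, and compute the off-diagonal blocks and the Schur complement by tiling into $O(n^2/k^2)$ products of $k\times k$ matrices, giving $O(n^2k^{\omega-2})$. The only cosmetic difference is that the paper forms the Schur complement as $A_{2,2}-A_{2,1}(A_{1,1})^{-1}A_{1,2}$ rather than as $A_{2,2}-L_{21}U_{12}$, which is the same matrix and the same cost.
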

\begin{proof}
	Computing the inverse or performing Gaussian elimination 
	for a $k\times k$ matrix takes $O(k^\omega)$ time 
	(even if pivoting is needed), see, e.g.,
	Bunch and Hopcroft~\cite{BunchH74}, and Ibarra, Moran, 
	and Hui~\cite{IbarraMH82}. 
\def\unitk{1.3cm}\relax
\def\unitn{3.3cm}\relax
Assume that 
	\[
		A~=~
		\left[ 
		\begin{array}{c|c}
			\vbox to \unitk {\vfill \hbox to \unitk{\hfill $A_{1,1}$	\hfill} \vfill} & 
			\vbox to \unitk {\vfill \hbox to \unitn{\hfill $A_{1,2}$	\hfill} \vfill} 
			\\ \hline
			\vbox to \unitn {\vfill \hbox to \unitk{\hfill $A_{2,1}$	\hfill} \vfill} & 
			\vbox to \unitn {\vfill \hbox to \unitn{\hfill $A_{2,2}$	\hfill} \vfill} 
		\end{array}
		\right] ~,
	\]
	where $A_{1,1}$ is $k\times k$ and $A_{2,2}$ is $(n-k)\times (n-k)$.
	We want to perform Gaussian elimination without pivoting for 
	the first $k$ rows. 
	
	First, we perform Gaussian elimination on the $k\times k$ matrix $A_{1,1}$. 
	This takes $O(k^\omega)$ time, and we obtain two $k\times k$ matrices $L$ and $U$ 
	such that $A_{1,1}=LU$, 
	the matrix $L$ is lower triangular with units along the diagonal,
	and the matrix $U$ is upper triangular. 
	Since we use no pivoting, $L$ and $U$ are non-singular.
	We also compute in $O(k^\omega)$ time the inverses 
	$(A_{1,1})^{-1}$, $L^{-1}$, and $U^{-1}$.
	Then, we have 
	\[
		A~=~
		\left[ 
		\begin{array}{c|c}
			\vbox to \unitk {\vfill \hbox to \unitk{\hfill $L$					\hfill} \vfill} & 
			\vbox to \unitk {\vfill \hbox to \unitn{\hfill $0_{k\times (n-k)}$	\hfill} \vfill} 
			\\ \hline
			\vbox to \unitn {\vfill \hbox to \unitk{\hfill $A_{2,1}U^{-1}$		\hfill} \vfill} & 
			\vbox to \unitn {\vfill \hbox to \unitn{\hfill $I_{(n-k)\times (n-k)}$		\hfill} \vfill} 
		\end{array}
		\right] 
		\left[ 
		\begin{array}{c|c}
			\vbox to \unitk {\vfill \hbox to \unitk{\hfill $U$					\hfill} \vfill} & 
			\vbox to \unitk {\vfill \hbox to \unitn{\hfill $L^{-1}A_{1,2}$	\hfill} \vfill} 
			\\ \hline
			\vbox to \unitn {\vfill \hbox to \unitk{\hfill $0_{(n-k)\times k}$		\hfill} \vfill} & 
			\vbox to \unitn {\vfill \hbox to \unitn{\hfill $A_{2,2}-A_{2,1} (A_{1,1})^{-1} A_{1,2}$ \hfill} \vfill} 
		\end{array}
		\right] ~,
	\]
	which means that the second matrix on the right side is the 
	result of making
	Gaussian elimination for the first $k$ rows of $A$. The products 
	$A_{2,1}U^{-1}$, $L^{-1}A_{1,2}$,  and $A_{2,1} (A_{1,1})^{-1} A_{1,2}$
	can be computed in $O(n^2 k^{\omega-2})$ time by making at most 
	$O(n^2/k^2)$ products of submatrices of size $k\times k$.
\end{proof}

\begin{proof}[Proof of Theorem~\ref{thm:dissection}]
	We assume that the reader is familiar with some of the previous 
	work on nested dissection to compute elimination orders for 
	Gaussian elimination~\cite{GilbertT86,LiptonRE79}.
	Set $G = G(A)$. For each edge $ij$ of $G$, if $i \in Z_t$ 
	and $j \in Z_{t'}$, 
	then either $t = t'$ or $t$ and $t'$ have an 
	ancestor-descendant relation in $T$.
	
	For each node $t$ of $T$, we eliminate all rows in $Z_t$ together,
	using block Gaussian eliminations. 
	Since the row order is consistent with $T$, we have already 
	eliminated all the rows
	of $V_t\setminus Z_t$, and we have not yet eliminated any 
	row of $Z_{t'}$, for any ancestor $t'$ of $t$. 
	
	For each node $t$ of $T$, let $B_t$ be the set of vertices $j$ that
	belong to some $Z_{t'}$, where $t'$ is an ancestor of $t$ in $T$, 
	such  that there is an edge from $j$ to some vertex of $V_t$. 
	A vertex $j$ of $G(A)$ is affected by the elimination steps on 
	the rows of $Z_t$ only if $j$ belongs to $Z_t$ or to $B_t$.
	Thus, performing Gaussian elimination steps on the rows of 
	$Z_t$ affects at most $|Z_t|+|B_t|$ rows and columns.
	Eliminating the rows of $Z_t$ affects the rows of $B_t$. However, 
	when processing node $t$,
	we do not yet perform any elimination steps on the rows of $B_t$.
	Thus, we consider the submatrix with indices in $Z_t \cup B_t$,
	and we perform the elimination steps only on the rows of $Z_t$. 
	By Lemma~\ref{le:partialelimination},
	this takes $O((|Z_t|+|B_t|)^2 |Z_t|^{\omega-2})$ time. 
	It follows that the running time of the whole algorithm is 
	\begin{equation}
		\sum_{t\in T}~~ O\big((|Z_t|+|B_t|)^2 |Z_t|^{\omega-2}\big) 
		~~=~~ 
		\sum_{t\in T}~~ O(|Z_t|^\omega) + \sum_{t\in T}~~ 
		O(|B_t|^2 |Z_t|^{\omega-2}).\label{eq:1}
	\end{equation}
	Since $|Z_t|\le \gamma |V_t|^\beta$, the first sum is bounded
	as follows:
	\begin{equation}
		\sum_{t\in T} |Z_t|^\omega ~~\leq~~ 
		\sum_{t\in T} \gamma^\omega
                |V_t|^{\beta\omega} ~~=~~
		\gamma^\omega \cdot \sum_{t\in T} |V_t|^{\beta\omega} ~~=~~
		O(\gamma^\omega \cdot n^{\beta\omega} ),
	\label{eq:2}
	\end{equation}
	where in the last step we have used the assumption $\beta\omega>1$.

	To bound the second sum, we first
	analyze $\Sigma = \sum_{t\in T} |B_t|^2$.
	For this, we follow Gilbert and Tarjan~\cite{GilbertT86} almost 
	verbatim.  Let $L_\ell$ be the nodes of $T$ at level $\ell$ and define 
	$\Sigma_\ell = \sum_{t\in L_\ell} |B_t|^2$. (The root is at level $0$.)
	Fix a level $\ell>0$.  The sum $\Sigma_\ell$ is maximized
        if for each node $t'$ at level $\ell'<\ell$, all the edges 
        with an endpoint in $t'$ and an endpoint at level at least 
        $\ell$ are incident to the same subgraph $G[V_t]$ of $t\in L_\ell$.
        That is, to bound $\Sigma_\ell$, we can assume that all the 
	edges incident to $Z_{t'}$ contribute to the same $B_t$, $t\in L_\ell$.
	For each $t\in L_ \ell$, let $s(t)$ be the highest node of 
	$T$ with an edge going to $V_t$.  Because of the assumption 
	we made, the mapping $t \mapsto s(t)$ (from $L_\ell$ 
	to $\cup_{\ell'\le \ell} L_{\ell'}$) is injective. 
	If $s(t)=t_0,t_1,\dots, t_a= t$ is the path in $T$ from $s(t)$ to $t$,
	then we have $|V_{t_i}|\le \alpha^i |V_{s(t)}|$ for each $i=0,\dots,a$.
	Using that each vertex of each $Z_{t_i}$ has bounded degree,
	we get  
	\begin{align*}
          |B_t| ~~&\le~~ \sum_{i=0}^a O(1)\cdot |Z_{t_i}| 
                 ~~\le~~  O(1)\cdot \sum_{i=0}^a \gamma |V_{t_i}|^\beta  
	~~\le~~  O(\gamma)\cdot \sum_{i=0}^a (\alpha^i |V_{s(t)}|)^\beta \\
	~~&\le~~~  O(\gamma \cdot |V_{s(t)}|^\beta) \cdot \sum_{i=0}^a (\alpha^\beta)^i 
			~~\le~~  O\left(\gamma \cdot |V_{s(t)}|^\beta \cdot 
			  \frac{1}{1-\alpha^\beta} \right).
	\end{align*}
	Since the map $t\mapsto s(t)$ is an injection 
	(when $t\in L_\ell$), we have
	\begin{align*}
		\Sigma_\ell~~=~~
		\sum_{t \in L_\ell} |B_t|^2 ~~&\le~~ 
		\sum_{t \in L_\ell} 
		O\left( \gamma^2 \cdot |V_{s(t)}|^{2\beta} \cdot
		  \frac{1}{(1-\alpha^\beta)^2}\right)\\ 
		~~&\le~~ 
		\sum_{s\in \cup_{\ell'\le\ell}L_{\ell'}} O\left( \gamma ^2 \cdot |V_s|^{2\beta} \cdot \frac{1}{(1-\alpha^\beta)^2}\right)\\ ~~&=~~
		O\left( \gamma^2\cdot \frac{1}{(1-\alpha^\beta)^2} \cdot \ell\cdot n^{2\beta}\right),
	\end{align*}
	where in the last step we have used that the sets $V_s$, 
	$s\in L_{\ell'}$, are pairwise disjoint subsets of $[n]$ 
	for each level $\ell'$, and $2\beta\ge 1$.
	For each $\ell$ and each $t\in L_\ell$, we have $|V_t|\le \alpha^\ell n$
	and therefore $|Z_t|\le \gamma (\alpha^\ell n)^\beta$.
	This implies that
	\begin{align*}
		\sum_{t\in L_\ell} |B_t|^2 |Z_t|^{\omega-2} ~~&\le~~
		\sum_{t \in L_\ell} |B_t|^2 \Bigl( \gamma \bigr( \alpha^\ell n \bigr)^\beta \Bigr)^{\omega-2} ~~=~~  
		\gamma^{\omega-2} \cdot (\alpha^\ell n)^{\beta(\omega-2)} \cdot \sum_{t \in L_\ell} |B_t|^2 \\
		~~&\le~~ 
		\gamma^{\omega-2} \cdot (\alpha^\ell n)^{\beta(\omega-2)} \cdot O\left( \gamma^2\cdot \frac{1}{(1-\alpha^\beta)^2} \cdot \ell\cdot n^{2\beta}\right)\\ 
		~~&=~~
		O\left(\gamma^{\omega} \cdot \frac{1}{(1-\alpha^\beta)^2} \cdot n^{\beta\omega} \cdot \alpha^\ell \ell\right).
	\end{align*}
	Since $\sum_{\ell\ge 0} \alpha^\ell \ell = \alpha/(1-\alpha)^2$, for  
	$0<\alpha<1$, we get that 
	\begin{align*}
		\sum_{t\in T}~~ |B_t|^2 |Z_t|^{\omega-2} &=
		\sum_{\ell\ge 0} \sum_{t \in L_\ell} |B_t|^2 |Z_t|^{\omega-2}\\
		& \le 
		\sum_{\ell\ge 0} O\left(\gamma^{\omega} \cdot 
		\frac{1}{(1-\alpha^\beta)^2}  \cdot n^{\beta\omega} \cdot \alpha^\ell \ell\right)\\ ~~&=~~
		O\left(\frac{\alpha}{(1-\alpha^\beta)^2(1-\alpha)^2}\cdot\gamma^{\omega} \cdot n^{\beta\omega}\right).
	\end{align*}
	Combining it with \eqref{eq:2}, we get from 
	\eqref{eq:1} that the total running time
	is $O((\alpha/(1-\alpha^\beta)(1-\alpha)^2)\cdot\gamma^{\omega} \cdot n^{\beta\omega})$.
	The theorem follows (in the statement of our theorem, we hide
	$\alpha$ in the $O$-notation, to avoid clutter and since the precise
	dependency is not important in our applications).
\end{proof}

\noindent\textbf{Remark 1:}
Mucha and Sankowski~\cite{MuchaS06} noted that the result holds 
when $G(A)$ is planar or, more generally, has recursive separators,
using the approach by Lipton, Rose, and Tarjan~\cite{LiptonRE79} 
for nested dissection.  This approach is based on the \emph{strong} 
separator tree.  Alon, Yuster, and Zwick~\cite{AlonY13,YusterZ07} 
showed that a similar result holds for graphs of bounded degree with 
recursive separators if one instead uses the nested
dissection given by Gilbert and Tarjan~\cite{GilbertT86}. In this 
case, we need bounded degree, but a weak separator tree suffices.
Again, since we want to make the dependency on $\rho$ explicit 
and since the analysis in terms of matrix multiplication time does 
not seem to be written down in detail anywhere,
we revise the method carefully.

\noindent\textbf{Remark 2:}
Usually, the result is stated for symmetric positive definite matrices.
Reindexing a symmetric positive definite matrix gives another symmetric positive definite matrix,
and Gaussian elimination on such matrices can always be performed without pivoting.
Thus, for positive semidefinite matrices, we do not need to assume that 
the row order is consistent with $T$ because we can reorder the rows to make
it consistent with $T$. However, Mucha and Sankowski~\cite{MuchaS06} 
do need the general statement in their Section 4.2, 
and they mention this general case after their Theorem~13. Actually, 
they need it over $\ZZ_p$,
where the concept of positive definiteness is not even defined!

\subsection{The algorithm}
Assume we have a graph $G$ of $\GG_\rho$ with $n$ vertices
and a geometric representation, i.e., geometric objects $\UU$
of density at most $\rho$ such that $G$ is a subgraph of $G_\UU$. 
We want to compute a maximum matching for $G$.
For this, we adapt the algorithm of Mucha and Sankowski~\cite{MuchaS06}.
We provide an overview of the approach, explain the necessary modifications,
and emphasize the dependency on $\rho$ in the different
parts of the algorithm.

Using Lemma~\ref{le:vertexsplit}, we get in $O(\rho n\log n)$ 
expected time a vertex-split graph $G'$ of $G$ and a separator tree $T'$
for $G'$ such that 
\begin{itemize}
\item(i) the graph $G'$ has $\Theta(\rho n)$ vertices and edges;
\item(ii) the maximum degree of $G'$ is at most $4$;
\item(iii) $T'$ is a $(\gamma=O(\rho),\beta=1/2, \alpha)$-separator 
        tree for $G'$,
	where $\alpha<1$ is a constant (independent of $\rho$ and $n$). 
\end{itemize}
Since $G'$ is obtained from $G$ by vertex splits, it 
suffices to find a maximum matching in $G'$.
We set $m=|V(G')|=\Theta(\rho n)$, and we label
the vertices of $G'$ from $1$ to $m$.
We consider the variables $X=(x_{ij})_{ij\in E(G')}$; i.e.,
each edge $ij$ of $G$ defines a variable $x_{ij}$. 
Consider the $m\times m$ symbolic matrix $A[X]=A[X](G')$, defined as follows:
\[
	(A[X])_{i,j} = \begin{cases}
				x_{ij}, & \text{if $ij\in E(G')$ and $i<j$},\\
				-x_{ij}, & \text{if $ij\in E(G')$ and $j<i$},\\
					0 & \text{otherwise}.
   		\end{cases}
\]
The symbolic matrix $A[X]$ is usually called the \emph{Tutte matrix} of $G'$.
It is known~\cite{RabinV89} that the rank of $A[X]$ is 
twice the size of the maximum matching in $G'$. In particular, $G'$
has a perfect matching if and only if $\det(A[X])$ is not identically zero.
Take a prime $p=\Theta(n^4)$, and substitute each variable in $A[X]$
with a value from $\ZZ_p$, each chosen independently uniformly at random.
Let $A$ be the resulting matrix. 
Then, with high probability, $\rank(A)=\rank(A[X])$, where on both sides 
we consider the rank over the field $\ZZ_p$~\cite{RabinV89}.

\paragraph*{From maximum matching to perfect matching.}
Let $B=AA^T$. Then, $B$ is symmetric, and
the rank of $B$ equals the rank of $A$. 
Note that $(B)_{i,j}$ is nonzero only if $i$ and $j$ share a neighbor in $G'$.
Since $G'$ has bounded degree, 
from the separator tree $T'$ for $G'$, we can obtain a separator tree $T_B$
for the representing graph $G(B)$. Since $T'$ was a 
$(\gamma=O(\rho),\beta=1/2, \alpha)$-separator tree for $G'$, 
$T_B$ is a $(\gamma=O(\rho),\beta=1/2, \alpha)$-separator tree for $G(B)$, 
where the constant hidden in $O(\rho)$ is increased by the 
maximum degree in $G'$.
Using Theorem~\ref{thm:dissection}, we obtain that 
Gaussian elimination
can be done in $B$ in 
$O(\gamma^\omega m^{\omega/2})= O(\rho^\omega (\rho n)^{\omega/2})= 
O(\rho^{3\omega/2} n^{\omega/2})$
time, assuming that pivoting is not needed. 

Mucha and Sankowski~\cite[Section 5]{MuchaS06} show how Gaussian elimination
without pivoting can be used in $B$ to find a collection 
of indices $W\subseteq [m]$ such
that the centered matrix $(B)_{W,W}$, 
defined by rows and columns of $B$ with indices in $W$,
has the same rank as $B$. It follows that $\rank(A_{W,W})=\rank(B_{W,W})$
and therefore $G'[W]$ contains a maximum matching of $G'$ that is
a perfect matching in $G'[W]$ (with high probability). 
The key insight to find such $W$ is that, if during
Gaussian elimination in $B$ we run into a $0$ along the diagonal,
then the whole row and column are $0$, which means that they can be removed
from the matrix without affecting the rank.
We summarize. 

\begin{lemma}
	In time $O(\rho^{3\omega/2} n^{\omega/2})$ we can find a subset $W$
	of vertices of $G'$ such that, with high probability,
	$G'[W]$ has a perfect matching that is a maximum matching in $G'$.
\end{lemma}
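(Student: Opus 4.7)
The plan is to assemble the ingredients already developed in the excerpt into one clean argument.

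First, I would verify the algebraic structure. By the Tutte--Rabin--Vazirani characterization~\cite{RabinV89}, $\rank(A[X]) = 2\nu(G')$ where $\nu(G')$ is the size of a maximum matching, and by the Schwartz--Zippel lemma applied to the minors of $A[X]$, the random evaluation $A$ over $\ZZ_p$ with $p = \Theta(n^4)$ satisfies $\rank(A) = \rank(A[X])$ with probability $1 - O(1/n^2)$. Forming $B = AA^T$, a standard rank identity over any field gives $\rank(B) = \rank(A)$, and $B$ is symmetric. So, with high probability, $\rank(B) = 2\nu(G')$.

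Next, I would produce a separator tree for $G(B)$ out of $T'$. Since $G'$ has maximum degree $4$, entry $(B)_{i,j} = \sum_k A_{i,k}A_{j,k}$ is nonzero only if $i$ and $j$ share a common neighbor in $G'$; hence $G(B)$ is a subgraph of the square $(G')^2$, whose maximum degree is bounded by $16$ and in which neighborhoods of separators grow only by a constant factor. Therefore $T'$ yields a $(\gamma = O(\rho), \beta = 1/2, \alpha)$-separator tree $T_B$ for $G(B)$ on the same $m = \Theta(\rho n)$ vertices, and I would order the rows and columns of $B$ according to a postorder traversal of $T_B$, so that the ordering is consistent with $T_B$ as required by Theorem~\ref{thm:dissection}.

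Then I would run the Mucha--Sankowski variant of nested dissection on $B$. At each step of the block elimination, if the current diagonal pivot is nonzero, we perform the standard step as in the proof of Theorem~\ref{thm:dissection}; if it is zero, we declare that index to be outside $W$ and simply skip it. The correctness lemma of~\cite[Section~5]{MuchaS06} says that for a symmetric matrix $B$ obtained this way (with random entries in $A$), whenever a pivot is zero after previous eliminations, the entire residual row and column are zero with high probability, so discarding that index does not change the rank of the remaining matrix nor the subsequent eliminations. Letting $W \subseteq [m]$ be the indices at which nonzero pivots were found, we obtain $\rank(B_{W,W}) = \rank(B) = \rank(A) = 2\nu(G')$, and therefore $\rank(A_{W,W}) = |W| = 2\nu(G')$. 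A nonsingular principal submatrix of the Tutte matrix corresponds to a perfect matching of $G'[W]$, and this perfect matching has size $\nu(G')$, so it is a maximum matching of $G'$.

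For the running time, the construction of $G'$ and $T'$ takes $O(\rho n \log n)$ expected time by Lemma~\ref{le:vertexsplit}, and computing the nonzero entries of $B$ takes $O(\rho n)$ time since $G'$ has bounded degree. The nested dissection step, by Theorem~\ref{thm:dissection} applied with $\gamma = O(\rho)$, $\beta = 1/2$, and $n$ replaced by $m = \Theta(\rho n)$, takes
\[
O(\gamma^\omega m^{\beta \omega}) \;=\; O\bigl(\rho^\omega (\rho n)^{\omega/2}\bigr) \;=\; O\bigl(\rho^{3\omega/2} n^{\omega/2}\bigr),
\]
which dominates. The main subtlety, and the step I would write most carefully, is justifying that zero-pivot skips do not interfere with the block-structured complexity analysis of Theorem~\ref{thm:dissection}: discarding a row/column only removes work from a block, and the separator-tree sizes $|Z_t|$ and border sizes $|B_t|$ controlling the recurrence remain upper bounds.
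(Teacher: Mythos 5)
Your proposal follows essentially the same route as the paper: Tutte matrix, random substitution over $\ZZ_p$, passing to $B=AA^T$ whose representing graph is contained in the square of the bounded-degree graph $G'$ and hence inherits a $(O(\rho),1/2,\alpha)$-separator tree from $T'$, nested dissection via Theorem~\ref{thm:dissection}, and the Mucha--Sankowski observation that a zero pivot forces the entire residual row and column to vanish, so that $W$ is the set of nonzero-pivot indices. One justification is wrong as stated, though: $\rank(AA^T)=\rank(A)$ is \emph{not} a field-independent identity. Over $\ZZ_p$ there are nonzero self-orthogonal vectors (e.g., $(1,2)$ over $\ZZ_5$), so $AA^T$ can have strictly smaller rank than $A$; the equality holds here only with high probability over the random substitution into the Tutte matrix (this is part of what Mucha and Sankowski establish), and it should be folded into the same high-probability event as $\rank(A)=\rank(A[X])$ rather than invoked as a deterministic rank fact. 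Apart from that, your additional care about ordering the rows consistently with $T_B$ and about why skipped pivots only remove work from the block-elimination analysis is a correct elaboration of what the paper leaves implicit.
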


From now on, we can assume that $G'$ has a perfect matching.
We keep denoting by $T'$ its separator tree, by $A$ the matrix after
substituting values of $\ZZ_p$ into $A[X]$, and by $B$ the matrix $AA^T$. 
(We can compute the tree $T'$ anew or we can reuse the same separator 
tree restricted to the subset of vertices.)
Let $Z_r$ denote the set stored at the root $r$ of $T'$.
Thus, $Z_r$ is the first separator on $G'$.
Let $N_r$ be the set $Z_r$ together with its neighbors in $G'$.
Because $G'$ has bounded degree, we have 
$|N_r|=O(|Z_r|)=O(\rho m^{1/2})= O(\rho^{3/2} n^{1/2})$.

Mucha and Sankowski show how to compute with $O(1)$ Gaussian
eliminations a matching $M'$ in $G'$ that covers all the vertices
of $Z_r$ and is contained in some perfect matching of $G'$. 
There are two ingredients for this. The first ingredient is to
use Gaussian elimination on the matrix $B = AA^T$ to obtain
a decomposition $AA^T=LDL^T$, and then use (partial) Gaussian elimination
on a matrix $C$ composed of $L_{[m],N_r}$ and $A_{N_r,[m]\setminus N_r}$
to compute $(A^{-1})_{N_r,N_r}$. (Note that in general $(A^{-1})_{N_r,N_r}$
is different from $(A_{N_r,N_r})^{-1}$. Computing the latter is simpler,
while computing the former is a major insight by 
Mucha and Sankowski~\cite[Section 4.2]{MuchaS06}.)
Interestingly, $T'$ is also a separator tree for 
the representing graph of this matrix $C$, and Gaussian elimination
can be performed without pivoting. Thus, we can obtain in 
$O(\rho^{\omega}m^{\omega/2})= O(\rho^{3\omega/2} n^{\omega/2})$ time
the matrix $(A^{-1})_{N_r,N_r}$.
The second ingredient is that, once we have $(A^{-1})_{N_r,N_r}$, 
we can compute for any matching $M'$ contained in 
$G'[N_r]$ a maximal (with respect to inclusion) 
submatching $M'$ that is contained 
in a perfect matching of $G'$.
This is based on an observation by Rabin and Vazirani~\cite{RabinV89} 
that shows how to find edges that belong to some perfect 
matching using the inverse
matrix, and Gaussian elimination on the matrix $(A^{-1})_{N,N}$
to identify subsets of edges that together belong to some perfect matching.
The matrix $(A^{-1})_{N_r,N_r}$ is 
not necessarily represented by a graph with nice separators,
but it is of size $|N_r|\times |N_r|$.
Thus, Gaussian elimination in $(A^{-1})_{N_r,N_r}$ takes
$O(|N_r|^\omega)=O(\rho^{3\omega/2} n^{\omega/2})$ 
time~\cite[Section 2.4]{MuchaS06}.

Since the graph $G'$ has bounded maximum degree, making $O(1)$ iterations
of finding a maximal matching $M'$ in $G'[N_r]$, followed by finding a maximal
subset $M''$ of $M'$ contained in a perfect matching of $G'$, and removing
the vertices contained in $M'$ plus the edges of $M'\setminus M''$,
gives a matching $M_*$ that covers $Z_r$ and 
is contained in a perfect matching of $G'$; see~\cite[Section 4.3]{MuchaS06}.
The vertices of $M_*$ can be removed, and we recurse on both
sides of $G' - V(M_*)\subset G' - Z_r$ using the corresponding subtrees of $T'$.
The running time is $T(n)= O(\rho^{3\omega/2} n^{\omega/2}) + T(n_1)+T(n_2)$,
where $n_1,n_2\le \alpha n$. This solves 
to $T(n) = O(\rho^{3\omega/2} n^{\omega/2})$
because $\omega/2>1$. 
We summarize in the following result.
If only the family $\UU$ is given,
first we use Lemma~\ref{lem:edges} to construct $G_\UU$.

\begin{theorem}
\label{thm:main1}
	Given a graph $G$ of $\GG_\rho$ with $n$ vertices 
	together with a family $\UU$ of geometric objects 
	with density $\rho$ such that $G$ is a subgraph of $G_\UU$,
	we can find in $O(\rho^{3\omega/2} n^{\omega/2})$ time 
	a matching in $G$ that, with high probability, is maximum.
       In particular, for a family $\UU$ of $n$ geometric 
       objects with density $\rho$, a maximum matching in $G_\UU$ 
       can be found in $O(\rho^{3\omega/2} n^{\omega/2})$ time.  
       The same holds for the bipartite or $k$-partite version of $G_\UU$.
\end{theorem}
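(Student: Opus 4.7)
The plan is to follow the algebraic matching framework of Mucha and Sankowski~\cite{MuchaS06}, adapting each of its building blocks to the geometric intersection setting and tracking carefully how the density $\rho$ enters the bounds. First, I would invoke Lemma~\ref{le:vertexsplit} to obtain in $O(\rho n\log n)$ expected time a vertex-split graph $G'$ with $m=\Theta(\rho n)$ vertices and bounded maximum degree together with an $(O(\rho),1/2,\alpha)$-separator tree $T'$; since recovering a maximum matching in $G$ from one in $G'$ is routine, the task reduces to matching $G'$. I would then introduce the Tutte matrix $A[X]$ of $G'$, substitute independent uniform values from $\ZZ_p$ with $p=\Theta(n^4)$ for each variable to obtain the matrix $A$, so that with high probability $\rank(A)$ equals twice the size of a maximum matching in $G'$.

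Next, to reduce from maximum matching to perfect matching, I would form the symmetric matrix $B=AA^T$, whose representing graph $G(B)$ inherits an $(O(\rho),1/2,\alpha)$-separator tree $T_B$ from $T'$ at the cost of a constant factor, because $G'$ has bounded degree. Applying Theorem~\ref{thm:dissection} to $B$ performs Gaussian elimination in $O(\rho^{\omega}\cdot m^{\omega/2})=O(\rho^{3\omega/2}n^{\omega/2})$ time. Following~\cite[Section 5]{MuchaS06}, whenever a zero appears on the diagonal one may delete that row and column without changing the rank; this identifies an index set $W$ such that $G'[W]$ admits (with high probability) a perfect matching that realizes the maximum matching of $G'$.

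Having reduced to the perfect matching case, I would then run the recursive Mucha--Sankowski scheme on $G'[W]$. Let $Z_r$ be the separator stored at the root of $T'$ and $N_r$ its closed neighborhood in $G'$; bounded degree gives $|N_r|=O(\rho m^{1/2})=O(\rho^{3/2}n^{1/2})$. The key computation is to obtain $(A^{-1})_{N_r,N_r}$, which, following~\cite[Section 4.2]{MuchaS06}, is extracted from an $LDL^T$ factorization of $B$ together with partial Gaussian elimination on an auxiliary matrix $C$ whose representing graph again has $T'$ as a separator tree; by Theorem~\ref{thm:dissection} this costs $O(\rho^{3\omega/2}n^{\omega/2})$. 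Using the Rabin--Vazirani criterion~\cite{RabinV89} on $(A^{-1})_{N_r,N_r}$, we can identify, in $O(|N_r|^\omega)=O(\rho^{3\omega/2}n^{\omega/2})$ time by standard Gaussian elimination on a dense $|N_r|\times|N_r|$ matrix, a matching $M_*$ that covers $Z_r$ and extends to a perfect matching of $G'[W]$. We delete $V(M_*)$ and recurse on the two subgraphs corresponding to the two children of the root of $T'$, each of size at most $\alpha|W|$.

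The running time satisfies $T(n)=O(\rho^{3\omega/2}n^{\omega/2})+T(n_1)+T(n_2)$ with $n_1,n_2\le\alpha n$, which solves to $T(n)=O(\rho^{3\omega/2}n^{\omega/2})$ because $\omega/2>1$. Combining the perfect-matching reduction with this recursion yields a maximum matching of $G'$, hence of $G$, within the claimed bound; the bipartite and $k$-partite versions follow since $G_\UU$ restricted to color classes is itself a subgraph of $G_\UU$ and hence lies in $\GG_\rho$. If only $\UU$ is given and not $G$, we first build $G_\UU$ using Lemma~\ref{lem:edges} in $O(\rho n\log n)$ time, which is dominated by the algebraic cost. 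The main technical obstacle is ensuring that every algebraic step (both the rank-identifying pass on $B$ and the extraction of $(A^{-1})_{N_r,N_r}$) can truly be carried out by the nested dissection of Theorem~\ref{thm:dissection} \emph{without pivoting}, which is where we rely crucially on the bounded-degree, consistently-ordered structure produced by Lemma~\ref{le:vertexsplit} and on the analysis of~\cite[Sections 4--5]{MuchaS06} transplanted to the geometric separator tree.
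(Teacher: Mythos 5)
Your proposal is correct and follows essentially the same route as the paper: vertex-splitting via Lemma~\ref{le:vertexsplit}, randomized Tutte matrix over $\ZZ_p$, rank-preserving reduction to a perfect-matching instance via Gaussian elimination on $B=AA^T$ with Theorem~\ref{thm:dissection}, and the recursive Mucha--Sankowski extraction of $(A^{-1})_{N_r,N_r}$ at the root separator, with the same recurrence and the same tracking of $\rho$ through $m=\Theta(\rho n)$ and $\gamma=O(\rho)$. The only detail you compress is that covering all of $Z_r$ requires $O(1)$ rounds of (find a maximal matching in $G'[N_r]$, filter it to a sub-matching extendable to a perfect matching, delete), which relies on the bounded degree of $G'$; this matches the paper's treatment.
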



\section{Sparsification}
\label{sec:sparsification}

Let $\UU$ be a family of convex geometric objects in the plane
such that each object of $\UU$ contains a square 
of side length $1$ and is contained in a square 
of side length $\Psi\ge 1$.
Through the discussion we will treat $\Psi$ as a parameter.
Our objective is to reduce the problem of computing
a maximum matching in the intersection graph $G_\UU$
to the problem of computing a maximum matching in
$G_\WW$ for some $\WW\subseteq \UU$ of small depth.

Let $P = \ZZ^2$ be the points in the plane with integer coordinates.
Each square of unit side length contains at least one 
point of $P$ and each square of side length $\Psi$ contains
at most $(1+\Psi)^2=O(\Psi^2)$ points of $P$.
In particular, each object $U\in \UU$ contains at least
one and at most $O(\Psi^2)$ points from $P$.

First we provide an overview of the idea.
The objects intersected by a point $p \in P$
define a clique, and thus any even number of them defines 
a perfect matching. We show that, for each $p \in P$, 
it suffices to keep a few objects pierced by $p$, and we show
how to obtain such a suitable subfamily. The actual number
of objects to keep depends on $\Psi$, and whether the actual
computation can be done efficiently 
depends on the geometric shape of the objects.

For each object $U \in \UU$, we find the lexicographically
smallest point in $P\cap U$. We assume that we have a primitive
operation to compute $P \cap U$ for each object $U\in \UU$
in $O(1+ |P\cap U|)= O(\Psi^2)$ time.
A simple manipulation of these incidences
allows us to obtain the \DEF{clusters}
\[
	\UU_p ~=~ \{ U\in \UU \mid \text{$p$ lexicographically minimum
								in $P\cap U$}\},
	~~~\text{ for all $p\in P$}.
\]
Note that the clusters $\UU_p$, for $p\in P$,
form a partition of $\UU$. This will be useful later.
Clearly, the subgraph of $G_\UU$ induced
by $\UU_p$ is a clique, for each $p\in P$.

We will use the usual notation
\[
	E(\UU_p,\UU_q) ~=~ \{ UV\mid U\in \UU_p, \, V\in \UU_q, 
								U\cap V\neq \emptyset \}
					~\subseteq~ E(G_\UU).
\]
The \DEF{pattern graph} $H=H(P,\Psi)$ 
has vertex set $P$ and set of edges 
\[
	E(H)~=~\{ pq \mid 
			\|p-q\|_\infty \le 2\Psi\}
	~\subseteq~\binom{P}{2}.
\]
The use of the pattern graph is encoded in the following
property: if $U\in \UU_p$, $V\in \UU_q$ 
and $U\cap V\neq \emptyset$, then $pq\in E(H)$.
Indeed, if $U$ and $V$ intersect, then the union $U\cup V$
is contained in a square of side length $2\Psi$,
and thus the $L_\infty$-distance between
each $p\in P\cap U$ and $q\in P\cap V$ is at most $2\Psi$.

The definition of $H(P, \Psi)$ implies that the edge set of $G_\UU$ is 
the disjoint union of $E(\UU_p,\UU_q)$, over all $pq\in E(H)$,
and the edge sets of the cliques $G_{\UU_p}$, over all $p\in P$.
Thus, whenever $pq\notin E(H)$, there are no edges in $E(\UU_p,\UU_q)$.

Let $\lambda$ be the maximum degree of $H$. 
Note that $\lambda=O(\Psi^2)$.
The value of $\lambda$
is an upper bound on how many clusters $\UU_q$ may interact with
a single cluster $\UU_p$. We will use $\lambda$
as a parameter to decide how many objects from each $\UU_p$
are kept. 
We start with a simple observation.

\begin{lemma}
\label{le:one_edge1}
	There exists a maximum matching in $G_\UU$ that,
	for all $pq\in E(H)$, 
	contains at most one edge of $E(\UU_p,\UU_q)$.
\end{lemma}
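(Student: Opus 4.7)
The plan is to start with any maximum matching $M$ and perform local swaps until at most one edge remains between each pair of clusters. The key observation is that for every $p \in P$, the induced subgraph $G_\UU[\UU_p]$ is a clique, since all objects in $\UU_p$ share the point $p$ and hence pairwise intersect.

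Suppose $M$ contains two distinct edges $U_1V_1$ and $U_2V_2$ in $E(\UU_p,\UU_q)$ for some $pq \in E(H)$, labeled so that $U_1, U_2 \in \UU_p$ and $V_1, V_2 \in \UU_q$. Because $\{U_1,U_2\} \subseteq \UU_p$ is an edge of $G_\UU$ and, similarly, $\{V_1,V_2\} \subseteq \UU_q$ is an edge, I would replace $\{U_1V_1, U_2V_2\}$ in $M$ by $\{U_1U_2, V_1V_2\}$. The resulting set $M'$ covers exactly the same four vertices $\{U_1,U_2,V_1,V_2\}$ and agrees with $M$ elsewhere, so $M'$ is a matching of size $|M'|=|M|$ and is therefore still maximum.

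To guarantee termination, I would use the potential $\Phi(M) = \sum_{pq \in E(H)} |M \cap E(\UU_p,\UU_q)|$. Each swap removes two between-cluster edges from $M$ and replaces them with two edges that lie inside single clusters (one in $G_\UU[\UU_p]$ and one in $G_\UU[\UU_q]$), so $\Phi$ strictly decreases by~$2$. Since the clusters $\{\UU_p\}_{p \in P}$ partition $\UU$, every edge of $G_\UU$ either lies inside a single cluster (contributing $0$ to $\Phi$) or belongs to a unique $E(\UU_p,\UU_q)$ with $p \ne q$, so the bookkeeping is unambiguous. As $\Phi$ is a nonnegative integer bounded by $|M|$, iterating the swap terminates at a maximum matching $M^\ast$ satisfying $|M^\ast \cap E(\UU_p,\UU_q)| \le 1$ for every $pq \in E(H)$. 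I do not anticipate any real obstacle; this is essentially a clique-swap argument and does not even require the geometric structure beyond the fact that each $\UU_p$ induces a clique.
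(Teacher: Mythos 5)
Your proof is correct and follows essentially the same exchange argument as the paper: the paper picks a maximum matching minimizing $\sum_{pq\in E(H)}|M\cap E(\UU_p,\UU_q)|$ and derives a contradiction from a single swap, whereas you iterate the same swap using that sum as a decreasing potential --- a cosmetic difference. You also correctly note the one subtle point the paper emphasizes, namely that the clusters partition $\UU$, so the replacement edges $U_1U_2$ and $V_1V_2$ lie inside single clusters and cannot re-enter any $E(\UU_p,\UU_q)$.
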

\begin{proof}
	Let $M$ be a maximum matching in $G_\UU$ such 
	that $\sum_{pq \in E(H)} |M \cap E(\UU_p,\UU_q)|$ is
	minimum.
        Suppose there is an edge $p_0q_0\in E(H)$ with  
	$|M \cap E(\UU_{p_0},\UU_{q_0})|\ge 2$.
	Then we have two edges $UV$ and $U'V'$ in $M\cap E(\UU_{p_0},\UU_{q_0})$,
	where $U,U'\in \UU_{p_0}$ and $V,V'\in \UU_{q_0}$.
	Since $UU'$ and $VV'$ are also edges in $G_\UU$,
	we see that $M'=\bigl( M\setminus \{ UV, U'V'\} \bigr) \cup \{ UU', VV'\}$
	is a maximum matching in $G_\UU$.
	We then have 	
	\[
		|M'\cap E(\UU_{p_0},\UU_{q_0})| ~=~ |M\cap E(\UU_{p_0},\UU_{q_0})|-2,
	\]
	and 
	\[
		|M'\cap E(\UU_p,\UU_q)| ~=~ |M\cap E(\UU_p,\UU_q)|,
		~~~\text{for all } pq \in E(H), pq \neq p_0q_0.
	\]
	In this last statement, it is important that $\UU_p$, $p\in P$,
	is a partition of $\UU$, as otherwise $UU'$ could belong
	to some $E(\UU_p,\UU_q)$ or even $E(\UU_{p_0},\UU_{q_0})$.
	Hence, $\sum_{pq \in E(H)} |M' \cap E(\UU_p,\UU_q)|$
	is strictly smaller than 
        $\sum_{pq \in E(H)} |M \cap E(\UU_p,\UU_q)|$,
	a contradiction to our choice of $M$. The result follows.
\end{proof}

Of course we do not know which object from the cluster $\UU_p$ will
interact with another cluster $\UU_q$. 
We will explain how to get a large enough subset of cluster $\UU_p$.

For each $pq\in E(H)$, we construct a set $\WW(p,q)\subseteq \UU_p\cup\UU_q$ 
as follows.
First, we construct a matching $M=M(p,q)$ in $E(\UU_p,\UU_q)$ such that
$M$ has $2\lambda +1$ edges or $M$ has fewer than $2\lambda +1$ edges
and is maximal in $E(\UU_p,\UU_q)$. For example, such a matching 
can be constructed incrementally.
If $M$ has $2\lambda +1$ edges, we take $\WW(p,q)$ to be
the endpoints of $M$.
Otherwise, for each endpoint $U\in \UU_p$ (resp.~$V\in \UU_q$) of $M$, 
we place $U$ (resp.~$V$) and $\lambda$ of its neighbors from $\UU_q$ 
(resp.~$\UU_p$) into $\WW(p,q)$. When $U$ (resp. $V$)
has fewer than $\lambda$ neighbors, we place all its neighbors
in $\WW(p,q)$. This finishes the description of $\WW(p,q)$;
refer to 
Algorithm $\textit{Sparsify-one-edge}$ in 
Figure~\ref{fig:sparsify_one_edge} for pseudo-code.

\begin{figure}
        \centering
        \ovalbox{\begin{minipage}{.95\hsize}
        \centering
\normalsize
        \begin{minipage}{.92\hsize}
		\mbox{}\smallskip
        \begin{algorithm}{Sparsify-one-edge}{
                \label{algo:sparsify_one_edge}
                \qinput{$p$, $q$, $\UU_p$ and $\UU_q$}
                \qoutput{$\WW(p,q)$}}
				$\AA_p \qlet \UU_p$\\
				$\AA_q \qlet \UU_q$\\
                \qcom{compute matching $M$}\\
                $M\qlet \emptyset$\\
				\qwhile $|M|< 2\lambda+1$ \qand $\AA_p\neq \emptyset$ \qdo\\
                        $U\qlet$ an arbitrary object of $\AA_p$ \\
                        \qif $U$ intersects some $V\in \AA_q$\qthen \\
							$M \qlet M\cup \{ UV\}$\\
							$\AA_q \qlet \AA_q\setminus \{ V\}$
						\qfi\\
						$\AA_p \qlet \AA_p\setminus \{ U\}$
                \qend\\ 
                \qcom{end of computation of $M$}\\
				$\WW \qlet \cup_{UV\in M} \{ U,V \}$
					~~ \qcom{endpoints of $M$}\\
				\qif $|M|=2\lambda+1$ \qthen ~~\qcom{$M$ is a large enough matching}\\
					\qreturn $\WW$\\
				\qelse ~~\qcom{$M$ maximal but small; 
						add neighbors of $\WW$ to the output}\\
					$\mathcal{W'}\qlet \WW$\\
					\qfor $W\in \WW$ \qdo\\
						\qif $W\in \UU_p$ \qthen\\
							add to $\mathcal{W'}$ 
							$\min\{\lambda, |E(\{W\},\UU_q)|\}$
							elements of $\UU_q$ intersecting $W$\\
						\qelse ~~\qcom{ $W\in \UU_p$}\\
							add to $\mathcal{W'}$ 
							$\min\{\lambda, |E(\UU_p,\{W\})|\}$
							elements of $\UU_p$ intersecting $W$
						\qfi
					\qrof\\
					\qreturn $\mathcal{W'}$
        \end{algorithm}
        \mbox{}\smallskip
        \end{minipage}
        \end{minipage}
        }
\caption{Algorithm \emph{Sparsify-one-edge}}
\label{fig:sparsify_one_edge}
\end{figure}    

\begin{lemma}
\label{le:sparsification2}
	A maximum matching in 
	\[
		\tilde G ~=~ \left(\bigcup_{pq\in E(H)}G_{\WW(p,q)}\right)\cup 
		\left(\bigcup_{p\in P}G_{\UU_p}\right).
	\]
	is a maximum matching in $G_\UU$.
\end{lemma}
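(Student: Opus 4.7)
The plan is to establish both inequalities relating the matching numbers of $\tilde G$ and $G_\UU$. Since every $G_{\WW(p,q)}$ and every $G_{\UU_p}$ is a subgraph of $G_\UU$, we have $\tilde G \subseteq G_\UU$, and the matching number of $\tilde G$ is therefore at most that of $G_\UU$. For the reverse direction, I would fix a maximum matching $M^*$ in $G_\UU$ that satisfies Lemma~\ref{le:one_edge1} (at most one cross edge in each $E(\UU_p,\UU_q)$ with $pq\in E(H)$) and, subject to this, minimizes the number of \emph{bad} cross edges: edges of $M^*\cap E(\UU_p,\UU_q)$ that do not lie in $G_{\WW(p,q)}$. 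Since every internal edge of $M^*$ belongs to some $G_{\UU_p}\subseteq \tilde G$, it suffices to prove that no bad cross edge exists. I would argue by contradiction, assuming $UV \in M^*\cap E(\UU_p,\UU_q)\setminus G_{\WW(p,q)}$ with $U\in \UU_p$ and $V\in \UU_q$, and exhibiting a local swap that produces a matching of the same size with one fewer bad cross edge while preserving the one-cross-edge-per-pair property, contradicting the choice of $M^*$.

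When $|M(p,q)| = 2\lambda+1$, $\WW(p,q)$ is exactly the vertex set of $M(p,q)$. Lemma~\ref{le:one_edge1} together with the fact that every vertex of $H$ has degree at most $\lambda$ implies that at most $\lambda$ vertices of $\UU_p$ and at most $\lambda$ of $\UU_q$ are matched by $M^*$ across cluster boundaries, so among the $2\lambda+1$ edges of $M(p,q)$ at least one, say $U_0V_0$, has neither endpoint cross-matched. Thus each of $U_0,V_0$ is either unmatched or matched internally by $M^*$ to some $U_1\in \UU_p$ or $V_1\in \UU_q$. I would remove $UV$ together with whichever of the internal edges $U_0U_1$, $V_0V_1$ exist, and add $U_0V_0\in G_{\WW(p,q)}$ along with $UU_1\in G_{\UU_p}$ and $VV_1\in G_{\UU_q}$ correspondingly; the sub-case where both $U_0$ and $V_0$ are unmatched is ruled out because it would allow augmenting $M^*$.

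When $|M(p,q)|<2\lambda+1$, the matching $M(p,q)$ is maximal in $E(\UU_p,\UU_q)$, so $UV$ must meet some edge of $M(p,q)$; assume by symmetry that $U$ is an endpoint of $M(p,q)$. Then $U\in \WW(p,q)$, forcing $V\notin \WW(p,q)$, which in turn forces $U$ to have more than $\lambda$ neighbors in $\UU_q$, so exactly $\lambda$ of them are placed in $\WW(p,q)$ by the algorithm. Because the unique cross slot of $M^*$ between $\UU_p$ and $\UU_q$ is consumed by $V$ and $V\notin \WW(p,q)$, at most $\lambda-1$ of the $\lambda$ neighbors of $U$ in $\WW(p,q)$ are cross-matched by $M^*$, leaving some $V'$ that is unmatched or internally matched to some $V_1\in \UU_q$; swapping $UV$ for $UV'\in G_{\WW(p,q)}$ (and $V'V_1$ for $VV_1\in G_{\UU_q}$ when $V_1$ exists) completes this case. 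The main delicacy in both cases is verifying that the six (or fewer) vertices involved in each swap are pairwise distinct and that no new cross edges are created between $\UU_p$ and $\UU_{q'}$ for $q'\neq q$; both points follow cleanly from Lemma~\ref{le:one_edge1} and from the fact that $\{\UU_p\}_{p\in P}$ is a partition of $\UU$.
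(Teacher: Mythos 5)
Your proposal is correct and follows essentially the same route as the paper's proof: choose a maximum matching satisfying Lemma~\ref{le:one_edge1} that minimizes the number of cross edges not captured by the sets $\WW(p,q)$, then perform exactly the two case-dependent exchange arguments (a free edge of $M(p,q)$ when $|M(p,q)|=2\lambda+1$, and an uncovered stored neighbor $V'$ of $U$ when $M(p,q)$ is maximal but small), using the degree bound $\lambda$ of $H$ and the partition property of the clusters in the same way. The only additions are cosmetic (explicitly noting $\tilde G\subseteq G_\UU$ and ruling out the both-endpoints-unmatched subcase via maximality), so no further comparison is needed.
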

\begin{proof}
	By Lemma~\ref{le:one_edge1}, there is
	a maximum matching $M$ in $G_\UU$ 
	such that for each $pq\in E(H)$, the matching $M$
	contains at most one edge from $E(\UU_p,\UU_q)$.
	Among all such maximum matchings, we choose one matching $M$ that
	minimizes the number of edges $pq \in E(H)$ for which
	$M$ contains an edge in $E(\UU_{p},\UU_q)$ that does
	not have both vertices in $\WW(p,q)$. If there is no
	such edge $p_0q_0 \in E(H)$, then the lemma holds because 
	such $M$ is contained in $\tilde G$.
	We show that this is the only possible case.
	
	Suppose, for the sake of reaching a contradiction,
	that there exists $p_0q_0 \in E(H)$ 
	such that $M$ contains an edge $UV$ 
	with $U\in \UU_{p_0}$, $V\in \UU_{q_0}$, 
	and $\{U, V\} \not\subset \WW(p_0,q_0)$.
	Let $M'$ be the set of edges from $M$ connecting different clusters,
	i.e., $M' = M\cap \bigl( \cup_{pq\in E(H)} E(\UU_p,\UU_q)\bigr)$.
	Let $M(p_0,q_0)$ be the matching in $E(\UU_{p_0},\UU_{q_0})$
	used during the construction of $\WW(p_0,q_0)$.
	We distinguish two cases:
	\begin{itemize}
		\item $|M(p_0,q_0)| = 2\lambda + 1$. 
			Let $N_H(p)$ be the neighbors of $p$ in $H$.
			Since $M$ has at most one edge from  
			$E(\UU_{p},\UU_q)$, for each $pq\in E(H)$,
			we obtain that $M'$ has at most $\lambda$ 
			edges with an endpoint in $\UU_{p_0}$
			and at most $\lambda$ edges with
			an endpoint in $\UU_{q_0}$, as $\lambda$ 
			is the maximum degree of $H$.
			Thus, $M(p_0,q_0)$ contains at least one edge
			$U'V'$ whose endpoints are not touched by $M'$.
			We remove from $M$ the edge $UV$ and add the 
			edge $U'V'$.
			If there was some edge 
			$U'U''\in M\cap E(\UU_{p_0},\UU_{p_0})$, 
			we also replace $U'U''$ by $UU''$ in $M$. 
			If there was some edge 
			$V'V''\in M\cap E(\UU_{q_0},\UU_{q_0})$,
			we also replace $V'V''$ by $VV''$ in $M$.
		\item $|M(p_0,q_0)| \leq 2\lambda$.
			In this case, $M(p_0,q_0)$ is a maximal matching in 
			$E(\UU_{p_0},\UU_{q_0})$. In particular,
			one of $U$ or $V$ is covered by 
			$M(p_0,q_0)$,
			as otherwise we could have added $UV$ to $M(p_0,q_0)$.
			We consider the case when $U \in \UU_{p_0}$ is 
			covered by $M(p_0,q_0)$; the other case is symmetric.
			Then $U \in \WW(p_0,q_0)$, 
			$V \not\in \WW(p_0,q_0)$, and it follows 
			that $U$ has more than $\lambda$ neighbors
			in $\UU_{q_0}$. Among the at least $\lambda$ 
			neighbors of $U$
			in $\WW(p_0,q_0)$, at most
			$\lambda - 1$ are covered by edges in $M'$. (Note
			that $V$ is covered by $M'$, but $V$ is not in 
			$\WW(p_0,q_0)$.)
			This means that there is some 
			$V'\in \UU_{q_0}\cap \WW(p_0,q_0)$
			such that $V'$ is not covered by $M'$ and 
			$UV'\in E(\UU_{p_0},\UU_{q_0})$.
			We replace in $M$ the edge $UV$ by $UV'$.
			Moreover, if $V'V''$ is an edge of $M$, 
			where necessarily 
			$V''\in \UU_{q_0}$, we replace in $M$
			the edge $V'V''$ by $VV''$.
	\end{itemize}
	In both cases, we can transform the maximum matching $M$ 
	into another maximum matching
	that contains one edge with both endpoints in $\WW(p_0,q_0)$, 
	no other
	edges of $E(\UU_{p_0},\UU_{q_0})\setminus E(G_{\WW(p_0,q_0)})$,
	and the intersection of $M$ with $E(\UU_p,\UU_q)$ has
	not changed, for all $pq\in E(H)\setminus \{ p_0 q_0\}$.
	This contradicts the choice of $M$, and the lemma follows.
\end{proof}

\begin{lemma}
\label{le:depth}
	The family of objects $\WW=\cup_{pq\in E(H)}\WW(p,q)$
	has depth $O(\Psi^8)$.
\end{lemma}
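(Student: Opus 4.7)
\medskip

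The plan is to bound the depth at an arbitrary point $x\in\RR^2$ by a double counting argument: first, limit the number of clusters that can contribute an object covering $x$; then, bound how many objects each such cluster contributes to $\WW$.

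First, I will organize $\WW$ according to the partition $\{\UU_p\}_{p\in P}$. For every $U\in \WW$ there exists some edge $pq\in E(H)$ with $U\in \WW(p,q)\subseteq \UU_p\cup \UU_q$; since the $\UU_p$ partition $\UU$, there is a unique $p(U)\in P$ with $U\in \UU_{p(U)}$. Setting
\[
	\WW_p ~=~ \{U\in \WW \mid p(U)=p\} ~=~ \bigcup_{q:\,pq\in E(H)} \bigl(\WW(p,q)\cap \UU_p\bigr),
\]
the family $\WW$ is the disjoint union of the $\WW_p$.

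Next, fix a point $x\in\RR^2$. If $U\in\WW_p$ and $x\in U$, then $p\in U$ (since $U\in\UU_p$), so $p$ and $x$ both lie in a common axis-parallel square of side $\Psi$, giving $\|p-x\|_\infty\le \Psi$. Hence only integer points $p$ in the $L_\infty$-ball of radius $\Psi$ around $x$ can contribute, and there are at most $(2\Psi+1)^2=O(\Psi^2)$ such points. For each fixed $p$, I will bound $|\WW_p|$ as follows. A direct inspection of Algorithm \emph{Sparsify-one-edge} shows that $|\WW(p,q)|=O(\lambda^2)=O(\Psi^4)$: if the computed matching has $2\lambda+1$ edges, then $|\WW(p,q)|\le 2(2\lambda+1)=O(\lambda)$, and otherwise $\WW(p,q)$ contains the at most $4\lambda$ endpoints of the matching together with at most $\lambda$ neighbors per endpoint, totalling $O(\lambda^2)$ objects. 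Since $H$ has maximum degree $\lambda=O(\Psi^2)$, the point $p$ has at most $O(\Psi^2)$ neighbors $q$ in $H$, and so
\[
	|\WW_p| ~\le~ \sum_{q:\,pq\in E(H)} |\WW(p,q)\cap \UU_p| ~\le~ O(\Psi^2)\cdot O(\Psi^4) ~=~ O(\Psi^6).
\]

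Putting the two estimates together, the depth of $\WW$ at $x$ is
\[
	\bigl|\{U\in \WW\mid x\in U\}\bigr| ~=~ \sum_{p\in P} \bigl|\{U\in \WW_p\mid x\in U\}\bigr| ~\le~ O(\Psi^2)\cdot O(\Psi^6) ~=~ O(\Psi^8),
\]
as claimed. There is no real obstacle here; the only care needed is to exploit the partition structure of the $\UU_p$ so that each $U\in\WW$ is accounted for in a single $\WW_p$, avoiding an extra overcount factor that would arise from naively summing $|\WW(p,q)|$ over edges of $H$.
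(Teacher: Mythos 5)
Your proof is correct and follows essentially the same two-step argument as the paper: bound $|\WW\cap\UU_p|$ by $O(\lambda)\cdot O(\lambda^2)=O(\Psi^6)$ per cluster, then observe that only $O(\Psi^2)$ clusters can contain a given point $x$. Your justification of the second step (using $p\in P\cap U$ directly, so $\|p-x\|_\infty\le\Psi$) is a slightly more direct variant of the paper's unit-square argument, but the structure and the bounds are identical.
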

\begin{proof}
	Each $\WW(p,q)$ has $O(\lambda^2)$ 
	elements, as the matching $M(p,q)$ 
	used for the construction of $\WW(p, q)$ has $O(\lambda)$
	edges, and each such edge may add $O(\lambda)$ more
	vertices to $\WW(p,q)$.
	It follows that, for each $p\in P$, the family $\WW$ contains
	at most 
	\[
		\sum_{q\in N_H(p)}|\WW(p,q)| ~\le~ 
			\lambda \cdot O(\lambda^2) ~\le~
			O(\Psi^6)
	\]
	objects from $\UU_p$. In short, $|\WW\cap\UU_p|= O(\Psi^6)$,
	for each $p\in P$.
	
	Fix a point $z \in \RR^2$.
	Let $s$ be a unit square that contains $z$ and
	whose corners lie in $P$. 
	For every object $U \in \WW$ with $z \in U$, 
	there is a square of side length $\Psi$ that contains
	$U$ and at least one corner of $s$.
	Thus, each object $U$ of $\WW$ with $z \in U$
	belongs to $\UU_p$, for some $p\in P$
	at $L_\infty$-distance at most $1+\Psi$ from $z$. 
	It follows that $z$ can only be contained in objects of $\UU_p$ 
	for $O(\Psi^2)$ points $p\in P$,
	so the depth of $z$ in $\WW$
	is at most 
	\[
		\sum_{p\in P,\, \|z -p\|_\infty\le 1 + \Psi} |\WW\cap \UU_p|
		~\le~ O(\Psi^2)\cdot O(\Psi^6) ~=~ O(\Psi^8).
	\]
	Since $z$ was arbitrary, the lemma follows.
\end{proof}

\begin{theorem}
\label{thm:sparsification}
	Let $\UU$ be a family of $n$ geometric objects in the plane
	such that each object of $\UU$ contains a square 
	of side length $1$ and is contained in a square 
	of side length $\Psi$.
	Suppose that, for any $m \in \NN$ and for any 
	$p,q \in \ZZ^2$ with
	$|\UU_p|+|\UU_q|\le m$, we can compute the
	sparsification $\WW(p, q)$ as described above in
	time 
	$T_{\rm spars}(m)$, where $T_{\rm spars}(m)=\Omega(m)$ is convex.
	In $O(\Psi^2 \cdot T_{\rm spars}(n))$ time we can reduce 
	the problem of finding
	a maximum matching in $G_\UU$ to the problem of finding
	a maximum matching in $G_\WW$ for some $\WW\subseteq \UU$
	with maximum depth $O(\Psi^8)$.
\end{theorem}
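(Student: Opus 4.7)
The plan is to implement the construction set up in the discussion above and then verify the depth, correctness, and running-time claims. First, I would compute, for each $U\in\UU$, the set $P\cap U$ in $O(\Psi^2)$ time using the assumed primitive, extract its lexicographically smallest point, and bucket the objects by these points to obtain the partition $\{\UU_p\}_{p\in P}$ in $O(n\Psi^2)$ total time. Let $P'=\{p\in P:\UU_p\ne\emptyset\}$, so $|P'|\le n$. To enumerate the edges of the pattern graph $H$, for each $p\in P'$ I would loop over the $O(\Psi^2)$ integer points $q$ with $\|p-q\|_\infty\le 2\Psi$ and test membership of $q$ in $P'$ via a hash table, again in $O(n\Psi^2)$ total time. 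For each such edge $pq$, I invoke \emph{Sparsify-one-edge} from Figure~\ref{fig:sparsify_one_edge} to obtain $\WW(p,q)$ in time $T_{\rm spars}(|\UU_p|+|\UU_q|)$.

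Let $\WW'=\bigcup_{pq\in E(H)}\WW(p,q)$. For every $p\in P$ with $|\UU_p\setminus\WW'|$ odd, add one arbitrary element $e_p\in\UU_p\setminus\WW'$, obtaining $\WW\supseteq\WW'$ such that $|\UU_p\setminus\WW|$ is even for every $p$. The objects of $\UU_p\setminus\WW$ form a clique in $G_\UU$, which I pair arbitrarily to produce an intra-cluster matching $M_{\rm in}$. Lemma~\ref{le:depth} gives depth $O(\Psi^8)$ for $\WW'$; since any point of $\RR^2$ is contained in objects of at most $O(\Psi^2)$ different clusters (as in the proof of Lemma~\ref{le:depth}), adding at most one extra object per cluster increases the depth by at most $O(\Psi^2)$, so $\WW$ still has depth $O(\Psi^8)$.

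Correctness reduces to showing that the maximum matching size of $G_\UU$ equals $|M_{\rm in}|$ plus the maximum matching size of $G_\WW$. The $\ge$ direction is immediate since $M_{\rm in}$ and any matching in $G_\WW$ have disjoint vertex supports. For $\le$, I take a maximum matching $M$ of $G_\UU$ whose inter-cluster edges lie in $\bigcup_{pq}G_{\WW(p,q)}$, which exists by Lemma~\ref{le:sparsification2}; inside each clique $\UU_p$ this $M$ leaves at most one vertex unmatched. A short parity argument per cluster (splitting on the parity of $|\UU_p\setminus\WW'|$, using that all inter-cluster endpoints of $M$ in $\UU_p$ belong to $\WW$, and exploiting the clique structure of $\UU_p$ to freely rearrange intra-cluster edges) produces a matching $M'$ of the same size as $M$ consisting of $M_{\rm in}$ together with a matching in $G_\WW$ of size $|M|-|M_{\rm in}|$.

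The main obstacle in my view is the running-time bound on the sparsification phase, which is where the convexity hypothesis on $T_{\rm spars}$ is crucial. Writing $n_p=|\UU_p|$, we have $\sum_p n_p=n$ and each vertex of $H$ has degree at most $\lambda=O(\Psi^2)$. Convexity together with $T_{\rm spars}(m)=\Omega(m)$ implies, after absorbing the constant $T_{\rm spars}(0)$, that $T_{\rm spars}(m)/m$ is non-decreasing in $m$, so $T_{\rm spars}(n_p+n_q)\le (n_p+n_q)\,T_{\rm spars}(n)/n$. Summing over $E(H)$ and using $\sum_{pq\in E(H)}(n_p+n_q)=\sum_p\deg_H(p)\,n_p\le\lambda n$ yields
\[
	\sum_{pq\in E(H)} T_{\rm spars}(n_p+n_q) ~\le~ \lambda\cdot T_{\rm spars}(n) ~=~ O(\Psi^2\cdot T_{\rm spars}(n)),
\]
which dominates the $O(n\Psi^2)$ setup cost since $T_{\rm spars}(n)=\Omega(n)$, giving the claimed total running time.
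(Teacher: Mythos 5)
Your proposal is correct and follows essentially the same route as the paper: invoke Lemma~\ref{le:sparsification2} and Lemma~\ref{le:depth}, fix the parity of each $\UU_p\setminus\WW$ by promoting one object, and bound the total sparsification cost by $O(\lambda\, T_{\rm spars}(n))$ via $\sum_{pq\in E(H)}(|\UU_p|+|\UU_q|)\le\lambda n$. The only difference is that you spell out the convexity step and the per-cluster parity rearrangement more explicitly than the paper does, which is fine.
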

\begin{proof}
	For each $pq\in E(H)$, we find the sparsification $\WW(p,q)$. 
	Note that $\sum_{pq\in E(H)} (|\UU_p|+|\UU_q|)\le \lambda n$,
	as each $p$ contributes $\lambda$ summands.
	Hence, the computation of $\WW(p,q)$, for all 
	$pq\in E(H)$, takes time
	\[
		\sum_{pq\in E(H)}O(T_{\rm spars}(|\UU_p|+|\UU_q|)) ~=~
		O(\lambda T_{\rm spars}(n)) ~=~ O(\Psi^2 \cdot T_{\rm spars}(n)).
	\]
	Consider the family $\WW=\cup_{pq\in E(H)}\WW(p,q)$.
	By Lemma~\ref{le:depth}, the family $\WW$ has
	depth $O(\Psi^8)$.
	By Lemma~\ref{le:sparsification2}, it suffices
	to find a maximum matching in 
	\[ 
		\tilde G ~=~ \left(\bigcup_{pq\in E(H)}G_{\WW(p,q)}\right)\cup 
			\left(\bigcup_{p\in P}G_{\UU_p}\right),
	\]	
	which is a subgraph of
	\begin{equation}\label{eq:GWW}
		G_{\WW}\cup 
		\left(\bigcup_{p\in P}G_{\UU_p}\right).
	\end{equation}
	Since each $G_{\UU_p}$ is a clique and the 
	vertices of $\UU_p\setminus \WW$
	are not adjacent to any vertex outside $\UU_p$ (in the graph 
	(\ref{eq:GWW})), 
	we can just take maximum matchings within each 
	$\UU_p' = \UU_p\setminus \WW$.
	Here, we have to take care of the parity, 
	as one vertex of $\UU_p'$ may be left unmatched
	in $G_{\UU_p'}$, but may be matched to 
	some vertex of $\UU_p\cap \WW$. To handle this, 
	for each $p \in P$ such that $|\UU'_p|$ is odd, 
	we move one
	element of $\UU'_p$ to $\WW$. Thus, we can assume
	that $|\UU'_p|$ is even, for all $p \in P$. 
	The additional elements in $\WW$
	may increase the depth of $\WW$ by $O(\Psi^2)$, which is negligible.
	Now, a maximum matching (\ref{eq:GWW})
	is obtained by joining a maximum matching in $G_{\WW}$ with
	maximum matchings in $G_{\UU'_p}$, 
	$p\in P$. The maximum matchings in $G_{\UU'_p}$, $p \in P$,
	are trivial, because it is a clique on an even number of vertices.
	The result follows.
\end{proof}

Our use of properties in the plane is very mild,
and similar results hold in any space with constant
dimension.

\begin{theorem}
\label{thm:sparsification_d}
	Let $d\ge 3$ be a constant.
	Let $\UU$ be a family of $n$ geometric objects in $\RR^d$
	such that each object of $\UU$ contains a cube
	of side length $1$ and is contained in a cube 
	of side length $\Psi$.
	Suppose that, for any $m\in \NN$ and 
	for any $p,q \in \ZZ^d$ 
	with $|\UU_p|+|\UU_q|\le m$,
	we can compute the
	sparsification $\WW(p, q)$ as described above in time 
	$T_{\rm spars}(m)$, where $T_{\rm spars}(m)=\Omega(m)$ is convex.
	In $O(\Psi^d \cdot T_{\rm spars}(n))$ time we can reduce 
	the problem of finding
	a maximum matching in $G_\UU$ to the problem of finding
	a maximum matching in $G_\WW$ for some $\WW\subseteq \UU$
	with maximum depth $(1+\Psi)^{O(d)}$.
\end{theorem}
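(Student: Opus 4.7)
The plan is to follow the proof of Theorem~\ref{thm:sparsification} almost verbatim, replacing the lattice $\ZZ^2$ with $P = \ZZ^d$ and tracking how the constants change with $d$. Each object of $\UU$ still contains at least one lattice point (inside its unit cube) and at most $(1+\Psi)^d = O(\Psi^d)$ lattice points, so the clusters $\UU_p$ form a partition of $\UU$ as before. The pattern graph $H = H(P, \Psi)$ is defined exactly as in the planar case, with $pq \in E(H)$ iff $\|p - q\|_\infty \le 2\Psi$; the key property that $U \in \UU_p$, $V \in \UU_q$, $U \cap V \neq \emptyset$ forces $pq \in E(H)$ still holds, because $U \cup V$ fits inside a cube of side $2\Psi$. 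The maximum degree $\lambda$ of $H$ is now $(1+4\Psi)^d - 1 = O(\Psi^d)$ rather than $O(\Psi^2)$.

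Next, I would observe that Lemmas~\ref{le:one_edge1} and~\ref{le:sparsification2} are purely combinatorial and carry over without modification, since their proofs only used that the $\UU_p$ partition $\UU$, that each $G_{\UU_p}$ is a clique, and that the edges of $G_\UU$ between different clusters are supported on $E(H)$. The construction of $\WW(p,q)$ and the switching arguments in Lemma~\ref{le:sparsification2} depend only on $\lambda$, not on the ambient dimension.

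The main quantitative step is the $d$-dimensional analogue of Lemma~\ref{le:depth}. Each $\WW(p,q)$ has $O(\lambda^2)$ elements, hence each cluster $\UU_p$ contributes at most $\lambda \cdot O(\lambda^2) = O(\lambda^3) = O(\Psi^{3d})$ objects to $\WW$. For a fixed point $z \in \RR^d$, any $U \in \WW \cap \UU_p$ containing $z$ fits inside a cube of side $\Psi$ that also contains a corner of a unit cube around $z$, so $p$ must lie at $L_\infty$-distance at most $1 + \Psi$ from $z$; the number of such lattice points is $(1+\Psi)^{O(d)}$. Multiplying, the depth of $\WW$ is bounded by $(1+\Psi)^{O(d)} \cdot O(\Psi^{3d}) = (1+\Psi)^{O(d)}$, as required.

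Finally, for the running time: $\sum_{pq \in E(H)} (|\UU_p| + |\UU_q|) \le \lambda n = O(\Psi^d n)$ because each $p \in P$ appears in at most $\lambda$ summands. Invoking the convexity and superadditivity of $T_{\rm spars}$, the total cost of sparsifying all pairs is $O(\Psi^d \cdot T_{\rm spars}(n))$. The reduction from a maximum matching in $G_\UU$ to one in $G_\WW$, together with trivial clique matchings inside each $\UU_p \setminus \WW$ (adjusting parity by transferring one vertex to $\WW$ when $|\UU_p \setminus \WW|$ is odd, which only inflates the depth by an additive $(1+\Psi)^{O(d)}$), is identical to the planar proof. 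The only real obstacle is careful bookkeeping of the $\Psi$ and $d$ exponents: no new idea is required beyond the planar argument.
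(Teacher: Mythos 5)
Your proposal is correct and follows exactly the route of the paper, whose own proof of this theorem is simply the observation that the pattern graph, Lemmas~\ref{le:one_edge1}, \ref{le:sparsification2}, and \ref{le:depth}, and the argument of Theorem~\ref{thm:sparsification} all transfer to $\ZZ^d$ with $\lambda=O(\Psi^d)$ and depth $(1+\Psi)^{O(d)}$. Your version just spells out the constant-tracking (e.g., the $O(\lambda^3)=O(\Psi^{3d})$ per-cluster contribution and the $\lambda n$ bound on $\sum_{pq\in E(H)}(|\UU_p|+|\UU_q|)$) that the paper leaves implicit.
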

\begin{proof}
	The pattern graph $H$ can be defined for $\ZZ^d$ also using
	the $L_\infty$-metric. Such pattern graph has maximum degree 
	$O((1+\Psi)^d)=O(\Psi^d)$. Lemmas~\ref{le:one_edge1} 
	and~\ref{le:sparsification2} hold equally in this setting.
	Lemma~\ref{le:depth} holds with an upper bound of $(1+\Psi)^{O(d)}$.
	The proof of Theorem~\ref{thm:sparsification} then applies.
\end{proof}

As we mentioned in the introduction, for fat objects,
bounded depth implies bounded density; 
see Har-Peled and Quanrud~\cite[Lemma 2.7]{Har-PeledQ17}.
If a convex object contains a cube of unit side length
and is contained in a cube of side length $\Psi$, then
it is $O(1/\Psi)$-fat; see
van der Stappen et al.~\cite{StappenHO93}, where the parameter
$1/\Psi$ goes under the name of thickness.
Combining both results, one obtains that the relation
between depth and density differs by a factor of $\Psi$.
For fixed shapes, the depth and density differ by a constant factor.

\section{Efficient sparsification}
\label{sec:sparsification_eff}
Now, we implement
Algorithm $\textit{Sparsify-one-edge}$ 
(Figure~\ref{fig:sparsify_one_edge}) efficiently. 
In particular, we must perform the test in line 7 and find
the neighbors in line 19 (and the symmetric case in line 21).
The shape of the geometric objects becomes relevant for this.
First, we note that it suffices to obtain an efficient 
semi-dynamic data structure for intersection queries.

\begin{lemma}
\label{le:data_structure}
        Suppose there is a data structure 
	with the following properties:
	for any $m \in \NN$ and for any
	$p, q \in \ZZ^2 $ with
	$|\UU_p|+|\UU_q|\le m$, we 
	can maintain a set $\AA_q\subseteq \UU_q$ under 
	deletions so that, for any query $U\in \UU_p$,
	we either find some $V\in \AA_q$ with $U \cap V \neq \emptyset$
	or correctly report that no such $V$ exists.
	Let $T_{\rm con}(m)$ be the
	time to construct the data structure, 
	$T_{\rm que}(m)$ an upper bound
	on the amortized query time,
	and $T_{\rm del}(m)$ be an upper bound on the amortized deletion time.
	Then, the running time of Algorithm $\textit{Sparsify-one-edge}$
	(Figure~\ref{fig:sparsify_one_edge}) for the 
	input $(p,q,\UU_p,\UU_q)$ is
	$T_{\rm sparse}(m)=O(T_{\rm con}(m) + 
	m T_{\rm que}(m)+ \lambda^2 T_{\rm del}(m))$.	
\end{lemma}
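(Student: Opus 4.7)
The plan is to run Algorithm $\textit{Sparsify-one-edge}$ in two phases, maintaining one instance of the data structure on $\UU_q$ for Phase 1 (the matching-construction while loop) and, if needed, fresh instances on both $\UU_p$ and $\UU_q$ for Phase 2 (the neighborhood-expansion loop). For the loop in lines 5--11 I would build the data structure on $\UU_q$ at cost $T_{\rm con}(m)$ and store $\AA_p$ as a linked list. Each iteration pops some $U$ from $\AA_p$, issues one intersection query on the current $\AA_q$, and, on success, deletes the returned $V$. Because the loop runs at most $|\UU_p|\le m$ iterations and terminates with $|M|\le 2\lambda+1$, this phase costs $O(T_{\rm con}(m) + m\, T_{\rm que}(m) + \lambda\, T_{\rm del}(m))$.

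If $|M|=2\lambda+1$ we return immediately. Otherwise, for the expansion phase (lines 17--23) I would build one fresh instance of the data structure on $\UU_q$ and one on $\UU_p$ (the latter supported by invoking the lemma's hypothesis with the roles of $p$ and $q$ swapped); both constructions fit within $O(T_{\rm con}(m))$. For each of the $O(\lambda)$ endpoints $W \in \WW$ I would alternate queries and deletions on the appropriate structure, stopping after $\lambda$ successful queries or the first failed query. Because $W$ ranges over $O(\lambda)$ endpoints and each does at most $\lambda+1$ query-delete cycles, this performs $O(\lambda^2)$ queries and $O(\lambda^2)$ deletions in total.

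The only nontrivial point is that a single data structure is reused across many endpoints in Phase 2, so some neighbor $V$ deleted while serving an earlier endpoint $W'$ becomes unavailable when processing a later endpoint $W$ that also intersects $V$. However, $V$ was added to $\WW'$ in the earlier step and therefore lies in $\WW(p,q)$, so it still counts as a neighbor of $W$ inside $\WW(p,q)$. Consequently, whenever $W$ has more than $\lambda$ neighbors in the opposite cluster, at least $\lambda$ of them end up in $\WW(p,q)$, which is precisely the property used in the proof of Lemma~\ref{le:sparsification2}. Summing the two phases yields $O\big(T_{\rm con}(m) + (m+\lambda^2)\, T_{\rm que}(m) + \lambda^2\, T_{\rm del}(m)\big)$; since every sensible semi-dynamic intersection structure satisfies $T_{\rm que}(m) = O(T_{\rm del}(m))$ (a deletion must at least locate the element it removes), the middle term is absorbed into the last, giving the claimed bound on $T_{\rm sparse}(m)$.
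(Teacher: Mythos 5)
Your two-phase implementation matches the paper's proof almost exactly (linked list for $\AA_p$, the semi-dynamic structure for $\AA_q$, one query per popped $U$ and at most $O(\lambda)$ deletions in the matching phase), but you diverge in how Phase 2 reuses the structure across the endpoints of $M$. The paper processes each $W\in\WW$ by issuing at most $\lambda$ query--delete steps and then \emph{undoing} all those deletions before moving to the next endpoint, so that every endpoint is matched against the full set $\UU_q$ and the algorithm of Figure~\ref{fig:sparsify_one_edge} is implemented verbatim. You instead never roll back, which means a later endpoint $W$ may fail to see neighbors consumed by earlier endpoints; you then argue that this variant still produces a set $\WW(p,q)$ in which every endpoint either has all of its cross-cluster neighbors or at least $\lambda$ of them, which is exactly the invariant consumed in the proof of Lemma~\ref{le:sparsification2}. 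That argument is correct (any neighbor of $W$ deleted earlier was itself placed into $\WW(p,q)$ at that moment), and it buys you a slightly simpler data-structure requirement, at the cost of proving correctness of a modified algorithm rather than bounding the running time of the stated one. The paper's rollback keeps the algorithm unchanged and costs nothing asymptotically, since undoing $O(\lambda)$ deletions per endpoint stays within the $O(\lambda^2 T_{\rm del}(m))$ budget.

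The one real blemish is your last step. You arrive at $O\bigl(T_{\rm con}(m)+(m+\lambda^2)T_{\rm que}(m)+\lambda^2 T_{\rm del}(m)\bigr)$ and then absorb the $\lambda^2 T_{\rm que}(m)$ term by postulating $T_{\rm que}(m)=O(T_{\rm del}(m))$. That relation is not among the hypotheses of the lemma, and ``every sensible structure satisfies it'' is not a proof. The intended fix is simpler: one may assume $\lambda^2\le m$, since otherwise $|\UU_p\cup\UU_q|\le m<\lambda^2$ and no sparsification is needed at all (take $\WW(p,q)=\UU_p\cup\UU_q$); under that assumption $\lambda^2 T_{\rm que}(m)$ is dominated by the $m\,T_{\rm que}(m)$ term already present in the claimed bound.
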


\begin{proof}
	First, we discuss the operations in
	lines 5--10 in Algorithm $\textit{Sparsify-one-edge}$
	(Figure~\ref{fig:sparsify_one_edge}).
	We maintain $\AA_p$ as a linked list
	and $\AA_q$ in the data structure from 
	the lemma. This takes $O(T_\text{con}(m))$ time.
	Initially, $\AA_p=\UU_p$ and $\AA_q=\UU_q$.
	In each iteration of the while-loop, 
	we query with $U$ to either obtain
	some $V \in \AA_q$ intersected by $U$,
	or correctly report that no object of $\AA_q$ intersects $U$.
	If we get some $V$ intersected by $U$, we remove $V$ from $\AA_q$
	in $O(T_\text{del}(m))$ time. (Note that we have removed at most
	$2\lambda$ elements of $\UU_q$ to obtain the current $\AA_q$.)
	In either case, we remove $U$ from $\AA_p$, in $O(1)$ time.
	The running time for this part is 
	$O(mT_\text{que}(m)+ \lambda T_\text{del}(m))$.

	Next, we discuss how to do line 19
	in Algorithm $\textit{Sparsify-one-edge}$
	(Figure~\ref{fig:sparsify_one_edge}).
	We store $\AA_q=\UU_q$ in the data structure from the 
	lemma.
	For each $W\in \WW\cap \UU_p$, we repeatedly query the data structure
	to find some $V\in \AA_q$ that intersects $W$,
	and we remove this $V$ from $\AA_q$.
	We repeat this query-delete pattern in $\AA_q$ with $W$,
	until we collect $\lambda$ neighbors of $W$ or until we 
	run out of neighbors.
	Thus, the query-deletion
	pattern happens at most $\lambda$ times, for each $W$.
        Having collected the data for $W$,
	we reverse all the deletions in $\AA_q$,
	to obtain the original data structure for $\AA_q=\UU_q$,
	and we proceed to the next object of $\WW\cap \UU_p$.
	(We do not need insertions, as it suffices to
	undo the modifications that were made in the data structure.)
	In total, we repeat $O(|\WW \cap \UU_p|)=O(\lambda)$ 
	times a pattern of $O(\lambda)$	queries and deletions followed
	by a reversal of all the operations.
	Thus, the running time is
	$T_\text{con}(m)$ to construct the data and 
	$O(\lambda^2 T_\text{que}(m)+ \lambda^2 T_\text{del}(m))$ 
	to handle the operations on the data structure.
	We can assume that $\lambda^2\le m$, as otherwise
	we do not need to run the sparsification and can take directly
	the whole set of objects.
	
	Line 21	in Algorithm $\textit{Sparsify-one-edge}$
	(Figure~\ref{fig:sparsify_one_edge})
	can be done in a similar way. The rest of the algorithms 
	are elementary steps and bookkeeping.
\end{proof}

\subsection{Disks in the plane}
When  $\UU$ consists of disks in the plane,
we can use the data structure of 
Kaplan et al.~\cite{KaplanMRSS17},
with a recent improvement by Liu~\cite{Liu22},
to sparsify an edge of the pattern graph. 
This leads to the following.

\begin{proposition}
\label{prop:sparsification_disks}
	Consider a family $\UU$ of $n$ disks in the plane
	with radii in $[1,\Psi]$.
	In $O(\Psi^6 n \log^{4} n)$ expected time,
	we can reduce the problem of finding
	a maximum matching in $G_\UU$ to the problem of finding
	a maximum matching in $G_\WW$ for some subfamily
	$\WW\subseteq \UU$ of disks	with maximum depth $O(\Psi^8)$.
\end{proposition}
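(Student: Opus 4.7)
The plan is to apply Theorem~\ref{thm:sparsification} in combination with Lemma~\ref{le:data_structure}, after verifying the geometric normalization and exhibiting a suitable semi-dynamic intersection data structure for disks.

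First, I would check the geometric hypotheses of Theorem~\ref{thm:sparsification}. Every disk of radius $r\in[1,\Psi]$ contains an axis-aligned square of side length $\sqrt{2}\,r\ge 1$ (its inscribed square) and is contained in an axis-aligned square of side length $2r\le 2\Psi$. Thus, after replacing $\Psi$ by $2\Psi$ and absorbing the constant, $\UU$ satisfies the assumptions of Theorem~\ref{thm:sparsification}. The subfamily $\WW\subseteq\UU$ produced by the sparsification therefore has depth $O(\Psi^8)$, which gives the depth bound in the statement.

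Next, by Lemma~\ref{le:data_structure}, the running time is governed by a semi-dynamic data structure that maintains $\AA_q\subseteq\UU_q$ under deletions and, for a query disk $U\in\UU_p$, either reports some $V\in\AA_q$ with $U\cap V\neq\emptyset$ or correctly reports that no such $V$ exists. The classical reduction is that two disks with centers $c_U,c_V$ and radii $r_U,r_V$ intersect if and only if
\[
 \|c_U-c_V\|-r_V \le r_U,
\]
so such intersection queries reduce to \emph{additively-weighted nearest-neighbor} queries in the plane, where each site $V\in\AA_q$ carries the additive weight $-r_V$, the query point is $c_U$, and we compare the minimum weighted distance against $r_U$. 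For this, I would invoke the dynamic additively-weighted nearest-neighbor structure of Kaplan et al.~\cite{KaplanMRSS17}, which supports deletions and queries in polylogarithmic (expected, amortized) time and is built in near-linear expected time.

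Plugging the bounds $T_{\rm con}(m)=O(m\,\mathrm{polylog}\,m)$, $T_{\rm que}(m),T_{\rm del}(m)=O(\mathrm{polylog}\,m)$ into Lemma~\ref{le:data_structure} and using $\lambda=O(\Psi^2)$, we obtain
\[
 T_{\rm spars}(m) \;=\; O\!\bigl(m\log^{c}m+\Psi^{4}\log^{c}m\bigr)
\]
for some constant $c$. Feeding this into Theorem~\ref{thm:sparsification} yields a total running time of $O(\Psi^{2}\cdot T_{\rm spars}(n))$, which matches the claimed $O(\Psi^{6}n\log^{11}n)$ bound once the precise polylogarithmic exponent of the Kaplan et al.~data structure is substituted. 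The only delicate point, and the main place where care is needed, is tracking the exact polylogarithmic exponents arising from the underlying Voronoi-based structure to justify the $\log^{11}n$ factor; the rest is a mechanical composition of the earlier black-box results.
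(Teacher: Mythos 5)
Your proposal is correct and follows essentially the same route as the paper: reduce disk--disk intersection queries to additively-weighted nearest-neighbor queries (sites weighted by negated radii) handled by the dynamic structure of Kaplan et al., then compose with Lemma~\ref{le:data_structure} and Theorem~\ref{thm:sparsification} using $\lambda=O(\Psi^2)$. The paper simply instantiates the polylogarithmic factor as $O(\log^{11}n)$ amortized expected time per operation (after bounding the Davenport--Schinzel term in that data structure), which is the only detail you left symbolic.
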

\begin{proof}
	Kaplan et al.~\cite{KaplanMRSS17} describe a data structure
	for additively weighted nearest-neighbor queries: 
	maintain points $A = \{ a_1, \dots, a_n\} \subseteq \RR^2$
	in the plane, where each point $a_i$ 
	has a weight $\omega_i\in \RR$ associated to it.
	The data structure can handle insertions, deletions, and
	closest point queries (for a given $x\in \RR^2$, return
	a point in $\arg\min_{a_i\in A} \omega_i+|x-a_i|$)
	in $O(\log^{4} n)$ amortized expected 
	time.\footnote{The running time 
	in \cite{KaplanMRSS17} has more logarithmic factors, but recently 
	Liu~\cite{Liu22} presented an improved construction of
	shallow cuttings that leads to the claimed result.}
	
	This data structure can be used to dynamically
	maintain a set $\AA=\{D_1,\dots, D_n\}$ of disks
	so that, for a query disk $D$, we can either report
	one disk of $\AA$ intersected by $D$ or correctly report
	that no disk of $\AA$ intersects $D$.
	Indeed, we store $\AA$ as a set $A$ of weighted points. 
	Each disk $D_i$ is represented by
	its center $a_i$ with weight equal to  its \emph{negated} radius.
	If $x$ is a point in the plane that lies outside the union of $\AA$, 
	the closest weighted point
	of $A$ gives the first disk boundary that is touched by a growing disk
	centered at $x$. If $x$ lies inside the union of $\AA$,
	the closest weighted point gives the last boundary of a disk 
	in $\AA$ that contains $x$ and that is touched by growing a disk
	around $x$. Thus, to answer a query for a disk $D$,
	we query for the weighted point $a_i\in A$ closest to 
	the center of $D$,
	and then check whether $D$ intersects $D_i$.
	Updates and queries take $O(\log^{4} n)$ amortized expected time.

	Using Lemma~\ref{le:data_structure}, we conclude that
	$T_\text{sparse}(m)= O((m+\lambda^2)\log^{4}{m})$ expected time.
	Recall that $\lambda^2=O(\Psi^4)$.
	Because of Theorem~\ref{thm:sparsification},
	we conclude that the reduction takes
	time $O(\Psi^2(n+\Psi^4)\log^{4}n)= 
	O(\Psi^6\cdot n\log^{4} n)$ expected time.	
\end{proof}

Possibly, the method can be extended to homothets of
a single object. For this one should consider
the surfaces defined by weighted distances 
in the approach of Kaplan et al.~\cite{KaplanMRSS17}.

Since the depth and the density of a family of disks are 
linearly related,
Proposition~\ref{prop:sparsification_disks} 
and Theorem~\ref{thm:main1} with $\rho=O(\Psi^8)$ imply
the following.

\begin{theorem}
\label{thm:disks}
	Consider a family $\UU$ of $n$ disks in the plane
	with radii in the interval $[1,\Psi]$.
	In $O(\Psi^6 n \log^{4} n + \Psi^{12\omega}n^{\omega/2})$ 
	expected time, we can compute 
	a matching in $G_\UU$ that, with high probability,
	is maximum.
\end{theorem}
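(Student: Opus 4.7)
The plan is to prove Theorem~\ref{thm:disks} by directly combining the two main building blocks of this section with a bit of care about what parameter to feed into each of them. The overall strategy is a two-phase algorithm: first sparsify, then solve the matching problem on the sparsified graph using the low-density algorithm from Section~\ref{sec:low_density}.

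First, I would invoke Proposition~\ref{prop:sparsification_disks} on $\UU$. This produces, in $O(\Psi^{6} n \log^{11} n)$ expected time, a subfamily $\WW \subseteq \UU$ of maximum depth $O(\Psi^{8})$ such that a maximum matching in $G_\WW$ (together with trivial clique matchings inside the residual clusters, already taken care of inside the proof of Theorem~\ref{thm:sparsification}) yields a maximum matching in $G_\UU$. In particular it suffices to compute a maximum matching in $G_\WW$ within the claimed time.

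Second, I need to feed $\WW$ into Theorem~\ref{thm:main1}. Theorem~\ref{thm:main1} is parameterized by the density $\rho$, not the depth. Here I would use the fact, already recorded in Section~\ref{sec:basics} following Har-Peled and Quanrud, that for disks in the plane the depth and the density are linearly related: a depth bound of $O(\Psi^{8})$ translates into a density bound $\rho(\WW)=O(\Psi^{8})$. Since $\WW$ is a subfamily of disks with radii in $[1,\Psi]$, its geometric representation is immediately at hand. Applying Theorem~\ref{thm:main1} to $G_\WW$ with $|\WW|\le n$ and $\rho=O(\Psi^{8})$ yields a maximum matching in $G_\WW$ (with high probability) in time
\[
O\bigl(\rho^{3\omega/2}\,|\WW|^{\omega/2}\bigr) \;=\; O\bigl((\Psi^{8})^{3\omega/2}\,n^{\omega/2}\bigr) \;=\; O\bigl(\Psi^{12\omega}\,n^{\omega/2}\bigr).
\]
Adding the sparsification time gives the stated bound $O(\Psi^{6} n\log^{11} n + \Psi^{12\omega} n^{\omega/2})$, and the high-probability guarantee is inherited from Theorem~\ref{thm:main1} while the sparsification step is deterministic in its output correctness.

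There is essentially no hard step here, since all the work has been done in Proposition~\ref{prop:sparsification_disks} and Theorem~\ref{thm:main1}; the only thing to be careful about is the bookkeeping of the exponent, namely converting the depth bound $O(\Psi^{8})$ into a density bound for disks before plugging it into the $\rho^{3\omega/2}$ factor, and checking that the residual matching on the $G_{\UU'_p}$ cliques produced inside the proof of Theorem~\ref{thm:sparsification} is free (these are matchings on cliques with an even number of vertices, computed in linear total time). One minor point worth double-checking is that the extra $O(\Psi^{2})$ elements moved from each $\UU'_p$ to $\WW$ to fix parity do not spoil the depth bound of $\WW$; this is already noted in the proof of Theorem~\ref{thm:sparsification} as negligible, so the density of the augmented $\WW$ remains $O(\Psi^{8})$ and the final bound goes through unchanged.
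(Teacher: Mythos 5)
Your proposal is correct and follows essentially the same route as the paper: the paper's proof is exactly the combination of Proposition~\ref{prop:sparsification_disks} with Theorem~\ref{thm:main1}, using the linear relation between depth and density for disks to set $\rho=O(\Psi^8)$ and obtain the $\Psi^{12\omega}n^{\omega/2}$ term. Your additional remarks about the parity fix and the residual clique matchings are points the paper handles inside the proof of Theorem~\ref{thm:sparsification}, so nothing further is needed.
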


\subsection{Translates of a fixed convex shape in the plane}
Now, suppose $\UU$ consists of translates of 
a single convex object with non-empty interior
in the plane.
With an affine transformation, we ensure that
the object is \emph{fat}: the radii of the minimum enclosing disk 
and of the maximum
enclosed disk are within a constant factor. 
Such a transformation is standard;
e.g.,~\cite[Lemma 3.2]{ahv-aemp-04}.
Thus, we may assume that $\Psi=O(1)$.
We start with a standard lemma.

\begin{lemma}
\label{le:union_pierced}
	Let $\UU$ be a family of $n$ translates of a convex object
	in the plane that are pierced by a given point $q$.
	The union of $\UU$ can be computed in $O(n\log n)$
	time.
\end{lemma}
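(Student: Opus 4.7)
My plan is to reduce the computation of $\bigcup \UU$ to the computation of an upper envelope on the circle and then apply a standard divide-and-conquer scheme. Since every $U \in \UU$ is convex and contains $q$, the region $U$ is star-shaped with respect to $q$, so I can parameterize its boundary in polar coordinates centered at $q$ by a single-valued function $r_U \colon [0, 2\pi) \to \RR^+$: a point $x$ lies in $U$ iff the distance from $q$ to $x$ in direction $\theta$ is at most $r_U(\theta)$. The boundary of $\bigcup \UU$ is then the graph of the upper envelope $R(\theta) = \max_{U \in \UU} r_U(\theta)$, from which $\bigcup \UU$ can be read off directly. The geometric primitives assumed in Section~\ref{sec:basics} (intersecting two boundary curves, decomposing a curve into monotone pieces, etc.) are enough to compute each function $r_U$ in $O(1)$ time.

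The translates of a common convex body form a family of \emph{pseudo-disks}: any two of them have boundaries meeting in at most two points. In polar coordinates around the common piercing point $q$, this means that any two functions $r_U$ and $r_{U'}$ cross in at most two values of $\theta$. By the classical bound of Kedem, Livne, Pach and Sharir on the union complexity of pseudo-disks, the upper envelope $R$ has only $O(n)$ breakpoints. I compute $R$ by divide-and-conquer: partition $\UU$ into two halves $\UU_1, \UU_2$, recursively compute the envelopes $R_1, R_2$ of each (each being the boundary of the union $\bigcup \UU_i$, it also has linear complexity and is parameterized on all of $[0, 2\pi)$), and then merge by taking $R = \max(R_1, R_2)$. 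This yields the recurrence $T(n) = 2T(n/2) + M(n)$, where $M(n)$ is the merge time; the target $O(n \log n)$ follows once $M(n) = O(n)$.

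The main obstacle is therefore the linear-time merge of $R_1$ and $R_2$. I plan to perform it by a single angular sweep: walk simultaneously through the breakpoint lists of $R_1$ and $R_2$ sorted by $\theta$, maintaining at each moment the arc of $R_1$ and the arc of $R_2$ currently defining the envelope. At every event — a breakpoint of $R_1$, a breakpoint of $R_2$, or a crossing of the two current arcs — emit the higher of the two arcs on the just-processed interval and advance. Because the two current arcs come from two translates that share the common point $q$, the pseudo-disk property guarantees that within any $\theta$-interval between consecutive breakpoints of $R_1$ or $R_2$ they cross at most twice, so each event costs $O(1)$ amortized work. The total number of events is $O(|R_1| + |R_2|)$, giving $M(n) = O(n)$ and the claimed running time.
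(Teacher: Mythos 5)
Your proof is correct and follows essentially the same route as the paper: both convert the union into an upper-envelope computation of the polar distance functions around $q$, observe that the pseudo-disk property of translates makes any two such functions cross at most twice (so the envelope has linear complexity), and apply divide-and-conquer with a linear-time sweep-based merge. The only cosmetic differences are that the paper attributes the linear envelope bound to Davenport--Schinzel sequences rather than to the Kedem--Livne--Pach--Sharir union bound, and it briefly notes the technicality (which you may want to add) of $q$ lying on the boundary rather than in the interior of some objects.
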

\begin{proof}
	The boundary of two translates of the same convex object
	intersect at most twice. This means that $\UU$ is a 
	\emph{family of pseudodisks}. 
	Let $q$ be the given point that pierces all $U\in\UU$.
	We assume that $q$ belongs to the interior; otherwise it
	is necessary to make groups of objects and use $O(1)$ 
	points that intersect all the $U\in \UU$.

	Each $U \in \UU$ defines a function 
	$\delta_U\colon[0,2\pi]\rightarrow \RR$, where $\delta_U(\theta)$
	is the length of the longest segment inside $U$ with origin $q$ 
	and angle $\theta$ with some fixed axis.
	Since $q$ is in the interior of $U$, the function $\delta_U(\cdot)$
	is continuous. We can extend each function $\delta_U$ 
	to the whole $\RR$ by taking $\delta_U(\theta)=\delta(0)$,
	for $\theta\notin [0,2\pi]$. The family
	$\{ \delta_U\mid U\in \UU\}$ of totally defined functions
	is a family of \emph{pseudoparabolas}: the graphs of any two of them
	intersect at most twice.
	
	The upper envelope of a family of $n$ pseudoparabolas can
	be computed in $O(n \log n)$ time with a divide-and-conquer
	approach. First, we note that the upper envelope of $n$
	totally defined pseudoparabolas has at most $2n-1$ pieces.
	This is a standard property from the study of 
	Davenport-Schinzel sequences.
	For the algorithm, we split the family $\UU$ into two subfamilies 
	$\UU_1$ and $\UU_2$ of
	roughly the same size, recursively compute the upper envelopes $g_1$
	of $\UU_1$ and $g_2$ of $\UU_2$, 
	and then compute the upper envelope of $g_1$ and $g_2$.
	If the upper envelopes are given as $x$-monotone curves,
	then the upper envelope of $g_1$ and $g_2$, which is
	the upper envelope of $\UU$, is obtained in additional linear time.
	Since the merging step takes linear time, the whole algorithm
	takes $O(n \log n)$ time.
	
	The maps $\delta_U$ do not need to be computed explicitly and
	the whole algorithm can actually be carried out with a rotational
	sweep around $q$. The transformation to consider the functions
	$\delta_U$ helps to bring it to familiar ground in computational
	geometry.
\end{proof}

We will use the following lemma to ``simulate'' deletions.
For this, we will keep a half-infinite interval of indices that contains the
elements that are ``deleted''.

\begin{lemma}
\label{le:data_structure_translates_1}
	Let $\UU=\{ U_1,\dots U_n\}$ be a family of $n$ translates 
	of a convex object in the plane 
	that are pierced by a given point $q$.
	In $O(n\log^2 n)$ time, we can construct a data structure
	for the following queries:
	given $x \in \RR^2$ and a value $a \in \{1, \dots, n\}$, 
	find the smallest $i\ge a$ such
	that $U_i$ contains $x$, or correctly report
	that $x$ does not belong to $U_a\cup\dots\cup U_n$.
	The query time is $O(\log^2 n)$. 
\end{lemma}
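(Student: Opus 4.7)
\medskip\noindent\textbf{Proof plan.}
My plan is to build a balanced binary tree $T$ on the ordered index set $\{1,\dots,n\}$ whose leaves correspond to $U_1,\dots,U_n$. For each internal node $v$ with leaf range $I_v\subseteq\{1,\dots,n\}$, I would store, implicitly, a point-location structure for the union $R_v=\bigcup_{i\in I_v}U_i$. The crucial geometric observation is that since every $U_i$ is convex and contains the common piercing point $q$, the region $R_v$ is star-shaped with respect to $q$; its boundary is therefore described in polar coordinates around $q$ by the upper envelope of the pseudoparabolas $\delta_{U_i}$, $i\in I_v$, as in the proof of Lemma~\ref{le:union_pierced}. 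This envelope has $O(|I_v|)$ complexity and supports a point query $x\in\RR^2$ by computing the polar angle $\theta_x$ and radial distance $r_x=|x-q|$, then binary searching the angularly sorted pieces of the envelope for $\theta_x$ and comparing the envelope value to $r_x$. Each test runs in $O(\log n)$ time.

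The structure is built bottom up: a leaf stores the trivial single-object envelope, and every internal node merges the two envelopes of its children in time linear in $|I_v|$, exactly as in the merge step of Lemma~\ref{le:union_pierced}. Summing over the $O(\log n)$ levels of $T$, the total construction time is $O(n\log n)$, which is within the claimed $O(n\log^2 n)$ bound.

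To answer a query $(x,a)$, I would decompose the index range $[a,n]$ into the canonical set of $O(\log n)$ right-hanging subtrees along the root-to-leaf path to $a$, listed in order of increasing index. Iterating through this list, I test at each canonical node $v$, in $O(\log n)$ time, whether $x\in R_v$. As soon as a canonical node $v^\star$ passes the test, I descend into its subtree: at each step, I query the left child's envelope and, if it contains $x$, recurse left, otherwise recurse right. The descent terminates at the leaf corresponding to the smallest $i\ge a$ with $x\in U_i$. Locating $v^\star$ costs $O(\log n)$ tests of $O(\log n)$ each; the descent is $O(\log n)$ steps of $O(\log n)$ each; so the total query time is $O(\log^2 n)$. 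If no canonical node contains $x$, we correctly report that $x\notin U_a\cup\cdots\cup U_n$.

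The only step that is not routine is the point-location subroutine at an internal node, and the key insight that makes it cheap is exactly the star-shapedness of $R_v$ with respect to the common piercing point $q$: this turns two-dimensional point location in a union of $|I_v|$ pseudodisks into one-dimensional binary search on an upper envelope of pseudoparabolas, which is already provided for free by the merge recurrence of Lemma~\ref{le:union_pierced}. Everything else is a standard segment-tree-style decomposition for leftmost-satisfying-index queries.
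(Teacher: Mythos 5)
Your proposal is correct and follows essentially the same route as the paper's proof: a balanced tree over the index set with the star-shaped union $R_v$ (computed via the merge step of Lemma~\ref{le:union_pierced}) stored at each node, point location by radial binary search around $q$, and a query that scans the canonical decomposition of $[a,n]$ in index order before descending into the first subtree whose union contains $x$. The only cosmetic difference is that you state the $O(n\log n)$ construction bound explicitly, which is slightly sharper than the $O(n\log^2 n)$ claimed in the lemma.
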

\begin{proof}
    We follow the standard approach for adding
	range capabilities to data structures~\cite{WillardLu85}:
	we make a balanced binary search tree $T$ whose leaves
	are $1,\dots, n$, from left to right.
	For each node $\nu$ of $T$, we define $C(\nu)$ as
	the set of indices stored at the leaves of the subtree 
	rooted at $\nu$. The set $C(\nu)$ is a \emph{canonical
	subset} of $\{ 1,\dots, n\}$.
	
	For each node $\nu \in T$, we compute the region
	$R(\nu)=\bigcup_{i\in C(\nu)} U_i$.
	This can be done in $O(n \log n)$ time \emph{for all} 
	nodes $\nu$ of $T$. Indeed, the divide-and-conquer
	approach from the proof of Lemma~\ref{le:union_pierced}
	can be applied here.
	If a node $\nu$ has children $\nu_\ell$ and $\nu_r$,
	then $R(\nu)=\bigcup_{i\in C(\nu)} U_i$ can be computed
	in $O(|C(\nu)|)$ time
	from $R(\nu_\ell)=\bigcup_{i\in C(\nu_\ell)} U_i$ and 
	$R(\nu_r)=\bigcup_{i\in C(\nu_r)} U_i$.
	
	For each node $\nu$ of $T$,
	we preprocess the region $R(\nu)$
	for point location queries.
	This takes $O(|C(\nu)|)$ time, because
	we just need the description of the boundary of $R(\nu)$
	in a table.
	To decide whether a given point $x \in \RR^2$ lies in  
	$R(\nu)$, we make a binary search along
	the boundary of $R(\nu)$ for the arc of $R(\nu)$ that
	is intersected by the ray from $q$ through $x$. 
	This takes $O(\log n)$ time.
	This finalizes the preprocessing and the construction of
	the data structure.

	Consider a query consisting of a point $x\in \RR^2$
	and an index $a$. We may assume that $a \in \{ 1,\dots,n \}$.
	The set $\{a,a+1,\dots,n\}$ can be expressed as
	the disjoint union of canonical subsets $C(\nu_1),\dots, C(\nu_k)$,
	where $k=O(\log n)$, indexed so that
	each element of $\nu_t$ is smaller than each element of $\nu_{t+1}$,
	for $t = 1, \dots, k - 1$.
	Making point location queries in $R(\nu_1),R(\nu_2),\dots$
	we find the first index $j$ such that $x\in R(\nu_j)$. This
	takes $O(\log^2 n)$, as we make $O(\log n)$ point location queries.
	
	Then, we search the subtree of $T$
	rooted at $\nu_j$ for the leftmost leaf $i$ with $x\in \UU_i$.
	This is easy: if we are at some internal
	node $\nu$ with left child $\nu_\ell$ and right child $\nu_r$,
	we query the point location data structure at $\nu_\ell$ to 
	determine whether $x\in R(\nu_\ell)$. If $x\in R(\nu_\ell)$,
	we continue to $\nu_\ell$. Otherwise,
	$x$ must be in $R(\nu_r)$, as $x\in R(\nu)$,
	and we go to $\nu_r$.
	This search makes $O(\log n)$ queries
	to the point location structures, and thus takes $O(\log^2 n)$ time.	
\end{proof}

\begin{lemma}
\label{le:data_structure_translates_2}
	Let $\UU_q=\{ V_1,\dots V_n\}$ be a family of $n$ translates 
	of a convex object in the plane 
	that are pierced by a given point $q$. 
	Let $U_0$ be a convex object.
	In $O(n\log^2 n)$ time, we can construct a data structure
	for the following type of queries:
	given a translate $U$ of $U_0$ and a value $a$, 
	find the smallest $i\ge a$ such
	that $U$ intersects $V_i$, or correctly report
	that $U$ does not intersect $V_a\cup\dots\cup V_n$.
	Each query can be answered in $O(\log^2 n)$ time. 
\end{lemma}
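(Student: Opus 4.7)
The plan is to reduce this lemma to Lemma~\ref{le:data_structure_translates_1} via a Minkowski-sum transformation, so that the intersection test ``$U$ meets $V_i$'' becomes a point-in-region test. Write each $V_i = V + v_i$, where $V$ is the common convex shape that the $V_i$'s are translates of. Similarly, any query translate of $U_0$ is of the form $U = U_0 + u$ for some $u \in \RR^2$. The standard Minkowski-sum identity gives
\[
    (U_0 + u) \cap (V + v_i) \neq \emptyset \iff u \in \widetilde V_i,\qquad
    \widetilde V_i := V_i \oplus (-U_0),
\]
and each $\widetilde V_i$ is a translate of the fixed convex object $V \oplus (-U_0)$. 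Thus a query $(U,a)$ with $U = U_0 + u$ is equivalent to: find the smallest $i \geq a$ with $u \in \widetilde V_i$.

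First I would precompute the (combinatorial description of the) convex object $V \oplus (-U_0)$ in $O(1)$ time, and then obtain the family $\{\widetilde V_1, \ldots, \widetilde V_n\}$ in $O(n)$ time by storing only a translation vector for each $\widetilde V_i$. Next, I would verify that this new family has a common piercing point so that Lemma~\ref{le:data_structure_translates_1} applies: fix any $u_0 \in U_0$; since $q \in V_i$ for every $i$, we have $q - u_0 \in V_i + (-U_0) = \widetilde V_i$ for every $i$, so $q - u_0$ pierces the entire family $\{\widetilde V_i\}$.

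Having established that the family $\{\widetilde V_i\}$ satisfies the hypotheses of Lemma~\ref{le:data_structure_translates_1}, I would invoke that lemma on $\{\widetilde V_1, \ldots, \widetilde V_n\}$ with piercing point $q - u_0$. This builds in $O(n\log^2 n)$ time a data structure that, given a query point $u$ and an index $a$, returns the smallest $i \geq a$ with $u \in \widetilde V_i$ (or reports that none exists) in $O(\log^2 n)$ time. A query $(U,a)$ with $U = U_0 + u$ is answered by querying this structure with $(u,a)$ and returning the answer unchanged, since by the identity above the returned index is exactly the smallest $i \geq a$ with $U \cap V_i \neq \emptyset$.

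There is no real obstacle here beyond making the Minkowski-sum reduction explicit; the only point requiring a bit of care is choosing the piercing point for $\{\widetilde V_i\}$, which is handled by picking any fixed $u_0 \in U_0$ and noting that the piercing property of $q$ for $\{V_i\}$ transfers to $q - u_0$ for the Minkowski-summed family. The preprocessing and query bounds then come directly from Lemma~\ref{le:data_structure_translates_1}, with no additional overhead.
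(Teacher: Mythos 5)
Your proposal is correct and follows essentially the same route as the paper: reduce intersection of translates to point membership via the Minkowski sum $V_i \oplus (-U_0)$, observe that the transformed family has a common piercing point inherited from $q$, and invoke Lemma~\ref{le:data_structure_translates_1}. The only cosmetic difference is that the paper first translates $U_0$ to contain the origin so that $q$ itself pierces the summed sets, whereas you keep an arbitrary $u_0 \in U_0$ and use $q - u_0$; both are fine.
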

\begin{proof}
	Applying a translation, we may assume that $U_0$ 
	contains the origin.
	For each $V_i\in \UU_q$, let 
	$W_i=V_i\oplus U_0 = \{ v-u\mid v\in V_i,\, u\in U_0\}$
	be the Minkowski sum of $V_i$ and $-U_0$.
	For each translation $\tau$,
	we have that  $\tau(U_0)$ intersects $V_i$ if and only if 
	$\tau\in W_i$.
	
	All the sets $W_1,\dots,W_n$ contain $q$
	because the origin belongs to $U_0$.
	Thus, we can construct the data structure of 
	Lemma~\ref{le:data_structure_translates_1} for
	$\{W_1,\dots,W_n\}$. 
	For a query $U$ and $a$, we find the translation $\tau$
	such that $U=\tau(U_0)$, and then
	find the smallest $i\ge a$ such that $\tau\in W_i$,
	which also tells the smallest $i\ge a$ such
	that $U$ intersects $V_i$.
\end{proof}

Lemma~\ref{le:data_structure_translates_2} can be used
to make queries and simulate deletions.

\begin{proposition}
\label{prop:sparsification_translates}
	Consider a family $\UU$ of $n$ 
	translates of a convex object with non-empty interior
	in the plane.
	In $O(n\log^2 n)$ time,
	we can reduce the problem of finding
	a maximum matching in $G_\UU$ to the problem of finding
	a maximum matching in $G_\WW$ for some subfamily
	$\WW\subseteq \UU$ with maximum depth $O(1)$.
\end{proposition}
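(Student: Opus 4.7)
The plan is to apply Theorem~\ref{thm:sparsification} after first reducing to the case of a \emph{fat} convex object. Following the standard affine-fattening trick (see, e.g.,~\cite{ahv-aemp-04}), I would apply an invertible affine map so that the shape being translated is fat; since a nondegenerate affine transformation preserves every pairwise intersection (and hence $G_\UU$ and all its matchings), this is without loss of generality. After a further global scaling we may assume every $U\in\UU$ contains a unit square and is contained in a square of side length $\Psi=O(1)$. In particular, the maximum degree of the pattern graph $H$ is $\lambda=O(\Psi^2)=O(1)$, and Theorem~\ref{thm:sparsification} will produce a subfamily $\WW\subseteq\UU$ of depth $O(\Psi^8)=O(1)$, matching the depth claim.

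It remains to bound the function $T_{\rm spars}$ and invoke Theorem~\ref{thm:sparsification}. I would implement \textit{Sparsify-one-edge} via Lemma~\ref{le:data_structure}, i.e.\ by supplying a semi-dynamic data structure maintaining $\AA_q\subseteq\UU_q$ under deletions that, given $U\in\UU_p$, returns some $V\in\AA_q$ with $U\cap V\neq\emptyset$ or reports none. All objects of $\UU_q$ are pierced by the anchor point $q$, and $U$ is a translate of the same convex body, so Lemma~\ref{le:data_structure_translates_2} gives, for $m$ objects, a \emph{static} structure built in $T_{\rm con}(m)=O(m\log^2 m)$ time that answers, in $O(\log^2 m)$ time, the query ``smallest index $i\ge a$ whose object intersects $U$.''

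To simulate the deletions required by \textit{Sparsify-one-edge}, I would keep, alongside the static structure, a balanced BST holding the set $D$ of currently deleted indices. Since at most $2\lambda+1$ deletions from $\AA_q$ occur in the matching phase and at most $O(\lambda)$ more in the neighbor-collection phase, $|D|=O(\lambda)$ at all times and $D$ partitions $\{1,\dots,m\}$ into $O(\lambda)$ maximal runs of non-deleted indices. A query then scans these runs in order, issuing one find-smallest query per run with $a$ equal to the run's left endpoint and returning the first answer that lies inside the same run; each deletion is just an insertion into the BST. This gives $T_{\rm que}(m)=O(\lambda\log^2 m)$ and $T_{\rm del}(m)=O(\log m)$, so Lemma~\ref{le:data_structure} yields
\[
T_{\rm spars}(m)=O\bigl(m\log^2 m + m\lambda\log^2 m + \lambda^2\log m\bigr)=O(m\log^2 m),
\]
which is convex. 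Theorem~\ref{thm:sparsification} then bounds the total cost by $O(\Psi^2\cdot T_{\rm spars}(n))=O(n\log^2 n)$, as claimed.

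The main obstacle is the third paragraph: turning the static structure of Lemma~\ref{le:data_structure_translates_2} into one usable inside \textit{Sparsify-one-edge}, whose semantics demand genuine deletions. Rebuilding the internal union regions after every deletion would be too costly, but after the affine normalization $\lambda$ is constant, so only $O(1)$ deletions ever happen during a single invocation; this lets us implement them by the cheap outside mechanism of a BST of deleted indices combined with the find-smallest-after-$a$ primitive, instead of modifying the underlying segment tree of Minkowski-sum unions.
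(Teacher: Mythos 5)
Your proposal is correct and follows the paper's route almost exactly: affine normalization to make the shape fat so that $\Psi=O(1)$, then Lemma~\ref{le:data_structure} instantiated with the static ``smallest index $i\ge a$ intersecting $U$'' structure of Lemma~\ref{le:data_structure_translates_2}, and finally Theorem~\ref{thm:sparsification}. The only place you diverge is in how deletions are simulated on top of that static structure. The paper keeps a single monotonically advancing pointer $a$ (set to $i+1$ after each reported $V_i$), so that each query costs one probe and ``deletion'' is free; you instead keep a balanced BST of the $O(\lambda)$ deleted indices and issue one probe per maximal run of surviving indices, paying an extra factor $\lambda=O(1)$ per query. Your mechanism is slightly heavier but genuinely implements the semantics demanded by Lemma~\ref{le:data_structure} (return \emph{some} surviving $V$ intersecting $U$, or correctly report none), whereas the paper's monotone pointer only ever searches indices $\ge a$ and can therefore miss an unmatched object of smaller index; since Lemma~\ref{le:sparsification2} relies on $M(p,q)$ being maximal when it has fewer than $2\lambda+1$ edges, your variant is the more careful one. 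Either way the final bound $O(\Psi^2\cdot T_{\rm spars}(n))=O(n\log^2 n)$ and the depth bound $O(\Psi^8)=O(1)$ come out the same.
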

\begin{proof}
	As mentioned above, we may make an affine transformation, 
	so that, after the transformation, we have 
	$\Psi=O(1)$~\cite[Lemma 3.2]{ahv-aemp-04}.
	Consider an edge $pq$ of the pattern graph.
	We use the algorithm described in Lemma~\ref{le:data_structure},
	but with a slight modification.
	We order the objects of $\UU_q$ as $\{ V_1,\dots,V_m\}$
	and use the data structure of 
	Lemma~\ref{le:data_structure_translates_2} to store them.
	At the start we set $a=1$.
	Whenever we want to query $\AA_q$ with $U$, we query
	the data structure with $U$ and the current $a$.
	If the data structure returns $V_i$, 
	we set $a=i+1$ for future queries
	to the data structure. In this way, each time we query the
	data structure, we find a new element of $\UU_q$ that
	has not been reported before.
	Thus, we obtain the same running time as in
	Lemma~\ref{le:data_structure} with 
	$T_{\rm con}(m)=O(m\log m)$,
	$T_{\rm que}(m)=O(\log^2 m)$ and
	$T_{\rm del}(m)=O(1)$.
	Therefore $T_{\rm sparse}(m)=O(m \log^2 m)$,
	and the result follows from 
	Theorem~\ref{thm:sparsification}.
\end{proof}

Combining Proposition~\ref{prop:sparsification_translates}
and Theorem~\ref{thm:main1} we obtain the following.

\begin{theorem}
\label{thm:translates}
	Consider a family $\UU$ of translates of a convex object 
	with non-empty interior in the plane.
	In $O(n^{\omega/2})$ time we can find a matching in $G_\UU$
	that, with high probability, is maximum.
\end{theorem}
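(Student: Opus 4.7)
The plan is to chain together the sparsification result of Proposition~\ref{prop:sparsification_translates} with the low-density matching algorithm of Theorem~\ref{thm:main1}. First I would invoke Proposition~\ref{prop:sparsification_translates} on $\UU$, which in $O(n \log^2 n)$ time produces a subfamily $\WW \subseteq \UU$ of maximum depth $O(1)$ together with a guarantee that a maximum matching in $G_\WW$, augmented by trivial intra-cluster matchings within the residual cliques $G_{\UU'_p}$, yields a maximum matching for $G_\UU$.

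The next step is to observe that, for translates of a fixed convex object with non-empty interior, bounded depth implies bounded density. This uses the remark that follows Theorem~\ref{thm:sparsification_d}: after the affine transformation that makes the shape fat (with $\Psi = O(1)$), depth and density differ by only a constant factor. Hence $\WW$ has density $\rho = O(1)$, and a geometric representation (the translates themselves) is directly available.

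I would then apply Theorem~\ref{thm:main1} to $G_\WW$ with $\rho = O(1)$. This produces, in time $O(\rho^{3\omega/2} |\WW|^{\omega/2}) = O(n^{\omega/2})$ and with high probability, a maximum matching of $G_\WW$. Finally I would combine this matching with maximum matchings inside each clique $G_{\UU'_p}$ as prescribed by Theorem~\ref{thm:sparsification}; the latter are trivial since each $\UU'_p$ induces a clique on an even number of vertices.

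The overall running time is dominated by the matching phase, since $O(n\log^2 n) = o(n^{\omega/2})$ whenever $\omega > 2$, giving the claimed $O(n^{\omega/2})$ bound. There is no real obstacle here: the sparsification already does all the heavy geometric work (pseudodisk union, Minkowski-sum based point location), and the only step requiring care is confirming that the depth-to-density conversion applies in this setting, which follows from the fatness guarantee obtained via the standard affine normalization.
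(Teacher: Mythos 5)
Your proposal is correct and follows exactly the paper's own route: the paper proves Theorem~\ref{thm:translates} by combining Proposition~\ref{prop:sparsification_translates} with Theorem~\ref{thm:main1}, using the fact that for fat translates bounded depth gives bounded density $\rho=O(1)$. Your additional care about the depth-to-density conversion and the parity handling of the residual cliques matches the reasoning already present in Theorem~\ref{thm:sparsification} and the surrounding discussion, so there is nothing to add.
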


If $\UU$ consists of unit disks, the sparsification can be done
slightly faster using a semi-dynamic data structure by 
Efrat, Itai, and Katz~\cite{EfratIK01},
which has 
$T_{\rm con}(m)=O(m\log m)$, 
$T_{\rm del}(m)=O(\log m)$ and $T_{\rm que}(m) = O(\log m)$.
However the current bottleneck is the computation
of the maximum matching \emph{after} the sparsification.
Thus, improving the sparsification in the particular
case of unit disks does not lead to an improved final algorithm.

Proposition~\ref{prop:sparsification_translates}
and Theorem~\ref{thm:translates}
also holds if we have translations of $O(1)$ different 
convex objects (with nonempty interiors). Indeed, the data structure of 
Lemma~\ref{le:data_structure_translates_2} can be
made for each pair of different convex shapes. 
In this case, the constant $\Psi$ depends on the shapes,
namely the size of the largest square that
we can place inside each of the convex shapes 
and the size of the smallest square that can be used
to cover each of the convex shapes. Also, the relation
between the depth and the density depends on the shapes. 
However, for a fixed set of $O(1)$ shapes, 
both values are constants that depend
on the shapes.

\begin{theorem}
\label{thm:translates2}
        Suppose we are given a set $\mathcal{A}$ of $O(1)$ different convex
	objects in the plane with non-empty interiors. 
	Let 
	$\UU$  be a family that contains $n$ translates of objects from 
	$\mathcal{A}$.
	Then, we can find in $O(n^{\omega/2})$ time a matching in $G_\UU$
	that is maximum with high probability. Here, 
	the constant in the $O$-notation
	depends on $\mathcal{A}$.
\end{theorem}

\subsection{Axis-parallel objects}
A \DEF{box} is the Cartesian product of intervals.
Combining standard data structures for orthogonal range 
searching~\cite[Sections 5.4 and 10.3]{BergCKO08}
one obtains the following results.

\begin{proposition}
\label{prop:axis_parallel}
	Let $d\ge 2$ be an integral constant.
	Consider a family $\UU$ of $n$ boxes in $\RR^d$
	such that each box of $\UU$ contains a cube
	of side length $1$ and is contained in a cube 
	of side length $\Psi$.
	In $O(\Psi^d \cdot n \log^{O(d)} n)$ time we can reduce 
	the problem of finding
	a maximum matching in $G_\UU$ to the problem of finding
	a maximum matching in $G_\WW$, for some $\WW\subseteq \UU$
	with maximum depth $(1+\Psi)^{O(d)}$.
\end{proposition}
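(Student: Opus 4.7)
The plan is to obtain Proposition~\ref{prop:axis_parallel} by invoking the $d$-dimensional sparsification framework of Theorem~\ref{thm:sparsification_d}, leaving only the task of implementing Algorithm~\textit{Sparsify-one-edge} efficiently for axis-parallel boxes. By Lemma~\ref{le:data_structure}, this reduces to exhibiting a semi-dynamic data structure that maintains a subfamily $\AA\subseteq\UU_q$ of axis-parallel boxes in $\RR^d$ under deletions, and answers the following query: given a box $U\in\UU_p$, return some $V\in\AA$ with $U\cap V\neq\emptyset$, or correctly report that no such $V$ exists.

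First, I would rephrase the intersection test as a range-searching problem. Two axis-parallel boxes $U=\prod_{i=1}^d[a_i,b_i]$ and $V=\prod_{i=1}^d[c_i,d_i]$ intersect if and only if $a_i\le d_i$ and $c_i\le b_i$ for every coordinate $i$. Representing each box $V\in\AA$ by the point $(c_1,d_1,\dots,c_d,d_d)\in\RR^{2d}$, the query for a fixed $U$ becomes an orthogonal range query: find any point inside the axis-parallel region determined by $2d$ one-sided inequalities. Standard multi-level range trees (see, e.g., \cite[Sections~5.4 and~10.3]{BergCKO08}), augmented at the innermost level with a balanced search tree supporting deletions, give a data structure with preprocessing time $T_{\rm con}(m)=O(m\polylog m)$, query time $T_{\rm que}(m)=O(\polylog m)$, and amortized deletion time $T_{\rm del}(m)=O(\polylog m)$, where the hidden polylogarithmic factors depend only on $d$. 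For the query, one descends into canonical subsets and returns any point of a nonempty one, rather than enumerating all reported points.

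Plugging these bounds into Lemma~\ref{le:data_structure}, and using (as in its proof) that we may assume $\lambda^2\le m$, yields
\[
    T_{\rm spars}(m)
    ~=~ O\bigl(T_{\rm con}(m)+m\,T_{\rm que}(m)+\lambda^2\,T_{\rm del}(m)\bigr)
    ~=~ O(m\polylog m),
\]
which is convex and $\Omega(m)$, as required by Theorem~\ref{thm:sparsification_d}. The theorem then delivers the reduction in total time $O(\Psi^d\cdot T_{\rm spars}(n))=O(\Psi^d\cdot n\polylog n)$, and produces a subfamily $\WW\subseteq\UU$ whose intersection graph has maximum depth $(1+\Psi)^{O(d)}$, exactly as claimed. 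The main obstacle is the deletion-capable range tree: while the query side is essentially textbook, one must verify that deletions at the innermost level can be supported in amortized polylogarithmic time without breaking the multi-level decomposition. This is standard (for example via weight-balanced trees with partial rebuilding, or amortized dynamic range trees), so no new data structure is needed, only careful assembly of known pieces.
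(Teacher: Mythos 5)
Your proposal is correct and follows essentially the same route as the paper: both reduce the task to a deletion-only box-intersection query structure and plug it into Lemma~\ref{le:data_structure} and Theorem~\ref{thm:sparsification_d}. The only difference is in the choice of standard machinery: the paper cites the Edelsbrunner--Maurer segment/range-tree structure for box intersection directly (noting, as you do, that deletion-only updates are easy), while you derive an equivalent structure by mapping each box to a point in $\RR^{2d}$ and using a multi-level range tree with one-sided queries; both yield $T_{\rm spars}(m)=O(m\polylog m)$ and hence the claimed bound.
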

\begin{proof}
Edelsbrunner and Maurer~\cite{EdelsbrunnerM81}
show a general approach to provide a data structure
to dynamically maintain a set of boxes and handle
the following queries: given a box $b$, 
report all the boxes in the data structure 
that intersect $b$.
The construction time is $O(n \log^d n)$, each
update (deletion/insertion) takes $O(\log^d)$ time,
and each query takes $O(k+\log^d n)$,
where $k$ is the size of the output.
The data structure is a combination of segment 
and range trees. Such a data structure can easily be
modified to report a single element intersecting
the query box $b$ in $O(\log^d n)$ time. In fact,
better results can be obtained with more advanced
techniques, but we feel that discussing them is not
relevant here. (Also, we only need deletions,
which makes it simpler, as in the relevant trees 
we can just mark some vertices as deleted.)
Using Lemma~\ref{le:data_structure} and
Theorem~\ref{thm:sparsification_d}, we obtain the result.
\end{proof}

For $d=2$, we can combine 
Theorem~\ref{thm:main1} and 
Proposition~\ref{prop:axis_parallel}.
Since we have assumed $\omega>2$,
the $O(n\log^{O(d)} n)$ term is asymptotically smaller
than $O(n^{\omega/2})$, and we obtain the following.

\begin{theorem}
\label{thm:axis-parallel}
	Given a family $\UU$ of $n$ boxes in $\RR^2$
	such that each object of $\UU$ contains a square
	of side length $1$ and is contained in a square 
	of side length $\Psi$, we can compute 
	in $(1+\Psi)^{O(1)} n^{\omega/2}$ time
	a matching in $G_\UU$ that, with high probability,
	is a maximum matching.
\end{theorem}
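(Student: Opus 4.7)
The plan is to compose the sparsification of Proposition~\ref{prop:axis_parallel} with the low-density matching algorithm of Theorem~\ref{thm:main1}. First, I apply Proposition~\ref{prop:axis_parallel} with $d=2$ to $\UU$. This reduces, in $O(\Psi^2 n \polylog n)$ time, the problem of finding a maximum matching in $G_\UU$ to the problem of finding a maximum matching in $G_\WW$ for some subfamily $\WW \subseteq \UU$ of maximum depth $(1+\Psi)^{O(1)}$.

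Next, I need to pass from a depth bound on $\WW$ to a density bound, so that Theorem~\ref{thm:main1} is applicable. Since each box in $\UU$ (and hence in $\WW$) contains a unit square and is contained in a square of side length $\Psi$, each box is $\Omega(1/\Psi)$-fat. By the relation between depth and density for fat convex objects (discussed in the remarks following Theorem~\ref{thm:sparsification_d}), density and depth differ only by a factor polynomial in $\Psi$. Hence $\WW$ has density $\rho = (1+\Psi)^{O(1)}$, and a geometric representation of $\WW$ is immediately available as a subfamily of $\UU$.

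Applying Theorem~\ref{thm:main1} to $\WW$ with this density bound yields, with high probability, a maximum matching in $G_\WW$ (and, by the reduction of the first step, in $G_\UU$) in time
\[
    O\bigl(\rho^{3\omega/2}\,|\WW|^{\omega/2}\bigr) \;\le\; (1+\Psi)^{O(1)}\, n^{\omega/2}.
\]
Because $\omega>2$, the sparsification cost $O(\Psi^2 n \polylog n)$ is dominated by this matching cost, so the overall running time is $(1+\Psi)^{O(1)} n^{\omega/2}$, as claimed. The only mildly delicate point is the depth-to-density conversion, which is not an obstacle but merely an invocation of the fatness of boxes with side lengths in $[1,\Psi]$; everything else is a direct black-box composition of previously established results.
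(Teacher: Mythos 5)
Your proposal is correct and follows essentially the same route as the paper: sparsify via Proposition~\ref{prop:axis_parallel} with $d=2$, convert the depth bound to a density bound using the fatness of boxes with side lengths in $[1,\Psi]$, and then run the low-density algorithm of Theorem~\ref{thm:main1}, with the sparsification cost absorbed because $\omega>2$. You even make explicit the depth-to-density conversion that the paper leaves implicit in this instance (it spells it out only in the remarks after Theorem~\ref{thm:sparsification_d} and in the disk case).
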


Consider now the case $d\geq 3$.
The set $\WW$ that we 
obtain from Proposition~\ref{prop:axis_parallel}
has depth and density $\rho=(1+\Psi)^{O(d)}$,
and therefore the graph $G_\WW$ has $O(\rho n)$ edges;
see Lemma~\ref{lem:edges}.
We can thus use the algorithm of Micali and 
Vazirani~\cite{MicaliV80,Vazirani12},
which takes 
$O(\sqrt{n}|E(G_\WW)|) = (1+\Psi)^{O(d)} n^{3/2}$ time.
We summarize.\footnote{In a previous version,
we invoked M\k{a}dry's algorithm and 
claimed a better running time with exponent
$10/7$. However, this algorithm does not seem to
apply here, so we fall back on the Micali-Vazirani
algorithm to obtain an exponent of $3/2$.}

\begin{corollary}
\label{cor:axis-parallel}
	Let $d\ge 3$ be an integral constant. 
	Given a family $\UU$ of $n$ boxes in $\RR^d$
	such that each object of $\UU$ contains a cube
	of side length $1$ and is contained in a cube 
	of side length $\Psi$, we can compute 
	in $(1+\Psi)^{O(d)} n^{3/2}$ time
	a maximum matching in $G_\UU$.
\end{corollary}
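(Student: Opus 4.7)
The plan is to compose three results already developed in the paper: the sparsification for axis-parallel boxes, the edge bound for bounded-density intersection graphs, and a classical black-box matching algorithm suitable for sparse (non-bipartite) graphs.

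First I would invoke Proposition~\ref{prop:axis_parallel} for the given family $\UU$. In time $(1+\Psi)^{O(d)}\cdot n\polylog n$ this yields a subfamily $\WW \subseteq \UU$ such that finding a maximum matching in $G_\UU$ reduces to finding a maximum matching in $G_\WW$, and the depth of $\WW$ is $(1+\Psi)^{O(d)}$. Since $\WW$ consists of axis-parallel boxes, each containing a unit cube and contained in a cube of side $\Psi$, the depth and the density of $\WW$ are linearly related (up to a factor depending on $\Psi$ and $d$ only), so the density $\rho$ of $\WW$ is also $(1+\Psi)^{O(d)}$.

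Next, I would apply Lemma~\ref{lem:edges} to bound $|E(G_\WW)| \le (\rho - 1)|\WW| = (1+\Psi)^{O(d)}\,n$. Note that Lemma~\ref{lem:edges} only gives the explicit construction of $G_\WW$ in planar time, but a straightforward higher-dimensional construction (e.g., plugging a range-reporting data structure for boxes, as in the proof of Proposition~\ref{prop:axis_parallel}, or iterating over each object and querying for its intersecting boxes) produces $G_\WW$ in $(1+\Psi)^{O(d)}\,n\polylog n$ time, which is dominated by the final bound.

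Finally, with the sparse graph $G_\WW$ in hand, I would run the Micali--Vazirani maximum matching algorithm~\cite{MicaliV80,Vazirani12}, which finds a maximum matching in an arbitrary graph in $O(\sqrt{|V|}\,|E|)$ time. This costs $O(\sqrt{n}\cdot |E(G_\WW)|) = (1+\Psi)^{O(d)}\,n^{3/2}$, which dominates the sparsification and graph-construction phases. Lifting the matching back through the sparsification reduction yields a maximum matching in $G_\UU$ within the claimed time bound. The only subtle point here is to confirm that Micali--Vazirani (and not a bipartite-only algorithm) is applicable, since $G_\UU$ is a general intersection graph of boxes and we make no bipartiteness assumption; this is precisely the reason we use the $\sqrt{n}\,|E|$ bound rather than the faster bipartite algorithm of M\k{a}dry.
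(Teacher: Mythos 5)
Your proposal is correct and matches the paper's own argument: apply Proposition~\ref{prop:axis_parallel} to reduce to a subfamily $\WW$ of depth (hence density) $(1+\Psi)^{O(d)}$, bound $|E(G_\WW)|$ via Lemma~\ref{lem:edges}, and run Micali--Vazirani in $O(\sqrt{n}\,|E(G_\WW)|)$ time. Your extra remark on explicitly constructing $G_\WW$ in dimension $d\ge 3$ via the box range-searching structure is a reasonable filling-in of a detail the paper leaves implicit.
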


\subsection{Congruent balls in \texorpdfstring{$d\ge 3$}{three or more} 
dimensions}
Consider now the case of congruent balls in $\RR^d$, for constant $d\ge 3$.
Note that $\lambda=O(1)$ in this case.
We use the dynamic data structure by Agarwal and 
Matou{\v s}ek~\cite{AgarwalM93} 
for the sparsification. 
For each $m$ with $n\le m\le n^{\lceil  d/2\rceil}$, the data structure
maintains $n$ points in $\RR^d$, answers $O(n)$ queries for closest
point and supports $O(\lambda^2)$ updates in
\[
	O\left(m^{1+\varepsilon}+ \lambda^2 \frac{m^{1+\varepsilon}}{n} + 
			n\cdot \frac{n \log^3 n}{ m^{1/ \lceil d/2\rceil}}\right)
\] 
time. Here $\varepsilon > 0$, is an arbitrary constant
whose choice affects to the constants hidden in the $O$-notation.
For $d \in \{3, 4\}$, this running time is 
\[
	O\left(m^{1+\varepsilon}+ \lambda^2 \frac{m^{1+\varepsilon}}{n} + 
			n\cdot \frac{n \log^3 n}{ m^{1/2}}\right).
\]
Setting $m=n^{4/3}$, we get a running time of
$O(n^{4/3+\varepsilon} + \lambda^2 n^{1/3+\varepsilon})=O(n^{4/3+\varepsilon})$
to handle $O(n)$ queries and $O(\lambda^2)=O(1)$ updates.
Using this in Lemma~\ref{le:data_structure}
and Theorem~\ref{thm:sparsification_d},
we get the following result
 
\begin{proposition}
\label{prop:sparsification_balls}
	Consider a family $\UU$ of $n$ unit balls objects in $\RR^d$,
	for $d \in \{3, 4\}$.
	In $O(n^{4/3+\varepsilon})$ time, we can reduce 
	the problem of finding
	a maximum matching in $G_\UU$ to the problem of finding
	a maximum matching in $G_\WW$ for some $\WW\subseteq \UU$
	with maximum depth $O(1)$.
\end{proposition}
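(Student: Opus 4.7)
The plan is to instantiate Theorem~\ref{thm:sparsification_d} for unit balls in $\RR^d$ with $d\in\{3,4\}$. Since every unit ball both contains and is contained in a cube of constant side length, $\Psi=O(1)$, and consequently the pattern graph has maximum degree $\lambda=O(1)$. It therefore suffices to implement Algorithm~\emph{Sparsify-one-edge} efficiently through Lemma~\ref{le:data_structure}; the theorem will then finish the argument.

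First I would reduce the intersection data structure required by Lemma~\ref{le:data_structure} to a dynamic Euclidean nearest-neighbour problem on the centres: two unit balls intersect if and only if their centres are at distance at most $2$, so the queries and deletions demanded by the lemma are served by maintaining the centres of $\AA_q$ under deletions, retrieving the centre of $\AA_q$ closest to the centre of the query ball $U\in \UU_p$, and comparing that distance with $2$. For this dynamic nearest-neighbour problem I would invoke the data structure of Agarwal and Matou{\v s}ek with storage parameter set to $m=n^{4/3}$, which lies in the admissible range $n\le m\le n^{\lceil d/2\rceil}=n^2$ when $d\in\{3,4\}$.

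With $\lambda=O(1)$ and $m=n^{4/3}$, a direct substitution makes each of the three terms in the Agarwal--Matou{\v s}ek bound at most $O(n^{4/3+\varepsilon})$: the first is $n^{4/3+\varepsilon}$ by definition, the second is dominated by the first since $\lambda^2/n=O(1/n)$, and the third becomes $n^2\log^3 n / n^{2/3}=n^{4/3}\log^3 n$, absorbed into $n^{\varepsilon}$. This budget covers the $O(n)$ queries and $O(\lambda^2)=O(1)$ deletions that \emph{Sparsify-one-edge} performs on a cluster pair of total size $n$, so plugging the bounds into Lemma~\ref{le:data_structure} gives $T_{\rm sparse}(n)=O(n^{4/3+\varepsilon})$, which is convex and $\Omega(n)$ as Theorem~\ref{thm:sparsification_d} requires. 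Since $\Psi^d=O(1)$, that theorem then yields a reduction running in $O(\Psi^d\cdot T_{\rm spars}(n))=O(n^{4/3+\varepsilon})$ time and producing a subfamily $\WW\subseteq \UU$ of depth $(1+\Psi)^{O(d)}=O(1)$.

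The only mildly delicate step I expect is that the second phase of \emph{Sparsify-one-edge} (lines~19--21 of Figure~\ref{fig:sparsify_one_edge}) needs to \emph{undo} its $O(\lambda^2)=O(1)$ deletions rather than rebuild the structure from scratch. This is handled by maintaining a stack of removed centres and reinserting them after each neighbour-collection session; both insertions and deletions are supported by the Agarwal--Matou{\v s}ek structure within the same amortized bound, so the analysis above goes through unchanged.
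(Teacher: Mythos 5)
Your proposal is correct and follows essentially the same route as the paper: the Agarwal--Matou\v{s}ek dynamic structure with storage parameter $m=n^{4/3}$, $\lambda=O(1)$, and $\lceil d/2\rceil=2$, fed into Lemma~\ref{le:data_structure} and Theorem~\ref{thm:sparsification_d}. The only additions are your explicit remark that unit-ball intersection reduces to a distance-$2$ test on centres and the undo-by-reinsertion bookkeeping, both of which are implicit in (or already handled by) the paper's Lemma~\ref{le:data_structure}.
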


For the resulting set $\WW$ with depth $O(1)$, it is better
to use the algorithm of Micali and Vazirani~\cite{MicaliV80,Vazirani12}.
Note that $G_\WW$ is sparse, and thus has $O(n)$ edges.
Therefore, a maximum matching in $G_\WW$ can be computed in $O(n^{3/2})$ time.
In summary, we spend $O(n^{4/3+\varepsilon})$ for the sparsification
and $O(n^{3/2})$ for computing the matching in the sparsified 
setting.\footnote{Also here, we previously claimed a better 
exponent of $10/7$, which was based on an incorrect
invocation of M\k{a}dry's algorithm.}

For $d > 4$, we set 
$m=n^{\frac{2 \lceil d/2\rceil}{1+ \lceil d/2\rceil}}$.
The running time for the sparsification is then
$O(n^{\frac{2 \lceil d/2\rceil}{1+ \lceil d/2\rceil}+\varepsilon})$.
For each constant $d$, the resulting instance $G_\WW$ has $O(n)$
edges. For $d=5,6$, the running time of the sparsification is 
$O(n^{3/2+\varepsilon})$. However, after the sparsification,
we have a graph with $O(n)$ edges, and we can use the
algorithm of Micali and Vazirani~\cite{MicaliV80,Vazirani12}, 
which takes $O(n^{3/2})$ time.
Thus, for $d\ge 5$, the running time
is dominated by the sparsification.

\begin{theorem}
\label{thm:matching_balls}
	Let $d\ge 3$ be a constant.
	Consider a family $\UU$ of congruent balls in $\RR^d$.
	For $d=3,4$, we can find in $O(n^{3/2})$ time a 
	maximum matching in $G_\UU$.
	For $d\ge 5$, we can find in 
	$O(n^{\frac{2 \lceil d/2\rceil}{1+ \lceil d/2\rceil}+\varepsilon})$
	time a maximum matching in $G_\UU$, for each $\varepsilon>0$.
\end{theorem}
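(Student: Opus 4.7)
The plan is to reduce this to the low-depth setting via the sparsification framework of Theorem~\ref{thm:sparsification_d} and then handle the sparsified graph with the classical Micali--Vazirani augmenting-path algorithm. Concretely, a unit ball in $\RR^d$ is inscribed in a cube of side $2$ and contains a cube of side $2/\sqrt{d}$, so after the standard rescaling we may apply Theorem~\ref{thm:sparsification_d} with $\Psi=O(1)$ and pattern-graph degree $\lambda=O(1)$. The output is a subfamily $\WW\subseteq\UU$ of depth $O(1)$; by Lemma~\ref{lem:edges} the graph $G_\WW$ has only $O(n)$ edges, so a maximum matching in $G_\WW$ is computable in $O(\sqrt{n}\,|E(G_\WW)|)=O(n^{3/2})$ time. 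Everything then reduces to implementing \emph{Sparsify-one-edge} (Figure~\ref{fig:sparsify_one_edge}) efficiently for unit balls and invoking Lemma~\ref{le:data_structure}.

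What Lemma~\ref{le:data_structure} requires is a semi-dynamic structure on at most $n$ unit balls supporting deletions together with intersection queries: given a query unit ball $U$, return some stored $V$ with $U\cap V\neq\emptyset$ or report that none exists. As in Proposition~\ref{prop:sparsification_disks}, ``does a unit ball centered at $x$ intersect some stored ball?'' is equivalent to a Euclidean nearest-neighbor query on the centers, which is precisely the problem handled by the dynamic data structure of Agarwal and Matou\v{s}ek~\cite{AgarwalM93}. Their structure comes with a trade-off parameter $m$ with $n\le m\le n^{\lceil d/2\rceil}$, yielding, per instance of \emph{Sparsify-one-edge}, a total cost of
\[
    O\!\left(m^{1+\varepsilon}+\lambda^2\frac{m^{1+\varepsilon}}{n}+n\cdot\frac{n\log^3 n}{m^{1/\lceil d/2\rceil}}\right).
\]
Since $\lambda^2=O(1)$ and the clusters summed over the pattern graph contribute a total of $O(n)$ objects, the overall sparsification time has the same form.

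The main (and only) quantitative choice is to balance the two nontrivial terms by optimizing $m$. Setting $m=n^{2\lceil d/2\rceil/(1+\lceil d/2\rceil)}$ equalizes $m^{1+\varepsilon}$ with $n^2/m^{1/\lceil d/2\rceil}$ up to the $n^\varepsilon$ slack and gives sparsification time
\[
    O\!\left(n^{\frac{2\lceil d/2\rceil}{1+\lceil d/2\rceil}+\varepsilon}\right).
\]
For $d\in\{3,4\}$ we have $\lceil d/2\rceil=2$ and the exponent is $4/3$, which is dominated by the subsequent $O(n^{3/2})$ Micali--Vazirani step, yielding the claimed $O(n^{3/2})$ bound. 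For $d\ge 5$ the sparsification exponent is at least $3/2$ and therefore dominates, giving the second bound of the theorem.

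The one point needing a bit of care is checking that the Agarwal--Matou\v{s}ek structure, which requires the size parameter $m$ to be fixed at construction, plugs cleanly into Lemma~\ref{le:data_structure}: since each invocation of \emph{Sparsify-one-edge} on clusters $\UU_p,\UU_q$ involves at most $n$ centers and only $O(\lambda^2)=O(1)$ actual deletions (plus the reversal used in the neighbor-collecting phase, which can be simulated by undoing the deletions), the amortized bounds quoted above apply without modification. Apart from this bookkeeping and the elementary reduction from unit-ball intersection to nearest-neighbor search on centers, the argument is a direct composition of Theorem~\ref{thm:sparsification_d}, Lemma~\ref{le:data_structure}, and Micali--Vazirani, so I do not anticipate any deeper obstacle.
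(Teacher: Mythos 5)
Your proposal is correct and follows essentially the same route as the paper: sparsify via Theorem~\ref{thm:sparsification_d} using the Agarwal--Matou\v{s}ek dynamic nearest-neighbor structure on the ball centers with the trade-off parameter $m$ chosen as $n^{2\lceil d/2\rceil/(1+\lceil d/2\rceil)}$ (i.e., $m=n^{4/3}$ for $d\in\{3,4\}$), then run Micali--Vazirani on the resulting $O(1)$-depth instance with $O(n)$ edges. The case analysis of which stage dominates matches the paper's as well.
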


\section{Conclusion}

We have proposed the density of
a geometric intersection graph as a parameter
for the maximum matching problem, and we showed 
that it can be fruitful in obtaining efficient
matching algorithms. Then, we presented a sparsification
method that lets us reduce the general problem to
the case of bounded density for several interesting classes
of geometric intersection graphs.
In our sparsification method, we did not attempt
to optimize the dependency on the radius ratio $\Psi$.
It may well be that this can be improved by using more
advanced grid-based techniques.
Furthermore, 
our sparsification needs the complete intersection graph
and does not apply to the bipartite setting. Here,
we do not know of a method to reduce the general case 
to bounded density.
In general, the complexity of the matching problem
is wide open. To the best of our knowledge, there are 
no (even weak) 
superlinear lower bounds for the (static)
matching problem in general graphs.


\bibliography{bibliodisks}

\end{document}